\newtheorem{lem}{Lemma}[section]
\newtheorem{cor}[lem]{Corollary}
\newtheorem{prop}[lem]{Proposition}
\newtheorem{remark}[lem]{Remark}
\newtheorem{thm}[lem]{Theorem}
\newtheorem{defn}[lem]{Definition}
\newtheorem{eg}[lem]{Example}
\newtheorem{assum}[lem]{Assumption}
\newtheorem*{prop*}{Proposition}
\newtheorem*{thm*}{Theorem}
\newtheorem*{def*}{Definition}
\newtheorem*{lem*}{Lemma}
\newcommand{\R}{\mathbb{R}}
\renewcommand{\P}{\mathbb{P}}
\newcommand{\I}{\mathbb{I}}
\newcommand{\E}{\mathbb{E}}
\newcommand{\F}{\mathcal{F}}
\newcommand{\N}{\mathbb{N}}
\renewenvironment{proof}[1][\proofname] {\par\pushQED{\qed}\normalfont\topsep6\p@\@plus6\p@\relax\trivlist\item[\hskip\labelsep\bfseries#1\@addpunct{.}]\ignorespaces}{\popQED\endtrivlist\@endpefalse}
\renewcommand{\( }{\left(}
\renewcommand{\)}{\right)}
 \DeclareMathOperator*{\diver}{div}
\begin{document}

\title{Robust Asymptotic Growth in Stochastic Portfolio Theory under Long-Only Constraints\footnote{Acknowledgments: We would like to thank Scott Robertson and Johannes Ruf for helpful discussions. The first author also acknowledges the support of the Natural Sciences and Engineering Research Council of Canada (NSERC).}}

\author{David Itkin\thanks{Department of Mathematical Sciences, Carnegie Mellon University, Wean Hall, 5000 Forbes Ave, Pittsburgh, Pennsylvania 15213, USA, \url{ditkin@andrew.cmu.edu}. Corresponding author.} \and Martin Larsson\thanks{Department of Mathematical Sciences, Carnegie Mellon University, Wean Hall, 5000 Forbes Ave, Pittsburgh, Pennsylvania 15213, USA, \url{martinl@andrew.cmu.edu}.}}
\maketitle

\begin{abstract}
We consider the problem of maximizing the asymptotic growth rate of an investor under drift uncertainty in the setting of stochastic portfolio theory (SPT). As in the work of Kardaras and Robertson \cite{kardaras2018ergodic} we take as inputs (i) a Markovian volatility matrix $c(x)$ and (ii) an invariant density $p(x)$ for the market weights, but we additionally impose long-only constraints on the investor. Our principal contribution is proving a uniqueness and existence result for the class of \textit{concave functionally generated portfolios} and developing a finite dimensional approximation, which can be used to numerically find the optimum. In addition to the general results outlined above, we propose the use of a broad class of models for the volatility matrix $c(x)$, which can be calibrated to data and, under which, we obtain explicit formulas of the optimal unconstrained portfolio for any invariant density. 
\end{abstract}

\bigskip

\noindent \textbf{Keywords}: Stochastic Portfolio Theory; Knightian Model Uncertainty; Robust Growth; Long-Only Constraints

\noindent \textbf{MSC 2010 Classification:} 49J30; 60G44; 60H05; 91G10  

\bigskip 
\noindent \textbf{Data Availability Statement:} Data sharing not applicable to this article as no datasets were generated or analysed during the current study.
\section{Introduction}
In this paper we tackle the problem of maximizing an investor's long-term growth in the setting of Stochastic Portfolio Theory (SPT) subject to model uncertainty and long-only constraints. SPT was introduced by Fernholz in \cite{fernholz1999diversity,fernholz2002stochastic} as a descriptive theory with the goal of explaining observable market phenomena. An important observation in SPT, and the main motivation of our paper, is that the ranked relative market capitalizations have remained remarkably stable over time across different US equity markets (see Chapter 5 in \cite{fernholz1999diversity}). Many probabilistic models have been developed to capture this phenomenon and, within the financial literature, can mostly be grouped into two types; (i) rank-based models such as the Atlas models developed in \cite{fernholz2002stochastic,banner2005atlas,ichiba2011hybrid} and (ii) volatility-stabilized models discussed in \cite{fernholz2005relative,banner2008short,cuchiero2019polynomial,pickova2014generalized}.

	 In this work, we allow for Knightian uncertainty regarding the dynamics of the relative market capitalizations and study a robust growth optimization problem in this setting. Portfolio optimization under model uncertainty has previously been studied in the SPT literature, such as in \cite{fernholz2011optimal} where the authors study optimal arbitrages and in \cite{cuchiero2019cover} where Cover's theorem on universal portfolios established in \cite{cover2011universal} is extended to the SPT framework. In \cite{kardaras2012robust}, the authors study a robust long-term growth maximization problem under model uncertainty in a very general setting, of which SPT is a special case. The authors of \cite{kardaras2012robust} solve for the robust optimal growth rate and trading strategy, which they find to be the principal eigenvalue and eigenfunction, respectively, of a certain operator. They carry out this analysis by fixing the volatility structure of the market weights, but allow for drift uncertainty. The paper \cite{bayraktar2013robust} studies a similar problem, but additionally allows for uncertainty with regards to the volatility specification.
	 
	  Recently, the authors of \cite{kardaras2012robust}, in \cite{kardaras2018ergodic}, studied an ergodic version of the problem from \cite{kardaras2012robust} where in addition to fixing the volatility structure they took as an input an invariant measure. In this paper, we borrow the setup from \cite{kardaras2018ergodic} by considering a financial market with $d$ market weights and we take as inputs two estimable quantities; (1) a Markovian instantaneous volatility matrix $c(x)$ for the relative market capitalizations and (2) an invariant density $p(x)$ capturing the aforementioned stability of the ranked market weights. The authors of \cite{kardaras2018ergodic} maximize an investor's asymptotic growth rate over an admissible class of probability measures; that is they study the quantity
$$ \lambda := \sup\limits_{V \in \mathcal{V}} \inf\limits_{\P \in \Pi} g(V;\P)$$
where $V \in \mathcal{V}$ is an investor's wealth process, $\P \in \Pi$ is an admissible probability measure and $g(V;\P)$ is the asymptotic growth rate. All of these objects are precisely defined in Section \ref{setup}. It was found in \cite{kardaras2018ergodic} that the optimal strategy is functionally generated in the sense of \cite{fernholz2002stochastic} and can be characterized as the solution of a variational problem. It turns out that in certain important examples, such as the case of volatility-stabilized markets that we consider in Section~\ref{a_motivating_example}, the optimal strategy requires heavy short-selling. As such, it may not be implementable by money mangers due to various institutional and risk considerations. 

The main contribution  of the paper is in Section \ref{long_only} where we solve an analogous robust asymptotic growth problem under \textit{long-only constraints}. A priori, one may consider several different long-only constrained robust optimization problems; in descending order of generality they are to admit (1) arbitrary long-only portfolios, (2) long-only portfolios in feedback form, (3) functionally generated long-only portfolios, and (4) portfolios generated by concave generating functions. In this paper we consider problem (4) and believe it to be the natural problem to consider in this setting, both due to well-posedness of the mathematical problem under the level of generality considered in this setup and due to practical considerations regarding the implementation of the optimal strategy. We defer the discussion of this subtle point until Section \ref{long_only_discussion}, where we provide a detailed explanation for this choice. 

Similarly to \cite{kardaras2018ergodic} we obtain an existence and uniqueness result and are able to characterize the optimal strategy via a constrained variational problem. The constrained variational problem is susceptible to a finite dimensional approximation and numerical implementation. We outline the procedure to exploit this approximation scheme in Section \ref{fin_dim} and provide numerical simulations of the constrained and unconstrained optimal strategies in Section \ref{numerical}.

A second contribution concerns the introduction of a large class of tractable models, under which we can obtain explicit solutions to the robust growth-optimization problem studied in \cite{kardaras2018ergodic} for any dimension $d$. We propose the use of a very general form for the volatility matrix $c(x)$ given by 
$$ c_{ij}(x) = \begin{cases}
- f_{ij}( x^{-ij})f_i(x^i)f_j(x^j)g(x) & i \ne j\\
\sum\limits_{k\ne i} f_{ik}( x^{-ik})f_i(x^i)f_k(x^k)g(x) & i = j
\end{cases} \quad 1 \leq i,j \leq d,$$
where $x^{-ij}$ is a $d-2$ dimensional vector obtained from $x$ by removing the $i^{\text{th}}$ and $j^{\text{th}}$ components. The mild conditions required for the functions $g,f_i$ and $f_{ij}$ are precisely stated in Assumption \ref{function_assumption}. This specification generalizes the volatility structure seen in the aforementioned volatility-stabilized models (which can be obtained here by letting $f_i$ be the identity function and $f_{ij} \equiv 1$ for every $i,j$) and allows us to obtain \textit{explicit formulas} for both the optimal portfolio and $\lambda$. We additionally believe that the properties of this volatility structure may be of interest to other researchers who are studying diffusion processes on the simplex.

The paper is organized as follows. Section~\ref{setup} introduces the problem and summarizes the results of \cite{kardaras2018ergodic}. In Section~\ref{a_motivating_example} we apply these results to the example of volatility-stabilized markets and observe that the optimal strategy requires heavy short-selling from the investor. Section~\ref{long_only_main_section} is dedicated to studying problem (4) mentioned above and contains our main results. In Section~\ref{rank_based_section} we connect the results of the previous sections to rank-based models. In Section~\ref{long_only_discussion} we discuss in detail the various differences between the problems (1)-(4). For the two dimensional case we are able to explicitly solve problem (2) in Section~\ref{2d_long_only}. Section~\ref{model} introduces the general specification of the instantaneous volatility matrix $c(x)$, from which we obtain explicit formulas for the optimal strategy and growth rate in the unconstrained case. Section~\ref{examples} contains several examples. In particular, in Section~\ref{gen_vol_stab} we consider generalized volatility-stabilized markets introduced in \cite{pickova2014generalized} and obtain explicit formulas for the invariant density and growth-optimal trading strategy in such models -- to the best our knowledge these quantities were previously unknown for this class of models. Section~\ref{applications} is devoted to applications.  In Section~\ref{fin_dim}, we develop a finite dimensional approximation to the constrained variational problem in Section~\ref{long_only} using novel results regarding exponentially concave functions, which we believe to be of independent interest. The performance of the various optimal strategies are then illustrated using numerical simulations in Section~\ref{numerical}. Finally, in Appendix~\ref{mart_prob_app} we rigorously examine Assumption~\ref{finite_assumption}, regarding the inputs $c$ and $p$, and develop sufficient conditions for it to hold, while Appendix~\ref{proof_appendix} contains the proof of Theorem~\ref{concave_invariance_thm}.

\section{Setup and Preliminaries} \label{setup}
 We consider a financial market in the context of SPT, where for $d \geq 2$ the \textit{market weights} $X = (X^1,\dots,X^d)^\top$ are the assets. Under the assumption that no market weights vanish they take values in the open simplex
 \begin{equation}\label{simplex}
 \Delta^{d-1}_+ := \left\{x \in \R^d: x^i > 0 \text{ for every } i \in\{1,\dots,d\}, \sum\limits_{i=1}^d x^i = 1\right\}.
 \end{equation} As such, we work on the path space $\Omega = C([0,\infty);\Delta^{d-1}_+)$ with Borel $\sigma$-algebra $\F$ induced by the topology of locally uniform convergence. We denote the coordinate process by $X$ and consider the filtration $\mathbb{F}$ as the right-continuous enlargement of the natural filtration generated by $X$.
  One way to interpret the market weights is to view them as the relative capitalizations of a collection of stocks; that is, if $S = (S^1,\dots,S^d)^\top$ represent the capitalizations of $d$ stocks, then $X^i = S^i/(S^1+\dots+S^d)$. Note that the coordinate process $X$ takes values in $\Delta^{d-1}_+$ by definition.

As defined in \eqref{simplex}, $\Delta^{d-1}_+$ is a relatively open $d-1$ dimensional subset of $\R^d$ and can be identified with an open set $E \subset \R^{d-1}$ via the transformation $$(x^1,\dots,x^{d-1}) \mapsto (x^1,\dots,x^{d-1},1-\sum_{i=1}^{d-1}x^i).$$
 Any function $\psi$ on $E$ can be associated with a function $\phi$ on  $\Delta^{d-1}_+$ via $\phi(x^1,\dots,x^d) := \psi(x^1,\dots,x^{d-1})$. Conversely, given a function $\phi$ on $\Delta^{d-1}_+$ one can define the associated function $\psi$ on $E$ via $\psi(x^1,\dots,x^{d-1}) := \phi(x^1,\dots,x^{d-1},1-\sum_{i=1}^{d-1}x^{i})$. The analysis in this paper requires us to compute derivatives of functions defined on the simplex and it is convenient to use this identification to make this precise. This motivates the following definition.

\begin{defn}
	Let $k \geq 1$, $\gamma \in (0,1]$ and $U \subseteq \R$  be given. We denote by $C^{k,\gamma}(\Delta^{d-1}_+;U)$ the set of all functions $\phi: \Delta^{d-1}_+ \to U$ such that the associated function $\psi:E \to U$ given by $\psi(x^1,\dots,x^{d-1}) := \phi(x^1,\dots,x^{d-1},1-\sum_{i=1}^{d-1}x^i)$ satisfies $\psi \in C^{k,\gamma}(E;U)$.
\end{defn}
 Next note that every function $\phi \in C^{1,\gamma}(\Delta^{d-1}_+; U)$ can be extended to a $C^{1,\gamma}$ function on an open set in $\mathbb R^d$ containing $\Delta^{d-1}_+$. The chain rule then yields
$$
\partial_i \psi(x^1,\ldots,x^{d-1}) = \partial_i \phi(x) - \partial_d \phi(x), \quad x \in \Delta^{d-1}_+, \quad i=1,\ldots,d-1.
$$
This can be viewed as differentiability on $\Delta^{d-1}_+$ regarded as a differentiable manifold, because the vector fields $\partial_i - \partial_d$, $i=1,\dots,d-1$ span the tangent space at each point of the simplex. As such, the above formula does not depend on the chosen extension of $\phi$ and it is easy to verify that all expressions in the paper involving derivatives of functions on the simplex can be unambiguously computed using an arbitrary such extension. Moreover, all integrals over $\Delta^{d-1}_+$ in the sequel should be understood  with respect to the pushforward of the Lebesgue measure on $\R^{d-1}$ under the map  $(x^1,\dots,x^{d-1}) \mapsto (x^1,\dots,x^{d-1},1-\sum_{i=1}^{d-1}x^i)$.

   Following the setup of \cite{kardaras2018ergodic} we 
  consider a family of probability measures, under which $X$ is a continuous semimartingale. We allow for drift uncertainty and as inputs take an instantaneous volatility matrix $c$ and invariant density $p$ which satisfy the following standing assumption:
  
  \begin{assum} \label{input_assumption}
  	For a fixed constant $\gamma \in (0,1]$ we have
  	\begin{enumerate}[label = ({\roman*})]
  		\item $c \in C^{2,\gamma}(\Delta_+^{d-1};\mathbb{S}^d_{+})$ is such that $c(x)$ has rank $d-1$ and $\boldsymbol{1} \in \text{Ker}(c(x))$ for every $x \in \Delta_+^{d-1}$,
  		\item $p \in C^{2,\gamma}(\Delta_+^{d-1};(0,\infty))$ is such that $\int_{\Delta_+^{d-1}} p = 1$.
  	\end{enumerate} 
  \end{assum}
  Here, and in what follows, $\boldsymbol{1}$ is the vector of all ones in $\R^d$. Our goal is to find the optimal growth rate in the worst-case model and the portfolio generating it under a class of admissible probability measures: 
  
  \begin{defn} \label{Pi_def}
  	Given inputs $(c,p)$  satisfying Assumption \ref{input_assumption} we define the set $\Pi$ consisting of all probability measures $\P$ on $(\Omega,\F)$ for which the following hold:
  	\begin{enumerate}[label = ({\roman*})]
  		\item $X$ is a $\P$-semimartingale with covariation process $\langle X\rangle = \int_0^\cdot c(X_t)dt$; $\P$-a.s.
  		\item For all Borel measurable functions $h$ on $\Delta_+^{d-1}$ with $\int_{\Delta_+^{d-1}}h^+p < \infty$  we have that 
  		$$ \lim\limits_{T \to \infty} \frac{1}{T}\int_0^T h(X_t)dt = \int_{\Delta_+^{d-1}}hp; \quad \P\text{-a.s.}$$
  		\item The laws of $\{X_t\}_{t \geq 0}$ under $\P$ are tight on $\Delta^{d-1}_+$. That is for every $\epsilon > 0$ there is a compact set $K \subset \Delta^{d-1}_+$ such that $\P(X_t \in K) \geq 1-\epsilon$ for every $t \geq 0$.
  	\end{enumerate}
  \end{defn}
  
  With regards to investment strategies, we will consider portfolios $\pi = (\pi^1,\dots,\pi^d)^\top$ which are processes that are $X$-integrable with respect to each $\P \in \Pi$ and satisfy
  \begin{equation} \label{portfolio_defn}
  \pi^1_t + \dots + \pi^d_t = 1, \quad t \geq 0.
  \end{equation}
  If $\pi_t^i \geq 0$ for every $t \geq 0$ and $i \in \{1,\dots,d\}$ we say that $\pi$ is \textit{long-only}.
  The \textit{wealth process} induced by a portfolio $\pi$ is given by
  \begin{equation} \label{wealth}
  \frac{dV^\pi_t}{V^\pi_t} := \sum\limits_{i=1}^d \pi^i_t\frac{dX^i_t}{X^i_t}; \quad V_0^\pi = 1
  \end{equation} and we let $\mathcal{V}$ be the set of all such wealth processes. 
 
  We stress that we only consider strategies that are fully invested in the market and do not incorporate a bank account in the model. This is captured by \eqref{portfolio_defn} and is standard in the SPT literature. Note the condition \eqref{portfolio_defn} is not a restriction when solving the robust growth-optimization problem. This is because $X^\top \boldsymbol{1} = 1$, so if $\rho$ is a candidate trading strategy which does not satisfy \eqref{portfolio_defn}, then by setting $\pi_t = \rho_t + (1-\rho_t^\top \boldsymbol{1})X_t$ we have $V^\pi \equiv V^\rho$ and $\pi_t^\top\boldsymbol{1} \equiv 1$. Removing condition \eqref{portfolio_defn} would therefore not enlarge the set of achievable wealth processes.
  
   The process $V^\pi$ is interpreted as a relative wealth process with respect to the market portfolio. Indeed, if one takes a portfolio $\pi$ invested in capitalizations $S$ with corresponding wealth process $W^\pi$ and sets $\Sigma = S^\top \boldsymbol{1}$ then $V^\pi$, as defined in \eqref{wealth}, is given by $W^\pi/\Sigma$.  
  As such, the component $\pi^i$ represents the proportion of wealth invested in asset $i$.  We refer the reader to \cite[Chapter 1]{fernholz2002stochastic} for a more detailed discussion of these conventions.
  
  Now we define the \textit{asymptotic growth rate} (in probability) corresponding to a wealth process $V \in \mathcal{V}$ and a measure $\P \in \Pi$ as 
  $$g(V;\P) := \sup\left\{\gamma \in \R: \lim\limits_{T\uparrow \infty}\P\(T^{-1}\log V_T \geq \gamma\) = 1\right\}.$$
  Our main goal is to identify the \textit{optimal robust asymptotic growth rate} 
   \begin{equation} \label{lambda}
  \lambda := \sup\limits_{V \in \mathcal{V}} \inf\limits_{\P \in \Pi} g(V;\P)
  \end{equation}
  together with the growth-optimal portfolio that achieves this growth rate. In \cite{kardaras2018ergodic}, the authors were able to identify $\lambda$ under additional assumptions on the inputs $c$ and $p$. To formulate these assumptions we define the operator $L$ to be the generator of $X$ under driftless dynamics; that is \begin{equation}\label{L_def}
  Lf(x) := \frac{1}{2}\sum\limits_{i,j=1}^d c_{ij}(x)\partial_{ij}f(x) 
  \end{equation} 
  for all $f \in C^2(\R^d)$. Next, denote by $c^{-1}\diver c(x)$ any vector $y$ that satisfies the linear equation 
  \begin{equation} \label{divc}
  c(x)y= \diver c(x)
  \end{equation} where $\diver c_i (x) = \sum_{j}\partial_j c_{ij}(x)$ for every $i = 1,\dots,d$ and $x \in \Delta^{d-1}_+$. Note that since $\boldsymbol{1} \in \text{Ker}(c(x))$ we equivalently have that $\diver c_i(x) = \sum_{j \ne i} (\partial_j - \partial_i) c_{ij}(x)$, which is unambiguous. By Assumption \ref{input_assumption}(i) we have that $\diver c(x)$ belongs to the range of $c(x)$ for every $x \in \Delta^{d-1}_+$ so \eqref{divc} has a solution which is unique up to multiples of $\bf 1$. The equations in the sequel will not depend on the particular choice of solution to \eqref{divc}. However, occasionally $c^{-1}\diver c$ can be represented as the gradient of a function, in which case we use this representative.
  
  \begin{assum} \label{finite_assumption}
  	In what follows write $\ell := \frac{1}{2}\(\nabla \log p + c^{-1}\diver c\)$.  We then assume
  	\begin{enumerate}[label = ({\roman*})]
  		\item $\int_{\Delta_{+}^{d-1}} \ell^\top c \ell p < \infty$,
  		\item $\int_{\Delta_{+}^{d-1}} \(\diver{pc\ell}\)^+ < \infty$,
  		\item There exists a non-explosive solution $\tilde \P$ to the generalized martingale problem associated to the operator $ L^R$ given by
  		$$  L^R f :=  Lf + \ell^\top c \nabla f, \quad  f \in C^2(\R^d).$$
  	\end{enumerate}
  \end{assum}
The martingale problem in Assumption~\ref{finite_assumption}(iii) is called a \textit{generalized} martingale problem as the coordinate process may, in general, explode in finite time. As such, a rigorous formulation of the generalized martingale problem requires one to define it locally on an increasing sequence of compact sets. We state the precise formulation of (iii) and develop sufficient conditions on $(c,p)$ which ensure the solution $\tilde \P$ is non-explosive in Appendix \ref{mart_prob_app}.

  Assumption~\ref{finite_assumption}(iii) is important to enforce to guarantee the non-degeneracy of \eqref{lambda}. If this assumption fails to hold then, as discussed in \cite{kardaras2018ergodic},  \cite[Theorem 6.6.2 (ii)]{pinsky1995positive} implies that there are no time-homogeneous diffusions with laws in $\Pi$. Moreover, \cite[Proposition 1.9]{kardaras2018ergodic} shows that when $d=2$ if Assumption~\ref{finite_assumption}(iii) fails then either $\Pi = \emptyset$ or $\lambda = \infty$. As such, if the coordinate process explodes under the measure $\tilde \P$ then the robust growth-optimization problem may not be well-posed. Conversely, if Assumption~\ref{finite_assumption}(iii) does hold then $\tilde \P$ (which will be a candidate for the worst-case measure as will become clear in Theorem~\ref{KR_main}) is in $\Pi$. Moreover, under this law the dynamics of $X_t$ are given by
\begin{equation} \label{worst_case_dynamics}
dX_t = c\ell(X_t) dt + \sigma(X_t)dW_t
\end{equation}
for some Brownian motion $W$ and where $\sigma(x)$ is the positive definite square root of $c(x)$. Assumption~\ref{finite_assumption}(i) ensures that the growth-optimal portfolio under $\tilde \P$ has finite asymptotic growth rate.

To state the main results of \cite{kardaras2018ergodic}, which identify both $\lambda$ and the growth-optimal portfolio, we need the notion of \textit{functionally generated portfolios} and the \textit{master formula} developed in \cite{fernholz2002stochastic}.

\begin{defn} \label{def_func_gen}
 Let $\pi$ be a portfolio and let $G:\R^d \to (0,\infty)$ be a function that is continuous on $\Delta^{d-1}_+$ such that $\log G(X)$ is a semimartingale. If we have the representation 
	\begin{equation} \label{func_gen_def}
	\log V_T^\pi = \log G (X_T) + \Gamma_T 
	\end{equation}for some finite variation process $\Gamma$ with $\Gamma_0 = 0$ then we call $\pi$ a functionally generated portfolio with generating function $G$ and drift process $\Gamma$ and we write $\pi_G$ for $\pi$. 
	
	 If a portfolio is of the form $\pi_t = \pi(X_t)$ for some deterministic function $\pi(\cdot)$, we say that it is in feedback form. 
\end{defn}

\begin{thm}[Master Formula] \label{master_formula}
	Let $G:\R^d \to (0,\infty)$ be a function that is continuous on $\Delta^{d-1}_+$ such that $G(X)$ is a semimartingale. Assume there exist locally bounded measurable functions $g_i: \Delta_+^{d-1} \to \R$ for $i = 1,\dots,d$ and a finite variation process $Q$ with $Q(0) = 0$ such that 
	\[
	d\log G(X_t) = \sum\limits_{i=1}^d g_i(X_t)dX^i_t + dQ_t.\]
	Then $G$ functionally generates the portfolio $\pi_G$ with weights given by
	\[
	\frac{\pi_G^i(x)}{x^i} = g_i(x) + 1 - \sum\limits_{j=1}^d x^jg_j(x); \quad i = 1,\dots,d.
	\]
\end{thm}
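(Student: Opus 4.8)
The plan is to verify Definition~\ref{def_func_gen} directly by computing $d\log V^{\pi_G}_t - d\log G(X_t)$ and showing it is of finite variation. Write $\theta^i(x) := \pi_G^i(x)/x^i = g_i(x) + \kappa(x)$, where $\kappa(x) := 1 - \sum_{j=1}^d x^j g_j(x)$ does not depend on $i$. Summing $\pi_G^i(x) = x^i\theta^i(x)$ over $i$ and using $\sum_i x^i = 1$ on $\Delta^{d-1}_+$ gives $\sum_i \pi_G^i(x) = \sum_i x^i g_i(x) + 1 - \sum_j x^j g_j(x) = 1$, so $\pi_G$ satisfies \eqref{portfolio_defn}. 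Since the $g_i$ are locally bounded and measurable and, on each compact time interval, the continuous process $X$ takes values in a compact subset of the open simplex, the processes $\theta^i(X)$ are locally bounded and predictable, hence $X$-integrable under every $\P \in \Pi$; thus $V^{\pi_G}$ is a well-defined, positive, continuous semimartingale under each $\P \in \Pi$.

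Next I would apply Itô's formula to $\log V^{\pi_G}$. By \eqref{wealth} we have $dV^{\pi_G}_t/V^{\pi_G}_t = \sum_i \theta^i(X_t)\,dX^i_t$, and since $d\langle X^i,X^j\rangle_t = c_{ij}(X_t)\,dt$ this yields
\[
d\log V^{\pi_G}_t = \sum_{i=1}^d \theta^i(X_t)\,dX^i_t - \frac12\sum_{i,j=1}^d \theta^i(X_t)\theta^j(X_t)\,c_{ij}(X_t)\,dt.
\]
The crucial observation is that $\sum_i X^i_t \equiv 1$ forces $\sum_i dX^i_t = 0$, so the common summand $\kappa(X_t)$ in $\theta^i = g_i(X)+\kappa(X)$ contributes nothing to the first sum and $\sum_i \theta^i(X_t)\,dX^i_t = \sum_i g_i(X_t)\,dX^i_t$ --- exactly the stochastic-integral term in the assumed decomposition $d\log G(X_t) = \sum_i g_i(X_t)\,dX^i_t + dQ_t$.

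Subtracting the two identities, the stochastic integrals cancel and one is left with
\[
d\log V^{\pi_G}_t - d\log G(X_t) = -\frac12\sum_{i,j=1}^d \theta^i(X_t)\theta^j(X_t)\,c_{ij}(X_t)\,dt - dQ_t,
\]
whose right-hand side is of finite variation: the first term because its density is locally bounded ($c$ is continuous on $\Delta^{d-1}_+$ and the $\theta^i(X)$ are locally bounded), the second by hypothesis. Integrating and setting $\Gamma_T := \log V^{\pi_G}_T - \log G(X_T)$ --- which, up to the harmless additive constant $\log G(X_0)$ removed by the standard normalization $G(X_0)=1$, is a finite variation process with $\Gamma_0 = 0$ --- gives the representation \eqref{func_gen_def}, so $G$ functionally generates $\pi_G$ with the stated weights.

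I do not foresee a serious obstacle: the content of the argument is the algebraic cancellation $\sum_i \theta^i(X)\,dX^i = \sum_i g_i(X)\,dX^i$, which holds because $\sum_i dX^i = 0$ on the simplex, after which everything that remains is manifestly of finite variation. The only points needing a little care are the well-definedness of the stochastic and time integrals (handled via local boundedness of the $g_i$ and $c$ together with the fact that $X$ never reaches $\partial\Delta^{d-1}_+$) and the bookkeeping of the additive constant $\log G(X_0)$ when matching the normalization $\Gamma_0 = 0$.
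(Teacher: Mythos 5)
Your argument is correct and is precisely the standard proof of the master formula: the paper itself does not reprove this statement but cites it from Fernholz, where the argument is exactly this It\^o computation for $\log V^{\pi_G}$ combined with the cancellation $\sum_i \theta^i(X)\,dX^i = \sum_i g_i(X)\,dX^i$ coming from $\sum_i dX^i = 0$ on the simplex. The only point you rightly flag --- the additive constant $\log G(X_0)$ versus the normalization $\Gamma_0 = 0$ --- is a bookkeeping issue already implicit in the paper's Definition~\ref{def_func_gen} and Remark~\ref{func_gen_remark}, and your handling of it is fine.
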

\begin{remark} \label{func_gen_remark}
	If the function $G$ from the previous theorem is $C^2$ then it follows that the assumptions of the theorem are satisfied with $g_i(x) = \partial_i \log G(x)$ and $dQ_t = L\log G(X_t)dt$ where $L$ is given by \eqref{L_def}. Thus the corresponding portfolio weights become
\[
	\frac{\pi^i_G(x)}{x^i} = \partial_i \log G(x) + 1 - \sum\limits_{j=1}^d x^j\partial_j \log G(x); \quad i = 1,\dots,d
\]
	and we have the representation $\log V_T^\pi = \log G(X_T) + \Gamma_T$ where $\Gamma_T = \int_0^T\frac{-LG}{G}(X_t)dt$.
\end{remark}
We are now ready to state the main results from \cite{kardaras2018ergodic}.
\begin{thm}\label{KR_main}
	Under Assumptions \ref{input_assumption} and \ref{finite_assumption} we have that 
	\[
	\lambda = \frac{1}{2}\int_{\Delta_{+}^{d-1}} \nabla \hat u^\top c\nabla \hat u p
	\]
	where $\hat u: \Delta^{d-1}_+ \to \R$ is the unique (up to an additive constant) minimizer of the variational problem
	\begin{equation} \label{variational}
\min\limits_{u \in C^2(\Delta^{d-1}_+)} \int_{\Delta_{+}^{d-1}} (\ell-\nabla u)^\top c(\ell - \nabla u)p.
	\end{equation}
	Moreover, the optimal portfolio is functionally generated by $\hat G:= \exp (\hat u)$ and it is the growth-optimal portfolio under the unique measure (up to initial distribution) $\P_{\hat u}$ characterized by $X$ having the semimartingale decomposition
	\begin{equation} \label{hat_P_dynamics}
	dX_t = c  \nabla \hat u (X_t)dt + \sigma(X_t)dW_t
	\end{equation}
	 under $ {\P_{\hat u}}$. \end{thm}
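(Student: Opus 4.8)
The plan is to pin down $\lambda$ by a matching pair of bounds and to treat the existence and uniqueness of $\hat u$ separately by convex analysis. For the lower bound I would use the portfolio $\pi_{\hat G}$ functionally generated by $\hat G := \exp(\hat u)$. Since $\hat u \in C^2$, Remark~\ref{func_gen_remark} gives $\log V_T^{\pi_{\hat G}} = \hat u(X_T) + \int_0^T (-L\hat G/\hat G)(X_t)\,dt$, and the crucial feature is that this drift term involves only the fixed input $c$ (through $L$) and $\hat G$, not the drift of $X$. Hence for \emph{every} $\P \in \Pi$: the tightness in Definition~\ref{Pi_def}(iii), together with continuity of $\hat u$ on $\Delta^{d-1}_+$, forces $T^{-1}\hat u(X_T) \to 0$ in $\P$-probability, while the ergodic averaging in Definition~\ref{Pi_def}(ii) gives $T^{-1}\int_0^T(-L\hat G/\hat G)(X_t)\,dt \to \int_{\Delta^{d-1}_+}(-L\hat G/\hat G)\,p$ once that integral is shown to be well defined. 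From $\partial_i\hat G = \hat G\,\partial_i\hat u$ one computes $-L\hat G/\hat G = -L\hat u - \tfrac12\nabla\hat u^\top c\,\nabla\hat u$, and integrating $-L\hat u$ twice by parts against $p$ (using Assumption~\ref{input_assumption} and the definition of $\ell$) yields $\int (-L\hat u)\,p = \int \ell^\top c\,\nabla\hat u\,p$. The stationarity (Euler--Lagrange) condition for the minimizer of \eqref{variational}, namely $\int(\ell-\nabla\hat u)^\top c\,\nabla v\,p = 0$ for all test functions $v$, applied with $v = \hat u$, turns this into $\int(-L\hat u)\,p = \int \nabla\hat u^\top c\,\nabla\hat u\,p$, so $\int(-L\hat G/\hat G)\,p = \tfrac12\int\nabla\hat u^\top c\,\nabla\hat u\,p$. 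Thus $g(V^{\pi_{\hat G}};\P)$ equals the asserted value of $\lambda$ for \emph{every} $\P\in\Pi$, giving $\lambda \ge \tfrac12\int_{\Delta^{d-1}_+}\nabla\hat u^\top c\,\nabla\hat u\,p$.

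For the upper bound I would show that the martingale problem for \eqref{hat_P_dynamics} is well posed (uniqueness up to initial law being a consequence of the non-degeneracy of $c$ on $\Delta^{d-1}_+$ and the regularity $c\in C^{2,\gamma}$), that the resulting measure $\P_{\hat u}$ lies in $\Pi$, and that $\pi_{\hat G}$ is its growth-optimal (num\'eraire) portfolio. Membership in $\Pi$: condition (i) of Definition~\ref{Pi_def} holds by construction; for (ii) one needs $p$ to be the invariant density of the generator $L_{\hat u}f := Lf + (c\nabla\hat u)^\top\nabla f$, that is $\diver\!\big(p\,c(\ell-\nabla\hat u)\big) = 0$, which is exactly the Euler--Lagrange condition above rewritten by integration by parts, while the non-explosion in (ii) and the tightness in (iii) are inherited from Assumption~\ref{finite_assumption}(iii) --- this is the single place that standing assumption is indispensable, as it is what guarantees a non-explosive diffusion with the prescribed data exists. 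For optimality, write $d\log V^\pi_t = \big(\rho^\top c\nabla\hat u - \tfrac12\rho^\top c\,\rho\big)(X_t)\,dt + (\text{local martingale})$ with $\rho^i := \pi^i/x^i$ and $\rho^\top x = 1$; pointwise maximization of the drift over such $\rho$, using $\boldsymbol{1}\in\text{Ker}(c(x))$ so that the Lagrange multiplier vanishes, is attained at $\rho = \nabla\hat u + (1 - x^\top\nabla\hat u)\boldsymbol{1}$, which by Theorem~\ref{master_formula} is precisely $\pi_{\hat G}$, with optimal drift $\tfrac12\nabla\hat u^\top c\,\nabla\hat u$. Consequently $V^\pi/V^{\pi_{\hat G}}$ is a nonnegative $\P_{\hat u}$-supermartingale for every admissible $\pi$, hence converges $\P_{\hat u}$-a.s.\ to a finite limit, so $\limsup_{T}T^{-1}\log(V^\pi_T/V^{\pi_{\hat G}}_T) \le 0$; therefore $g(V;\P_{\hat u}) \le g(V^{\pi_{\hat G}};\P_{\hat u}) = \tfrac12\int\nabla\hat u^\top c\,\nabla\hat u\,p$ for all $V\in\mathcal{V}$. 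Since $\lambda = \sup_V\inf_\P g(V;\P) \le \sup_V g(V;\P_{\hat u})$, this matches the lower bound and $\pi_{\hat G}$ attains it.

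It remains to establish that \eqref{variational} has a minimizer $\hat u\in C^2$, unique up to an additive constant. Uniqueness is the easy direction: the functional is convex in $u$ and, because $v\mapsto\nabla v^\top c\,\nabla v$ is a positive definite quadratic form on the tangent space of the simplex, it is strictly convex modulo additive constants, so any two minimizers have the same gradient. Existence is where I expect the real difficulty to lie: one minimizes the functional over the completion of $C^\infty(\Delta^{d-1}_+)$ in the seminorm $v\mapsto\big(\int\nabla v^\top c\,\nabla v\,p\big)^{1/2}$ --- the infimum is finite because Assumption~\ref{finite_assumption}(i) makes the functional finite at $u = 0$ --- to obtain a weak minimizer, and then upgrades it to a classical $C^2$ solution by interior elliptic (Schauder) regularity using $c,p\in C^{2,\gamma}$. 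The main obstacle, and the reason Assumption~\ref{finite_assumption} is needed in full, is the interplay of two degeneracies: $c(x)$ has only rank $d-1$, and neither $c$ nor $p$ is controlled near $\partial\Delta^{d-1}_+$; one must therefore show the weak minimizer has enough integrability and boundary behaviour to justify the integrations by parts in the lower-bound step and the identification of $\diver(p\,c(\ell-\nabla\hat u)) = 0$ with stationarity of $p$, with Assumption~\ref{finite_assumption}(ii)--(iii) precisely ruling out escape of mass to the boundary and linking the PDE to a genuine element of $\Pi$.
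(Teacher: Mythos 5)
First, note that the paper does not prove Theorem~\ref{KR_main}: it is quoted from \cite{kardaras2018ergodic}, so the only material to compare against is that reference and the machinery this paper rebuilds around it (Lemmas~\ref{worst_case_growth} and \ref{C2_lem}). Your overall architecture is the right one and matches the known proof: a model-free lower bound from the $C^2$-functionally-generated portfolio $\pi_{\hat G}$ combined with tightness and the ergodic property, and an upper bound from the num\'eraire property of $\pi_{\hat G}$ under the candidate worst-case measure $\P_{\hat u}$. Your pointwise drift maximization using $\boldsymbol{1}\in\mathrm{Ker}(c(x))$, the supermartingale argument for $V^\pi/V^{\pi_{\hat G}}$, and the identification of the Euler--Lagrange equation $\diver(pc\nabla\hat u)=\diver(pc\ell)$ with stationarity of $p$ under $\P_{\hat u}$ are all correct.

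However, the two steps you treat as routine are exactly where the substance lies, and as written they are gaps. (1) Your lower bound hinges on the integration by parts $\int(-L\hat u)\,p=\int\ell^\top c\,\nabla\hat u\,p$ over the \emph{open} simplex with no control of boundary terms; the present paper explicitly flags this (item 3 of the Remark following Lemma~\ref{C2_lem}): the identity is clear for $C_c^\infty$ but its extension to $C^2$ functions is not, and is obtained only indirectly via the probabilistic Assumption~\ref{finite_assumption}(iii). The paper's route avoids the analytic IBP by computing $g(V^{\pi_G};\tilde\P)$ probabilistically (Lemma~\ref{worst_case_growth}) and equating it with the ergodic average of $-LG/G$, which is the same for every $\P\in\Pi$; you need either that device or an honest justification of the vanishing boundary terms. (2) You assert that non-explosion and tightness of $\P_{\hat u}$ are ``inherited from Assumption~\ref{finite_assumption}(iii)'', but that assumption concerns the generator with drift $c\ell$ (the measure $\tilde\P$); when $\ell$ is not a gradient, $\P_{\hat u}\neq\tilde\P$ and non-explosion/ergodicity of the diffusion with drift $c\nabla\hat u$ requires a separate argument --- one of the genuinely hard points in \cite{kardaras2018ergodic}. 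Relatedly, your existence argument is circular in its current order: the lower-bound computation already presupposes a $C^2$ minimizer satisfying the Euler--Lagrange equation with enough integrability, and upgrading a weak minimizer in the completion under the degenerate seminorm $\|\nabla v\|_{\mathcal{H}^{c,p}}$ to a classical $C^2$ solution, on a domain where $c$ has rank $d-1$, is not uniformly elliptic near $\partial\Delta^{d-1}_+$, and carries no boundary condition, is precisely the part that must be supplied in detail rather than deferred to ``interior Schauder estimates''.
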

The measure ${\P_{\hat u}}$ can be interpreted as the worst-case measure since the growth-optimality of $\pi_{\hat G}$ under ${\P_{\hat u}}$ yields
$$\lambda = \sup\limits_{V \in \mathcal{V}}\inf\limits_{\P \in \Pi} g(V;\P) = g(V^{\pi_{\hat G}};{\P_{\hat u}}) = \sup\limits_{V \in \mathcal{V}} g(V;{\P_{\hat u}}).$$

In the case that $\ell = \nabla \log  R$ is a gradient the solution to \eqref{variational} is clearly given (up to an additive constant) by $\hat u = 
\log R$ and the measure ${\P_{\hat u}}$ is equal to the measure $\tilde{\P}$; that is it solves the generalized martingale problem for $L^R$ in Assumption \ref{finite_assumption}(iii). If $\ell$ is not a gradient then the authors in \cite{kardaras2018ergodic} proved that $\hat u$ solves the Euler--Lagrange equation
\begin{equation} \label{euler_lagrange}
\diver(pc\nabla \hat u) = \diver(pc \ell).
\end{equation}
The measures $\tilde \P$ and ${\P_{\hat u}}$ may differ in this case.
\section{Long-Only Constraints and Concave Generating Functions} \label{long_only}
In this section, we will study the robust growth-optimization problem introduced in the previous section under long-only constraints. Section~\ref{a_motivating_example} contains an illustrative example, which motivates the study of the long-only problem. In Section~\ref{long_only_main_section} we formulate and solve a long-only robust growth-optimization problem. Section \ref{rank_based_section} applies our results to models of ranked market weights. In Section~\ref{long_only_discussion} we discuss in detail our choice of long-only constraints. 

\subsection{A Motivating Example}\label{a_motivating_example}
Consider for $x \in \Delta^{d-1}_+$ the specification
\begin{align*}
c_{ij}(x) & = \begin{cases}
-x^ix^j & i \ne j\\
x^i(1-x^i) & i = j
\end{cases},&  
p(x) = \frac{\Gamma(ad)}{\Gamma(a)^d}\prod_{i=1}^d (x^i)^{a-1}
\end{align*}
where $a > 1$ is a fixed parameter. In Example~\ref{dirichlet_example}, where we examine a generalized version of this example, it is shown that this choice of ($c,p)$ satisfies Assumptions~\ref{input_assumption} and \ref{finite_assumption}, so that the results of Theorem~\ref{KR_main} are valid. Moreover, this choice of $c$ matrix is a special case of the specification \eqref{c_def} considered later in Section~\ref{model} so it follows from Theorem~\ref{divc_lemma} that $c^{-1}\diver c(x) = \sum_{i=1}^d\nabla \log (x^i)$, from which we obtain that $\ell = \nabla \log R$ where 
$$R(x) = \prod_{i=1}^d (x^i)^{a/2}.$$
The equation \eqref{worst_case_dynamics} for the dynamics of $X$ under the worst case measure $\tilde \P$ become
\begin{equation} \label{vol_stabilized}
dX^i_t  = \frac{ad}{2}\(\frac{1}{d}- X^i_t\)dt + \sqrt{X^i_t}(1-X^i_t)dW^i_t - \sum_{j \ne i} X^i_t\sqrt{X^j_t}dW^j_t, \quad  i = 1,\dots,d.
\end{equation}
Equation \eqref{vol_stabilized} corresponds to the dynamics of the market weights in the volatility-stabilized market introduced in \cite{fernholz2005relative}. This is a pertinent example in SPT as it exhibits an important feature empirically observed in real-life equity markets; namely that assets with smaller capitalizations tend to have bigger volatilites and growth rates than the larger assets. We refer the reader to \cite{fernholz2005relative,pickova2014generalized} for a more in depth discussion of volatility-stabilized markets. 

Using the explicit expression for $\ell$ we obtain from Theorems~\ref{master_formula} and~\ref{KR_main} that the robust growth-optimal portfolio is given by 
\begin{align} \label{vol_stabilized_portfolio}
\pi^i_{\hat G}(x) & = \frac{1}{2}\(a + x^i(2-da)\), & i =1,\dots,d,
\end{align}
 where $\hat G = R$. This portfolio is an affine combination of the market portfolio and a constant-weighted portfolio. It can be viewed as a leveraged position in a constant-weighted portfolio, funded by a short position in the market portfolio (indeed, $ 2-da < 0$ because $a > 1$). As such it is rather aggressive, especially when $da$ is large.
  Though this strategy does not require the investor to short-sell the smallest assets (which can be infeasible for even the most sophisticated investors), it does prescribe the short-selling of large market weights. Indeed, whenever $x^i > \frac{a}{da-2}$, the portfolio takes a short position in asset $i$. Roughly speaking, essentially independent of $a$, one would have to short approximately the 100 largest stocks in the $S\&P \ 500$ (under current capitalizations) to implement this strategy. Since many investors have restrictions on short-selling it is desirable to tackle the robust growth-optimization problem under \textit{long-only constraints}. 

\subsection{The Long-Only Robust Growth-Optimization Problem} \label{long_only_main_section}
Motivated by the substantial short-selling observed in Section~\ref{a_motivating_example}, we will introduce long-only constraints. We will do so by restricting to a large and well-behaved class of long-only functionally generated portfolios -- those generated by concave functions. It is well-known that concave functions generate long-only portfolios (see Proposition~3.1.15 in \cite{fernholz2002stochastic} for the $C^2$ case and Theorem~3.7 in \cite{karatzas2017trading} for the general case).

Another, different, attempt would have been to optimize over \emph{all} long-only portfolios; however without imposing any additional structure it is unclear how to proceed. Indeed, the techniques employed in \cite{kardaras2018ergodic} heavily rely on the functionally generated structure of the candidate optimal strategy. In Section \ref{2d_long_only} we are able to explicitly solve the problem among all portfolios in feedback form in the $d=2$ case. We observe that the resulting portfolio is functionally generated by a function which is (typically) \textit{not} $C^2$. The absence of this feature along with the lack of an explicit candidate solution causes difficulty in generalizing the results to higher dimension; a further discussion of these points is conducted in Section~\ref{long_only_discussion}.
 
In our analysis of portfolios generated by concave functions it will be convenient to emphasize the logarithm of the generating function. This is visible already in Theorem \ref{KR_main}, but now becomes essential. As such, we introduce the concept of \textit{exponentially concave functions} \cite{pal2016geometry}:
 \begin{defn}
 	We say that a function $\phi: \Delta^{d-1}_+ \to \R$ is exponentially concave if $e^\phi$ is concave and denote the set of all such functions by $\mathcal{E}$.
 \end{defn}
 In this section we restrict our attention to the class of wealth processes produced by concave portfolio generating functions,
 $$\mathcal{V}^{\mathcal{E}} := \{V^{\pi_G}\in \mathcal{V}: \log G \in \mathcal{E}\}.$$ We consider the problem of characterizing
 \begin{equation} \label{exp_concave_problem}
 \lambda_{\mathcal{E}} :=\sup\limits_{V \in \mathcal{V}^\mathcal{E}} \inf\limits_{ \P \in \Pi} g(V;\P).
 \end{equation} 
To state our main result we will need to introduce a subset of the set $\Pi$.

\begin{defn} \label{tilde_pi_def}
	Given $\P \in \Pi$ let 
	$$dX_t = b_t^\P dt + \sigma(X_t)dW_t$$
	be the semimartingale decomposition of $X$ under $\P$. Then we define the class
	$\tilde \Pi$ to be the measures $\P \in \Pi$ such that \begin{equation} \label{drift_assumption}
		\limsup\limits_{T \to \infty}\frac{1}{T}\int_{0}^T \E[|(b_t^{\P})^i|^q\I_K(X_t)]dt< \infty
	\end{equation} 
	for some $q > 1$, every $i \in \{1,\dots,d\}$ and every compact set $K \subset \Delta^{d-1}_+$.
\end{defn}

Condition \eqref{drift_assumption} is a weak asymptotic integrability condition on the drift of $X$. In particular if $b^{\P}_t$ is dominated by a locally bounded function of $X_t$ then \eqref{drift_assumption} holds. Hence both the measure $\tilde \P$, under which $X$ has dynamics given by \eqref{worst_case_dynamics}, and the worst case measure $\P_{\hat u}$, under which $X$ has dynamics given by \eqref{hat_P_dynamics}, are members of $\tilde \Pi$.

Before proceeding we will make the following standing convention to simplify the notation and arguments to come. Every exponentially concave function $\phi$ is concave and is therefore continuous and almost everywhere differentiable. Though there may exist a nonempty Lebesgue null set $N$ such that the superdifferential $\partial \phi(x)$ is larger than a singleton for $x \in N$, we will denote by $\nabla \phi$ any version of $\partial \phi$. With this notation established we are now ready to state our main result.
 
 \begin{thm} [Main Theorem] \label{main_thm}
 Under Assumptions \ref{input_assumption} and \ref{finite_assumption} we have that 
 	\begin{equation} \label{lambda_EC}
 	\lambda_{\mathcal{E}} = \frac{1}{2}\int_{\Delta^{d-1}_+}  \(\ell^\top c \ell - (\ell - \nabla \hat \phi)^\top c (\ell - \nabla \hat \phi)\)p
 	\end{equation} where $\hat \phi$ is the unique (up to an additive constant) solution of 
 	\begin{equation} \label{min_exp}
 	\inf_{\phi \in \mathcal{E}} \int_{\Delta^{d-1}_+} (\ell - \nabla \phi)^\top c (\ell -\nabla \phi)p.
 	\end{equation} Moreover, the robust growth-optimal strategy is functionally generated by the concave function $\hat G = \exp \hat \phi$. It satisfies $g(V^{\pi_{\hat G}};\P) \geq \lambda_{\mathcal{E}}$ for every $\P \in \tilde \Pi$ and  $g(V^{\pi_{\hat G}};\tilde \P) = \lambda_{\mathcal{E}}$.
 \end{thm}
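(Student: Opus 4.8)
The plan is to mirror the structure of the proof of Theorem~\ref{KR_main} from \cite{kardaras2018ergodic}, adapting each step to handle the convex constraint set $\mathcal{E}$ rather than the full linear space $C^2(\Delta^{d-1}_+)$. The first task is to establish the \emph{upper bound} $\lambda_{\mathcal{E}} \le \frac{1}{2}\int (\ell^\top c\ell - (\ell-\nabla\hat\phi)^\top c(\ell-\nabla\hat\phi))p$. For this I would fix an arbitrary $V = V^{\pi_G} \in \mathcal{V}^{\mathcal{E}}$ with $\phi := \log G \in \mathcal{E}$ and evaluate its growth rate under the single measure $\tilde\P$ (whose dynamics are \eqref{worst_case_dynamics}). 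Using the master formula (Theorem~\ref{master_formula}) in the form $\log V_T^{\pi_G} = \log G(X_T) + \Gamma_T$, the key is to identify $\Gamma_T$. Since $\phi$ is merely concave, $G$ need not be $C^2$, so I cannot directly invoke Remark~\ref{func_gen_remark}; instead I would appeal to the generalized master formula / Itô formula for concave functions (as in \cite{karatzas2017trading}), under which $d\log G(X_t)$ picks up a nonpositive singular drift term coming from the second-order part, giving $\Gamma_T \ge -\frac12\int_0^T \operatorname{tr}(c(X_t)\,d[\text{curvature of }G])$-type bound. Dividing by $T$, using the ergodic property (Definition~\ref{Pi_def}(ii)) to replace time averages by spatial integrals against $p$, the tightness in Definition~\ref{Pi_def}(iii) to control the boundary term $\frac1T\log G(X_T)\to 0$ in probability, and the drift integrability \eqref{drift_assumption} to justify the relevant limits, one arrives at $g(V^{\pi_G};\tilde\P) \le \frac12\int \ell^\top c\ell\, p - \frac12\int(\ell-\nabla\phi)^\top c(\ell-\nabla\phi)p$. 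Taking the supremum over $V\in\mathcal{V}^{\mathcal{E}}$ and recognizing the right side is maximized exactly at the minimizer $\hat\phi$ of \eqref{min_exp} (after completing the square: $\int\ell^\top c\ell p$ is constant in $\phi$, so maximizing over $\phi$ the displayed expression is the same as minimizing $\int(\ell-\nabla\phi)^\top c(\ell-\nabla\phi)p$) yields the upper bound $\lambda_{\mathcal{E}} \le$ RHS of \eqref{lambda_EC}.

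The second task is the \emph{matching lower bound}, for which I would show that the specific portfolio $\pi_{\hat G}$ with $\hat G = \exp\hat\phi$ achieves $g(V^{\pi_{\hat G}};\P) \ge \lambda_{\mathcal{E}}$ for \emph{every} $\P\in\tilde\Pi$. This is where the variational characterization must be used: $\hat\phi$ minimizes a convex functional over the convex set $\mathcal{E}$, so it satisfies a variational \emph{inequality} rather than an equality — for every $\phi\in\mathcal{E}$ (or at least for a rich enough class, e.g. $\hat\phi + \epsilon(\psi - \hat\phi)$ with $\psi\in\mathcal{E}$), $\int (\ell - \nabla\hat\phi)^\top c\,\nabla(\psi - \hat\phi)\,p \le 0$. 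Concretely, for a generic $\P\in\tilde\Pi$ I would write the dynamics $dX_t = b_t^\P dt + \sigma(X_t)dW_t$, apply the (generalized) master formula to $\log V_T^{\pi_{\hat G}}$, and estimate $T^{-1}\log V_T^{\pi_{\hat G}}$ from below. The drift of $\log V^{\pi_{\hat G}}$ under $\P$ will contain a term linear in $b_t^\P$; here one uses that the minimizer's first-order condition forces this term, after averaging and passing to the $p$-integral, to be controlled, and the concavity of $\hat\phi$ again contributes a favourable (nonpositive, hence when it appears with a minus sign, favourable) curvature term. The martingale part vanishes in the $T^{-1}$ limit by the integrability \eqref{drift_assumption} (this is precisely why $\tilde\Pi$, and not all of $\Pi$, is the relevant class). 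Combining with the upper bound and the fact that under $\tilde\P$ the inequality becomes an equality (because $\ell$ enters, and for $\P = \tilde\P$ we have $b^{\tilde\P} = c\ell$, making the quadratic completion exact) gives $g(V^{\pi_{\hat G}};\tilde\P) = \lambda_{\mathcal{E}}$ and hence $\lambda_{\mathcal{E}} = \inf_{\P\in\tilde\Pi} g(V^{\pi_{\hat G}};\P)$, closing the circle.

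Separately, one must prove \emph{existence and uniqueness} of the minimizer $\hat\phi$ of \eqref{min_exp} up to an additive constant. Uniqueness of $\nabla\hat\phi$ is the easier half: the functional $\phi \mapsto \int(\ell-\nabla\phi)^\top c(\ell-\nabla\phi)p$ is convex, and it is strictly convex modulo additive constants in the sense that if $\phi_1,\phi_2$ are both minimizers then $\nabla\phi_1 = \nabla\phi_2$ $p$-a.e. (using positive-definiteness of $c$ on the tangent space and convexity of $\mathcal{E}$, so that $\frac12(\phi_1+\phi_2)\in\mathcal{E}$ as well), forcing $\phi_1 - \phi_2$ constant. Existence is more delicate because $\mathcal{E}$ is not closed under the natural (e.g. $W^{1,2}(p)$) topology and exponential concavity is a nonlinear constraint; I would take a minimizing sequence $\phi_n\in\mathcal{E}$, normalize (say $\phi_n(x_0) = 0$ at a fixed interior point), use the uniform bound on $\int\nabla\phi_n^\top c\nabla\phi_n\,p$ together with local boundedness of concave functions with controlled gradients and a compactness/diagonal argument (Arzelà–Ascoli on compact subsets of $\Delta^{d-1}_+$, exploiting that concavity plus a local Lipschitz bound gives local equicontinuity) to extract a locally uniformly convergent subsequence with limit $\hat\phi$, verify $\hat\phi\in\mathcal{E}$ (the pointwise limit of exponentially concave functions is exponentially concave), and pass to the limit in the functional by lower semicontinuity (weak $L^2(p)$ convergence of $\nabla\phi_n$ plus convexity of the integrand).

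The step I expect to be the main obstacle is the handling of the \textbf{non-$C^2$ generating functions} throughout: both the master-formula / Itô-type expansion for $\log G(X_t)$ when $G$ is only concave (and thus the precise sign and form of the curvature drift term, which must be integrated against the \emph{possibly singular} second-derivative measure of $\hat\phi$ and shown to be dominated by $\int \nabla\hat\phi^\top c\nabla\hat\phi\, p$ or otherwise controlled), and the rigorous justification of the exchange of limits (time-average $\to$ space-average via ergodicity, boundary term $\to 0$ via tightness, martingale term $\to 0$ via \eqref{drift_assumption}) under only the weak moment condition defining $\tilde\Pi$. A secondary delicate point is proving that the variational inequality for $\hat\phi$ over the convex cone-like set $\mathcal{E}$ is strong enough to drive the lower-bound estimate for \emph{every} $\P\in\tilde\Pi$ and not merely for $\tilde\P$ — this likely requires showing the inequality $\int(\ell-\nabla\hat\phi)^\top c\nabla(\phi-\hat\phi)p \le 0$ extends from $\phi\in\mathcal{E}$ to the relevant test directions arising from arbitrary admissible drifts, which may need an approximation argument regularizing $\hat\phi$ while preserving exponential concavity.
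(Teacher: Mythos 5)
Your upper bound and your existence/uniqueness discussion are essentially in the right spirit: the paper also bounds $\lambda_{\mathcal{E}}$ from above by evaluating every $V\in\mathcal{V}^{\mathcal{E}}$ under the single measure $\tilde\P$ (via Lemma~\ref{worst_case_growth} for feedback portfolios), and it obtains existence and uniqueness of $\nabla\hat\phi$ by showing that $\partial\mathcal{E}$ is a closed, convex, bounded subset of the Hilbert space $\mathcal{H}^{c,p}$ (Theorem~\ref{partialE_thm}, using the multiplicative cyclical monotonicity characterization of supergradients of exponentially concave functions) and invoking the projection theorem, rather than your minimizing-sequence/Arzel\`a--Ascoli route; both of those are workable.

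The genuine gap is in your lower bound. You propose to prove $\lambda_{\mathcal{E}}\ge\tfrac12\int(\ell^\top c\ell-(\ell-\nabla\hat\phi)^\top c(\ell-\nabla\hat\phi))p$ by showing that the single portfolio $\pi_{\hat G}$ satisfies $g(V^{\pi_{\hat G}};\P)\ge$ this value for every $\P\in\tilde\Pi$. But $\lambda_{\mathcal{E}}=\sup_V\inf_{\P\in\Pi}g(V;\P)$ takes the infimum over the \emph{full} class $\Pi$, and $\tilde\Pi$ is a strictly smaller class; controlling $g(V^{\pi_{\hat G}};\cdot)$ only on $\tilde\Pi$ bounds $\inf_{\P\in\tilde\Pi}g(V^{\pi_{\hat G}};\P)$, which is $\ge\inf_{\P\in\Pi}g(V^{\pi_{\hat G}};\P)$ and therefore yields no lower bound on $\lambda_{\mathcal{E}}$. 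Indeed the paper states explicitly that it \emph{cannot} show $\pi_{\hat G}$ achieves $\lambda_{\mathcal{E}}$ uniformly over $\Pi$; the robustness of $\pi_{\hat G}$ over $\tilde\Pi$ is the separate ``Moreover'' clause, proved in Appendix~\ref{proof_appendix} as Theorem~\ref{concave_invariance_thm}. The lower bound on the \emph{value} $\lambda_{\mathcal{E}}$ is obtained by a different mechanism that your plan omits: by the density statement in Theorem~\ref{partialE_thm} one chooses $C^2$ exponentially concave functions $\hat\phi_n$ with $\nabla\hat\phi_n\to\nabla\hat\phi$ in $\mathcal{H}^{c,p}$; Lemma~\ref{C2_lem} then gives that each $\pi_{\hat G_n}$ has the \emph{same} growth rate under \emph{every} $\P\in\Pi$, namely $\tfrac12\int(\ell^\top c\ell-(\ell-\nabla\hat\phi_n)^\top c(\ell-\nabla\hat\phi_n))p$, so $\lambda_{\mathcal{E}}\ge\inf_{\P\in\Pi}g(V^{\pi_{\hat G_n}};\P)$ equals this quantity, and sending $n\to\infty$ closes the bound. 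Without this approximation by $C^2$-generated (hence model-free) portfolios, or some substitute valid on all of $\Pi$, equation~\eqref{lambda_EC} is not established. Relatedly, your variational-inequality idea is not how the ``Moreover'' claim is proved either: Theorem~\ref{concave_invariance_thm} shows $g(V^{\pi_G};\P)\ge g(V^{\pi_G};\tilde\P)$ for \emph{any} concave $G$ and $\P\in\tilde\Pi$, using that the drift process $\Gamma$ of a concavely generated portfolio is increasing together with cutoff functions, mollification, and the uniform ergodic convergence of Lemma~\ref{uniform_lemma}; no first-order optimality condition for $\hat\phi$ enters that argument.
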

 
This result establishes an equation for the optimal robust asymptotic growth rate in terms of a concave function $\hat G$. Using Theorem~\ref{partialE_thm} below we are able to find $C^2$ concave functions $\hat G_n$ such that $\lim_{n \to \infty} \inf_{\P \in \Pi} g(V^{\pi_{\hat G_n}};\P) = \lambda_{\mathcal{E}}$.  However, we are not able to show that the limiting portfolio $\pi_{\hat G}$ achieves growth rate $\lambda_{\mathcal{E}}$ for the class of measures $\Pi$; we are only able to show that $\pi_{\hat G}$ achieves at least growth rate $\lambda_{\mathcal{E}}$ for measures in the smaller class $\tilde \Pi$. Our proof of this lower bound relies on the drift assumption \eqref{drift_assumption} and it is unclear to us if it can be relaxed. We view this as a mild technical restriction and still refer to the portfolio $\pi_{\hat G}$ as robust growth-optimal.
 
The remainder of this section will be dedicated to proving this theorem. To accomplish this we will find the growth-optimal portfolio $\pi_{\hat G}$ under the measure $\tilde \P$ using the tractable properties of this measure. $\pi_{\hat G}$ is the candidate robust optimal portfolio and the growth rate it achieves under $\tilde \P$ will serve as an upper bound for $\lambda_{\mathcal{E}}$. To establish the lower bound we will use an approximation argument. This crucially relies on the following remarkable fact:\ a portfolio
generated by a $C^2$ function has the same growth rate for every $\P \in \Pi$. This ``built-in" robustness property of portfolios generated by $C^2$ functions was already identified in \cite{kardaras2018ergodic}, but nevertheless we present a self-contained proof in Lemma~\ref{C2_lem} below. By approximating arbitrary concave functions with $C^2$ concave functions we will be able to show, using Lemma~\ref{C2_lem}, that the robust asymptotic growth rate $\lambda_{\mathcal{E}}$ cannot be greater than the maximal growth rate achievable under $\tilde \P$. This will allow us to establish \eqref{lambda_EC} and \eqref{min_exp}. Using the aforementioned growth rate invariance property for portfolios generated by $C^2$ functions we establish in Theorem~\ref{concave_invariance_thm} below that a portfolio generated by a concave function achieves at least the same growth rate under any measure $\P \in \tilde \Pi$ as it does under $\tilde \P$. This will establish the second part of the theorem and complete the proof.

Recall the operator $L$ defined in \eqref{L_def}. A key property of concave functions that we use is that $-LG$ is a nonegative function for any $C^2$ concave function $G$. This allows us to establish Theorem~\ref{partialE_thm} below. The absence of this property for non-concave functions is one of the mathematical difficulties we face in solving problem \eqref{lambda_EC} for a more general class of long-only portfolios. On the other hand, from a practical standpoint, this is a desirable property for a portfolio to possess. Indeed, \cite[Theorem~3.7]{karatzas2017trading} shows that the $\Gamma$ process in \eqref{func_gen_def} is increasing for a concave generating function $G$. When $G$ is $C^2$ this is already visible by Remark~\ref{func_gen_remark} since $-LG$ is non-negative. 

Now we turn our attention to proving Theorem~\ref{main_thm}. To start, we establish an explicit form for the growth rate of portfolios under $\tilde \P$.
\begin{lem} \label{worst_case_growth} 
	Let $\pi = \pi(x)$ be a portfolio in feedback form. Then we have that 
	$$g(V^\pi;\tilde \P) = \frac{1}{2}\int_{\Delta^{d-1}_+} \(\ell^\top c \ell(x)- \(\ell(x)-\frac{\pi(x)}{x} \)^\top c(x) \(\ell(x)-\frac{\pi(x)}{x} \)\)p(x)dx$$
	where $\frac{\pi(x)}{x} =(\frac{\pi^1(x)}{x^1},\dots,\frac{\pi^d(x)}{x^d})$.
\end{lem}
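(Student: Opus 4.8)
The plan is to apply It\^o's formula to $\log V_T^\pi$ under $\tilde\P$, using the explicit dynamics \eqref{worst_case_dynamics}, then identify the long-run average of the drift via the ergodic property built into $\Pi$ while showing the martingale part is asymptotically negligible. Writing $\tau(x):=\pi(x)/x=(\pi^1(x)/x^1,\dots,\pi^d(x)/x^d)$ and combining \eqref{wealth} with \eqref{worst_case_dynamics} (recall $\sigma\sigma^\top=c$) gives
\[
d\log V_t^\pi=\Big(\tau^\top c\ell-\tfrac12\tau^\top c\tau\Big)(X_t)\,dt+\tau(X_t)^\top\sigma(X_t)\,dW_t .
\]
Expanding the quadratic form and using symmetry of $c$ yields the identity $\tau^\top c\ell-\tfrac12\tau^\top c\tau=\tfrac12\ell^\top c\ell-\tfrac12(\ell-\tau)^\top c(\ell-\tau)$, so after dividing by $T$,
\[
\frac1T\log V_T^\pi=B_T-A_T+\frac1T M_T,\qquad M_t:=\int_0^t\tau(X_s)^\top\sigma(X_s)\,dW_s ,
\]
with $B_T:=\tfrac1{2T}\int_0^T\ell^\top c\ell(X_t)\,dt\ge 0$, $A_T:=\tfrac1{2T}\int_0^T(\ell-\tau)^\top c(\ell-\tau)(X_t)\,dt\ge 0$, and $M$ a continuous local martingale with $\langle M\rangle_T=\int_0^T\tau^\top c\tau(X_t)\,dt$. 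All three are finite for each fixed $T$, $\tilde\P$-a.s., since the path lies in a compact subset of $\Delta^{d-1}_+$ on $[0,T]$, $\ell^\top c\ell$ is locally bounded, $\tau^\top c\tau$ is integrated against $dt$ along $X$ by $X$-integrability of $\pi$, and $(\ell-\tau)^\top c(\ell-\tau)\le 2\ell^\top c\ell+2\tau^\top c\tau$.

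Next I would invoke the ergodic property. Since $\tilde\P\in\Pi$, Definition~\ref{Pi_def}(ii) applied to the nonnegative, $p$-integrable (Assumption~\ref{finite_assumption}(i)) function $\ell^\top c\ell$ gives $B_T\to b:=\tfrac12\int_{\Delta^{d-1}_+}\ell^\top c\ell\,p<\infty$ $\tilde\P$-a.s.; applied to the bounded truncations $(\ell-\tau)^\top c(\ell-\tau)\wedge n$ and letting $n\to\infty$ by monotone convergence, it gives $A_T\to a:=\tfrac12\int_{\Delta^{d-1}_+}(\ell-\tau)^\top c(\ell-\tau)\,p\in[0,\infty]$ $\tilde\P$-a.s.\ as a genuine (possibly infinite) limit. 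Note $b-a$ is exactly the claimed right-hand side, with the convention that it equals $-\infty$ when $a=\infty$.

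The crux is to show $T^{-1}M_T$ is negligible, and this is the step I expect to be the main obstacle. From $\tau^\top c\tau\le 2\ell^\top c\ell+2(\ell-\tau)^\top c(\ell-\tau)$ one has the pathwise bound $\langle M\rangle_T\le 4T(A_T+B_T)$. If $\langle M\rangle_\infty<\infty$ then $M_T$ converges and $T^{-1}M_T\to 0$; otherwise, by the Dambis--Dubins--Schwarz theorem $M_T=\beta_{\langle M\rangle_T}$ for a Brownian motion $\beta$, and since $s^{-1}\sup_{0\le r\le s}|\beta_r|\to 0$ a.s.\ as $s\to\infty$ (from the strong law for Brownian motion), $|M_T|\le\sup_{0\le s\le\langle M\rangle_T}|\beta_s|=o(\langle M\rangle_T)=o\big(T(A_T+B_T)\big)$ $\tilde\P$-a.s. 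Hence if $a<\infty$ then $A_T+B_T\to a+b<\infty$, so $T^{-1}M_T\to 0$ and $T^{-1}\log V_T^\pi\to b-a$ $\tilde\P$-a.s.; if $a=\infty$ then $A_T+B_T\sim A_T\to\infty$, so $T^{-1}M_T=o(A_T)$ and $T^{-1}\log V_T^\pi=B_T-A_T+o(A_T)\to-\infty$ $\tilde\P$-a.s. Finally one reads off $g$: an a.s.\ finite limit $L$ of $T^{-1}\log V_T^\pi$ forces $\tilde\P(T^{-1}\log V_T^\pi\ge\gamma)\to1$ for $\gamma<L$ and $\to0$ for $\gamma>L$, so $g(V^\pi;\tilde\P)=L=b-a$; while $T^{-1}\log V_T^\pi\to-\infty$ a.s.\ forces the probability to tend to $0$ for every $\gamma$, so $g(V^\pi;\tilde\P)=-\infty=b-a$. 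In both cases the asserted formula holds.

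The anticipated difficulty is concentrated in the martingale estimate, in particular the degenerate regime $a=\infty$, where there is no finite ergodic limit for $\langle M\rangle_T/T$; the resolution is the observation that $\langle M\rangle_T$ still grows only like $T$ times a time-average, slowly enough that the sublinearity of Brownian motion kills the martingale term relative to the diverging drift $-A_T$. The other ingredients --- the It\^o computation and quadratic-form identity, the a.s.\ finiteness of $A_T$, $B_T$, $\langle M\rangle_T$ at fixed $T$, and the passage from pathwise asymptotics to the value of $g$ --- are routine.
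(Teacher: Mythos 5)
Your proposal is correct and follows essentially the same route as the paper: It\^o's formula under \eqref{worst_case_dynamics}, the quadratic-form identity, the ergodic property for the drift terms (with the degenerate case $\int(\ell-\pi/x)^\top c(\ell-\pi/x)p=\infty$ treated separately), and Dambis--Dubins--Schwarz plus the strong law of Brownian motion to kill the martingale part. The only difference is organizational: the paper splits the stochastic integral into $\int\ell^\top\sigma\,dW$ (whose normalized quadratic variation converges by Assumption~\ref{finite_assumption}(i)) and $-\int(\ell-\pi/x)^\top\sigma\,dW$ (whose $\liminf\langle M\rangle_T/T$ is forced to $+\infty$ via the truncation sets $K_N$ in the degenerate case), whereas you keep the single martingale $\int(\pi/x)^\top\sigma\,dW$ and control it through the pathwise bound $\langle M\rangle_T\le 4T(A_T+B_T)$ together with the sublinearity of Brownian motion --- both handle each case correctly.
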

\begin{proof}
	Let $h(x) = \frac{\pi(x)}{x}$. Under the measure $ \tilde \P$, the dynamics of $X_t$ are given by \eqref{worst_case_dynamics}. Applying It\^o's lemma to \eqref{wealth} we see that the normalized $\log$ wealth is given by
	\begin{align}
	\frac{1}{T}\log V_T^\pi & = \frac{1}{T}\int_0^T \(h^\top c \ell - \frac{1}{2} h^\top c h\)(X_t)dt + \frac{1}{T}\int_0^Th^\top\sigma(X_t) dW_t \nonumber\\
	& = \frac{1}{T}\int_0^T \frac{1}{2}\ell^\top c \ell(X_t)dt + \frac{1}{T}\int_{0}^T \ell^\top\sigma(X_t)dW_t -\frac{1}{T}\int_0^T \frac{1}{2}(\ell-h)^\top c (\ell-h)(X_t)dt \label{logv}\\
	& \hspace{6.6 cm} -\frac{1}{T}\int_0^T(\ell-h)^\top \sigma(X_t)dW_t \nonumber
	\end{align}
	where $\sigma(x)$ is the unique positive-definite square root of $c(x)$.
	We claim that 
	$$ \lim\limits_{T \to \infty}\(\frac{1}{T}\int_0^T \frac{1}{2}\ell^\top c \ell(X_t)dt + \frac{1}{T}\int_{0}^T \ell^\top\sigma(X_t)dW_t\) = \frac{1}{2}\int_{\Delta^{d-1}_+} \ell^\top c \ell p; \quad \tilde \P\text{-a.s.}$$
	Indeed, the first term converges to the required limit by Assumption \ref{finite_assumption}(i) and the ergodic property, while the second term $N_T := \int_0^T\ell^\top \sigma (X_t)dW_t$ is a local martingale with quadratic variation  $\langle N \rangle_T = \int_0^T \ell^\top c \ell (X_t)dt$. By the ergodic property $\lim_{T \to \infty} T^{-\alpha} \langle N \rangle_T = 0$ for any $\alpha > 1$, so it follows that $\lim_{T\to \infty} T^{-1}N_T = 0$; $ \tilde{\P}$-a.s. (see for example Lemma 1.3.2 in \cite{fernholz2002stochastic}). 
	
	Next let $M_T = \int_0^T (\ell - h)^\top\sigma(X_t)dW_t$. First suppose that $\int_{\Delta^{d-1}_+} (\ell-h)^\top c (\ell-h)p < \infty$. Then by the ergodic property $$\lim_{T\to \infty} T^{-1} \langle M \rangle_T = \int_{\Delta^{d-1}_+}(\ell-h)^\top c (\ell-h)p; \quad \tilde \P\text{-a.s.}$$ so by the same argument as for $N_T$ it follows that $\lim_{T \to \infty} T^{-1}M_T = 0$. Hence from \eqref{logv} we see that $$\lim_{T\to \infty}T^{-1}\log V_T^\pi = \frac{1}{2}\int_{\Delta^{d-1}_+} ( \ell^\top c \ell -  (\ell-h)^\top c (\ell-h))p; \quad \tilde \P\text{-a.s.}$$ which establishes the required growth rate.  
	
	Now suppose that $\int_{\Delta^{d-1}_+} (\ell-h)^\top c (\ell-h)p = \infty$, so that in particular $\langle M\rangle_T \to \infty$; $\tilde \P$-a.s.\ as $T \to \infty$. By the Dambis--Dubins--Schwarz Theorem there exists a standard Brownian motion $B$, such that $M_T = B_{\langle M \rangle_T}$.
	We then have
	\begin{align}
	\limsup\limits_{T \to \infty}\frac{1}{T}\log V_T^\pi & = \limsup\limits_{T \to \infty} \(\frac{1}{T}\int_0^T \frac{1}{2}\ell^\top c \ell(X_t)dt + \frac{1}{T}\int_{0}^T \ell^\top \sigma(X_t)dW_t - \frac{1}{T}\(B_{\langle M \rangle_T} + \frac{1}{2}\langle M \rangle_T\)\)\nonumber \\
	& = \frac{1}{2} \int_{\Delta^{d-1}_+} \ell^\top c \ell p(x)dx - \liminf\limits_{T \to \infty}\frac{\langle M \rangle_T}{T}\(\frac{B_{\langle M \rangle_T}}{\langle M \rangle_T} + \frac{1}{2}\)\nonumber\\
	& = \label{step4_liminf} \frac{1}{2}\int_{\Delta^{d-1}_+} \ell^\top c \ell p(x)dx -\frac{1}{2}\liminf\limits_{T\to \infty}\frac{\langle M \rangle_T}{T}
	\end{align}
	where we used the strong law of Brownian motion in the last step.
	Setting $$K_N = \{x \in \Delta^{d-1}_+: (\ell-h)^\top c(\ell-h)(x) \leq N \} $$ we have
	$$ \frac{\langle M \rangle_T}{T} = \frac{1}{T}\int_0^T (\ell-h)^\top c(\ell-h)(X_t)dt \geq \frac{1}{T}\int_0^T (\ell-h)^\top c(\ell-h)\I_{K_N}(X_t)dt.$$
	Here $(\ell-h)^\top c(\ell-h)\I_{K_N}$ is a bounded function, so the ergodic property yields
	$$\liminf\limits_{T \to \infty} \frac{\langle M \rangle_T}{T} \geq \int_{\Delta^{d-1}_+} (\ell-h)^\top c(\ell-h)p\I_{K_N}.$$  Now taking $N \to \infty$ we see from \eqref{step4_liminf} that
	$$\limsup\limits_{T \to \infty} \frac{1}{T}\log V_T^\pi \leq \frac{1}{2}\int_{\Delta^{d-1}_+} (\ell^\top c \ell - (\ell-h)^\top c(\ell-h))p = - \infty.$$
	It follows that $g(V^\pi;\tilde \P) = -\infty$ in this case, completing the proof.
\end{proof}
Next, using this lemma we are able to establish a useful characterization of the growth rate for the portfolio $\pi_G$ whenever $G$ is twice continuously differentiable. Note that the following lemma does not require $G$ to be a concave function.
\begin{lem} \label{C2_lem}
	Let $G \in C^2(\Delta^{d-1}_+; (0,\infty))$ be given such that $\int_{\Delta^{d-1}_+} \(\frac{LG}{G}\)^+p  < \infty$. Then we have for every $\P \in \Pi$ that 
	\begin{equation} \label{IBP_growth} g(V^{\pi_G};\P) = \int_{\Delta^{d-1}_+} \frac{-LG}{G}p = \frac{1}{2} \int_{\Delta^{d-1}_+} \(\ell^\top c \ell - (\ell - \nabla \phi)^\top c (\ell - \nabla \phi)\)p
	\end{equation}
	where $\phi = \log G$ and the operator $L$ is given by \eqref{L_def}.
\end{lem}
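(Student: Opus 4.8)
The plan is to obtain an explicit pathwise formula for $T^{-1}\log V_T^{\pi_G}$ from the master formula and then pass to the limit $T\to\infty$ using the two structural properties shared by all $\P\in\Pi$: the ergodic property of Definition~\ref{Pi_def}(ii) and the tightness of the time-marginals in Definition~\ref{Pi_def}(iii). Since $G\in C^{2}(\Delta^{d-1}_+;(0,\infty))$, It\^o's formula (using $\langle X\rangle_t=\int_0^t c(X_s)\,ds$ under any $\P\in\Pi$) shows that $\log G(X)$ is a semimartingale, and Remark~\ref{func_gen_remark} gives the representation $\log V_T^{\pi_G}=\log G(X_T)+\int_0^T\frac{-LG}{G}(X_t)\,dt$, the integrand being continuous in $t$ because $X$ stays in the open simplex. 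Dividing by $T$, it suffices to analyze $T^{-1}\log G(X_T)$ and $T^{-1}\int_0^T\frac{-LG}{G}(X_t)\,dt$ separately.

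For the first term I would use tightness: given $\epsilon>0$ pick a compact $K\subset\Delta^{d-1}_+$ with $\P(X_T\in K)\ge 1-\epsilon$ for all $T$; since $\log G$ is bounded on $K$ by continuity, for all large $T$ one has $\{|T^{-1}\log G(X_T)|>\delta\}\subseteq\{X_T\notin K\}$, so $\limsup_{T}\P(|T^{-1}\log G(X_T)|>\delta)\le\epsilon$, and letting $\epsilon\downarrow 0$ yields $T^{-1}\log G(X_T)\to 0$ in probability. For the second term I would apply the ergodic property to $h=\frac{LG}{G}$, which is an admissible test function in Definition~\ref{Pi_def}(ii) precisely because $\int_{\Delta^{d-1}_+}(\frac{LG}{G})^+p<\infty$ is assumed; this gives $T^{-1}\int_0^T\frac{LG}{G}(X_t)\,dt\to\int_{\Delta^{d-1}_+}\frac{LG}{G}p$ $\P$-a.s., a well-defined limit in $[-\infty,\infty)$, hence $T^{-1}\int_0^T\frac{-LG}{G}(X_t)\,dt\to\int_{\Delta^{d-1}_+}\frac{-LG}{G}p\in(-\infty,\infty]$. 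Adding the two contributions, $T^{-1}\log V_T^{\pi_G}\to\int_{\Delta^{d-1}_+}\frac{-LG}{G}p$ in probability; since convergence in probability to a (possibly infinite) constant pins down the asymptotic growth rate, this establishes $g(V^{\pi_G};\P)=\int_{\Delta^{d-1}_+}\frac{-LG}{G}p$ for every $\P\in\Pi$, which is the first equality and is exactly the ``built-in'' robustness of $C^2$-generated portfolios.

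For the second equality I would avoid integrating by parts directly and instead specialize the identity just proved to $\P=\tilde\P$, which lies in $\Pi$ by Assumption~\ref{finite_assumption}(iii). Because $G$ is $C^2$, $\pi_G$ is in feedback form with $\frac{\pi_G^i(x)}{x^i}=\partial_i\log G(x)+1-\sum_{j}x^j\partial_j\log G(x)$ (Remark~\ref{func_gen_remark}), that is $\frac{\pi_G(x)}{x}=\nabla\phi(x)+\big(1-x^\top\nabla\phi(x)\big)\boldsymbol{1}$ with $\phi=\log G$, so Lemma~\ref{worst_case_growth} applies and gives $g(V^{\pi_G};\tilde\P)=\frac12\int_{\Delta^{d-1}_+}\big(\ell^\top c\ell-(\ell-\tfrac{\pi_G}{x})^\top c(\ell-\tfrac{\pi_G}{x})\big)p$. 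Since $\boldsymbol{1}\in\mathrm{Ker}(c(x))$, the $\boldsymbol{1}$-direction of $\frac{\pi_G}{x}$ drops out of the quadratic form, so $(\ell-\tfrac{\pi_G}{x})^\top c(\ell-\tfrac{\pi_G}{x})=(\ell-\nabla\phi)^\top c(\ell-\nabla\phi)$, and equating the two expressions just obtained for $g(V^{\pi_G};\tilde\P)$ yields the second equality. (Alternatively, the identity follows directly from $\frac{-LG}{G}=-L\phi-\tfrac12\nabla\phi^\top c\nabla\phi$ and the relation $\int_{\Delta^{d-1}_+}(L\phi)p=-\int_{\Delta^{d-1}_+}\nabla\phi^\top c\ell\,p$, obtained by integration by parts using $c\ell=\tfrac12(\diver c+c\nabla\log p)$ and $c\boldsymbol{1}=0$; but this requires checking that the boundary terms on $\partial\Delta^{d-1}_+$ vanish, a subtlety that the route through Lemma~\ref{worst_case_growth} sidesteps.)

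I expect the main obstacle to be a matter of care rather than depth: reconciling the almost-sure ergodic convergence of the drift term with the in-probability statement defining $g(V;\P)$ while allowing for an infinite value, and — only if one insists on the direct proof of the second identity — controlling the boundary behavior of $c$ and $p$ on $\partial\Delta^{d-1}_+$, which is precisely why I would route that identity through Lemma~\ref{worst_case_growth} instead.
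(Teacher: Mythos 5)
Your proposal is correct and follows essentially the same route as the paper: the master formula decomposition $\log V_T^{\pi_G}=\log G(X_T)+\int_0^T\frac{-LG}{G}(X_t)\,dt$, tightness to kill $T^{-1}\log G(X_T)$ in probability, the ergodic property (enabled by $\int(\frac{LG}{G})^+p<\infty$) for the drift term, and then the second identity obtained by comparing with Lemma~\ref{worst_case_growth} at $\P=\tilde\P$ rather than by direct integration by parts. The only additions beyond the paper's argument are careful but correct elaborations (the kernel cancellation of the $\boldsymbol{1}$-direction and the treatment of a possibly infinite limit).
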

\begin{proof}
	We fix $\P \in \Pi$ and note that by Remark \ref{func_gen_remark} we have the representation
	$$ \log V_T^{\pi_G} = \log G(X_T) + \int_0^T \frac{-LG}{G}(X_t)dt$$
	since $G$ is $C^2$.
	The ergodic property yields
	$$\lim_{T\to \infty}\frac{1}{T}\int_0^T \frac{-LG}{G}(X_t)dt = \int_{\Delta^{d-1}_+} \frac{-LG}{G}p; \quad \P\text{-a.s.}$$
	so it just remains to examine the term $\log G(X_T)$. We claim that $T^{-1}\log G(X_T) \to 0$ in probability as $T \to \infty$. Indeed for any $\delta > 0$, by Definition~\ref{Pi_def}(iii) we can find a compact set $K_\delta$ such that $\P(X_T \in K_\delta) \geq 1-\delta$ for every $T \geq 0$. We then see for every $\delta,\epsilon> 0$ that
	\begin{align*}
	\P\bigg(\frac{|\log G(X_T)|}{T} > \epsilon\bigg) & \leq  \P\bigg(\frac{|\log G(X_T)|}{T} > \epsilon ;X_T \in K_\delta\bigg) +  \delta.
	\end{align*}  
	By continuity $\log G$ is bounded on $K_\delta$, so the first term on the right hand side vanishes when we send $T \to \infty$. Thus, we obtain 
	$\lim_{T \to \infty} \P(T^{-1}|\log G(X_T)| > \epsilon) \leq \delta$ for every $\delta,\epsilon > 0$. Now sending $\delta \to 0$ yields the claimed convergence in probability and proves the first equality in \eqref{IBP_growth}
\begin{equation} \label{gr1}
g(V^{\pi_G};\P) = \int_{\Delta^{d-1}_+} \frac{-LG}{G}p
\end{equation} for every $\P \in \Pi$. To establish the second equality in \eqref{IBP_growth} we note that by Lemma \ref{worst_case_growth} together with the master formula Theorem \ref{master_formula} we have
	\begin{equation}\label{gr2} g(V^{\pi_G};\tilde \P) = \frac{1}{2} \int_{\Delta^{d-1}_+} \(\ell^\top c \ell - (\ell - \nabla \phi)^\top c (\ell - \nabla \phi)\)p.
	\end{equation}
	Comparing \eqref{gr1} when $\P = \tilde \P$  with \eqref{gr2} completes the proof.
\end{proof}
\begin{remark}
	\begin{enumerate}
		\item As previously mentioned, the assumptions of the previous lemma do not require $G$ to be a concave function. The assumptions, however, do hold when $G$ is concave, since under concavity we have $-LG \geq 0$.
		\item Note that this proof relies on the asymptotic growth rate being defined via limits in probability. Indeed if $\log G$ is unbounded near the boundary of the simplex it is not clear whether or not $\log G (X_T)/T \to 0$ as $T \to \infty$ almost surely for every $\P \in \Pi$. This consideration explains the chosen asymptotic growth rate definition rather than one involving almost sure limits.
		\item 
	For any $G \in C_c^\infty(\Delta^{d-1}_+;(0,\infty))$ it follows from integration by parts that 
	$$ \int_{\Delta^{d-1}_+} \frac{-LG}{G}p = \frac{1}{2} \int_{\Delta^{d-1}_+} \(\ell^\top c \ell - (\ell - \nabla\phi)^\top c (\ell - \nabla \phi)\)p.$$ However it is not immediately clear why this identity would extend to the class of $C^2$ functions. It is the probabilistic Assumption \ref{finite_assumption}(iii), which indirectly influences the boundary behaviour of $c$ and $p$ and allows us to conclude that this integration by parts formula holds for all $C^2$ functions satisfying $\int_{\Delta^{d-1}_+} \(\frac{LG}{G}\)^+p  < \infty$.
	\end{enumerate}
\end{remark}

Note that Lemma \ref{worst_case_growth} together with the master formula indicates that maximizing the growth rate under $\tilde \P$ over all portfolios generated by concave functions is equivalent to the minimization problem \eqref{min_exp}.
To study this problem we introduce the space 
\[\mathcal{H}^{c,p} = \left\{v: \Delta^{d-1}_+ \to \R^d \quad \Big| \quad \|v\|^2_{\mathcal{H}^{c,p}} := \int_{\Delta^{d-1}_+ }v^\top c v p < \infty\right\}\bigg / \sim
\]
where we say that $v \sim w$ if and only if there exists a measurable function $h: \Delta^{d-1}_+ \to \R$ such that $v(x) = w(x) + h(x)\boldsymbol{1}$ for almost every $x$. It is clear that this is a Hilbert space when equipped with the inner product
$$(v,w)_{\mathcal{H}^{c,p}} := \int_{\Delta^{d-1}_+} v^\top c wp$$ by the nondegeneracy of $cp$ guaranteed by Assumption \ref{input_assumption}. With this new notation the minimization problem \eqref{min_exp} becomes
\begin{equation}\label{EC} 
\inf_{ \nabla \phi \in \partial \mathcal{E}} \|\ell - \nabla \phi\|^2_{\mathcal{H}^{c,p}}
\end{equation}
where $\partial \mathcal{E} \subseteq \mathcal{H}^{c,p}$ consists of those maps that arise as supergradients of exponentially concave functions,
\begin{equation} \label{partialE} \partial \mathcal{E} := \{v \in \mathcal{H}^{c,p}: v = \nabla \phi \text{ for some } \phi \in \mathcal{E}\}.
\end{equation}
The use of the symbol ``$\partial$" is a mnemonic for differential. To prove existence and uniqueness for \eqref{EC} we first need a technical lemma regarding exponentially concave functions. We start with a definition:

 \begin{defn} \label{MCM_defn}
	Let $U \subseteq \R^d$ and let $T: U \rightrightarrows  \R^d$ be a multi-valued map taking non-empty values. We say that $T$ is multiplicatively cyclically monotone (MCM) if for every $m \in \N$ and for all $\{x_i\}_{i=0}^m \subseteq U$ with $ x_0 = x_m$ (called a cycle) we have for all values $y_i \in T(x_i)$ that 
	\begin{enumerate}[label = ({\roman*})]
		\item $\langle y_i, x_{i+1}-x_i \rangle \geq -1$ for all $i = 0,\dots,m-1$,
		\item $\prod_{i=0}^{m-1} \(1 + \langle y_i, x_{i+1}-x_i \rangle\)\geq 1$.
	\end{enumerate}
\end{defn}
It turns out that superdifferentials of exponentially concave functions satisfy MCM and, conversely, that multi-valued maps possessing the MCM property are subsets of superdifferentials of exponentially concave functions. This was proven in \cite{pal2016geometry}, however we will need a slight refinement of this result, which only assumes that the property holds outside a Lebesgue null set.

\begin{lem} {} \label{MCM} {}
	\begin{enumerate} [label = (\arabic*)]
		\item Let $\phi: \Delta^{d-1}_+ \to \R$ be an exponentially concave function. Then the superdifferential $\partial \phi$ is MCM.
		\item 	Let $T: \Delta^{d-1}_+ \to \R^d$ be a multi-valued map taking non-empty values. Suppose there exists a Lebesgue null set $N \subseteq \Delta^{d-1}_+$ such that $T$ is MCM on $\Delta^{d-1}_+ \setminus N$. Then there exists an exponentially concave function $\phi: \Delta^{d-1}_+ \to \R$ such that $T(x) \subseteq \partial \phi(x)$ for every $x \in \Delta^{d-1}_+ \setminus N$.
	\end{enumerate}
\end{lem}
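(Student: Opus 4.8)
The plan is to reduce both parts to the corresponding statements in \cite{pal2016geometry}, the added value being the handling of the exceptional null set $N$ in part (2) and, to a lesser extent, the verification that the \emph{full} superdifferential (not merely the gradient on the differentiability set) is MCM in part (1). For part (1), I would argue as follows. Since $\phi$ is exponentially concave, $G := e^\phi$ is concave on $\Delta^{d-1}_+$, hence locally Lipschitz and differentiable off a Lebesgue null set; moreover $\partial \phi(x) = \frac{1}{G(x)}\partial G(x)$, where $\partial G$ is the usual concave superdifferential (this follows from the chain rule for the increasing $C^1$ map $t \mapsto \log t$ composed with $G$). Given a cycle $x_0,\dots,x_m = x_0$ and choices $y_i \in \partial \phi(x_i)$, concavity of $G$ gives $G(x_{i+1}) \le G(x_i) + \langle G(x_i) y_i, x_{i+1}-x_i\rangle = G(x_i)\bigl(1 + \langle y_i, x_{i+1}-x_i\rangle\bigr)$. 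Since $G > 0$, this immediately yields property (i), namely $1 + \langle y_i, x_{i+1}-x_i\rangle \ge G(x_{i+1})/G(x_i) > 0 \ge -1$ — actually $>0$, which is stronger. Multiplying these inequalities around the cycle makes the telescoping product of the $G(x_i)$ collapse to $1$, giving $\prod_{i=0}^{m-1}\bigl(1+\langle y_i, x_{i+1}-x_i\rangle\bigr) \ge \prod_{i=0}^{m-1} G(x_{i+1})/G(x_i) = 1$, which is property (ii). This is the ``easy'' direction and requires no appeal to \cite{pal2016geometry} beyond the definitions.

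For part (2), the strategy is to first construct, out of the MCM data on $\Delta^{d-1}_+ \setminus N$, a candidate exponentially concave function via the Rockafellar-type integration formula used in \cite{pal2016geometry}: fix a base point $x_* \in \Delta^{d-1}_+ \setminus N$ and set
\[
G(x) := \inf\Bigl\{ \prod_{i=0}^{m-1}\bigl(1 + \langle y_i, x_{i+1} - x_i\rangle\bigr) : m \in \N,\ x_0 = x_*,\ x_m = x,\ x_1,\dots,x_{m-1} \in \Delta^{d-1}_+ \setminus N,\ y_i \in T(x_i)\Bigr\},
\]
with the convention that the last vertex $x_m = x$ is allowed to lie in $N$ but carries no $y_m$ factor. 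The MCM property on $\Delta^{d-1}_+ \setminus N$ is exactly what guarantees $0 < G(x) < \infty$ and that $G$ is concave: concavity follows because $G$ is an infimum of affine-in-$x$ expressions (each fixed chain $x_0,\dots,x_{m-1},x$ contributes a function of $x$ of the form $c(1 + \langle y_{m-1}, x - x_{m-1}\rangle)$ with $c = \prod_{i<m-1}(\cdots) > 0$, which is affine in $x$), and positivity/finiteness follows from cycling back through $x_*$. Then $\phi := \log G$ is exponentially concave, and one checks $T(x) \subseteq \partial\phi(x)$ for $x \notin N$ by the usual argument: for such $x$ and $y \in T(x)$, prepending the step $(x,y)$ to any admissible chain to an arbitrary $z \in \Delta^{d-1}_+ \setminus N$ shows $G(z) \le G(x)(1 + \langle y, z - x\rangle)$, i.e. $G(z)/G(x) \le 1 + \langle y, z-x\rangle$; taking logs and using $\log t \le t - 1$ gives $\phi(z) - \phi(x) \le \langle y, z - x\rangle$ for all $z \notin N$, and since $N$ is Lebesgue-null and $\phi$ is continuous (being concave after exponentiation), this extends to all $z \in \Delta^{d-1}_+$ by density, which is precisely $y \in \partial\phi(x)$.

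The main obstacle — and the one genuinely new point relative to \cite{pal2016geometry} — is justifying that restricting all intermediate chain vertices to $\Delta^{d-1}_+ \setminus N$ does not damage the construction: one must confirm that (a) the infimum over such restricted chains is still finite and positive (this is where one uses that $x_* \notin N$ and that $\Delta^{d-1}_+ \setminus N$ is dense and path-connected-through-chains, so every $x$ is reachable), and (b) the resulting $G$ is still concave on all of $\Delta^{d-1}_+$, which I would get by noting that each restricted chain still produces a globally affine function of the terminal variable $x$ regardless of where $x$ sits, so the infimum of affine functions is concave everywhere, and then (c) the final extension of the inequality $\phi(z) - \phi(x) \le \langle y, z-x\rangle$ from $z \notin N$ to all $z$ uses only continuity of $\phi$. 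I expect step (a) to require a small lemma that any two points of $\Delta^{d-1}_+$ can be joined by a chain avoiding $N$ with controlled increments (true since $\Delta^{d-1}_+ \setminus N$ is dense in the connected open set $\Delta^{d-1}_+$), and step (b) to be essentially immediate once phrased correctly. The rest is a routine transcription of the Pal–Wong argument.
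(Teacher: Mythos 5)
Your proposal follows essentially the same route as the paper: part (2) is proved by exactly the Pal--Wong integration construction $\Phi(z)=\inf\prod_j(1+\langle y_j,x_{j+1}-x_j\rangle)$ with intermediate chain vertices restricted to $\Delta^{d-1}_+\setminus N$, positivity obtained from nonnegativity on the dense set plus concavity and continuity, and the supergradient inequality extended from $z\notin N$ to all $z$ by continuity (the paper phrases the last step slightly differently, showing $\Phi(z)y\in\partial\Phi(z)$ directly rather than taking logarithms and using $\log(1+s)\le s$, but the two are equivalent). For part (1) the paper simply cites Proposition 4 of Pal--Wong, whereas you give a direct argument; that argument is fine except that the step $\partial\phi(x)\subseteq G(x)^{-1}\partial G(x)$ is the \emph{nontrivial} direction of the chain-rule identity (from $y\in\partial(\log G)(x)$ one only gets $G(z)\le G(x)e^{\langle y,z-x\rangle}$ directly), so you should justify it, e.g.\ via the representation of the superdifferential of a finite concave function on an open set as the convex hull of limits of gradients at nearby points of differentiability (Rockafellar, Theorem 25.6), where the ordinary chain rule applies.
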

\begin{proof}
	We refer the reader to Proposition 4 in \cite{pal2016geometry} for the proof of (1) and note that the proof of (2) presented here is a minor modification of that proof as well. 
 Fix $x_0 \in \Delta^{d-1}_+ \setminus N$ and 
	define for $z \in \Delta_+^{d-1}$
	$$\Phi(z) := \inf\left\{\prod_{j=0}^{m-1} (1 + \langle y_j,x_{j+1} - x_j\rangle\right\}
	$$
	where the infimum is taken over all $m \in \N$, $\{x_j\}_{j=1}^{m}$, and $\{y_j\}_{j=0}^{m-1}$ such that $x_j \in \Delta^{d-1}_+\setminus N$ for $j \in \{1,\dots,m-1\}$, $x_m = z$ and $y_j \in T(x_j)$ for $j \in \{0,\dots,m-1\}$.
	Since $\Phi$ is the pointwise infimum of a family of affine functions it is concave on $\Delta_+^{d-1}$. By condition (i) of MCM we see that $\Phi$ is nonnegative on $\Delta^{d-1}_+ \setminus N$ and by condition (ii) it is clear that $\Phi(x_0) =1$. Hence by concavity and continuity of $\Phi$ we must have that $\Phi$ is strictly positive everywhere on $\Delta_+^{d-1} $.
	
 Now let $z \in \Delta^{d-1}_+ \setminus N$ and $q \in \Delta_+^{d-1}$ be given and choose $\alpha$ such that $\alpha > \Phi(z)$. Then by definition of $\Phi$ we can find an $m \geq 1$, $\{x_j\}_{j=1}^m \subseteq \Delta^{d-1}_+ \setminus N$ with $x_m = z$ and $y_j \in T(x_j)$ for $j = 0,\dots,m-1$ such that
	$$\prod\limits_{j=0}^{m-1}\(1 + \langle y_j, x_{j+1} - x_j \rangle\) < \alpha.$$ Set  $x_{m+1} = q$ and let $y_m \in T(z)$ be arbitrary. It then follows that
	\begin{equation} \label{Phi_ineq}
	\Phi(q)  \leq \prod\limits_{j=0}^{m} \(1 + \langle y_j, x_{j+1} - x_j \rangle\) < \alpha(1 + \langle y_m,q-z\rangle).
	\end{equation}
	To obtain \eqref{Phi_ineq} we used the definition of $\Phi$, the fact that $z \not \in N$ and that $1+\langle y_m,q - z\rangle \geq 0$. Note that for $q \not \in N$ this last property follows directly from property (i) of Definition~\ref{MCM_defn}, and so it extends to all $q \in \Delta^{d-1}_+$ by approximation since $N$ is a Lebesgue null-set.
	Sending $\alpha \downarrow \Phi(z)$ in \eqref{Phi_ineq} shows that
	$$\Phi(q)- \Phi(z) \leq  \langle \Phi(z)y_m,q-z \rangle.$$
	Since this holds for every $q \in \Delta^{d-1}_+$ and we have that $\Phi > 0$ on $\Delta_+^{d-1}$ we see by definition of superdifferential that $ \Phi(z)y_m \in \partial \Phi(z)$ and hence $y_m \in \partial\log \Phi(z)$ for every $z \in \Delta^{d-1}_+ \setminus N$. Since $y_m \in T(z)$ was arbitrary it follows that $T(z) \subseteq \partial\log \Phi(z)$ for every $z \in \Delta^{d-1}_+ \setminus N$.  Setting $\phi = \log \Phi$ completes the proof.
\end{proof}

With these preparations in hand we prove some properties regarding the set $\partial \mathcal{E}$.

\begin{thm} \label{partialE_thm}
	The set $\partial \mathcal{E}$ defined by \eqref{partialE} is a closed, convex and bounded set in $\mathcal{H}^{c,p}$. Moreover the set $$\partial \mathcal{E} \cap C^1 := \{v \in \mathcal{H}^{c,p} :v = \nabla \phi \text{ for some } C^2 \text{ exponentially concave function } \phi\}$$ is dense in $\partial \mathcal{E}$.
	
\end{thm}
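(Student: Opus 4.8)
The plan is to prove the four assertions---convexity, boundedness, closedness and density---largely separately, with one pointwise estimate carrying the bulk of the work. \emph{Convexity.} Given $\nabla\phi_0,\nabla\phi_1\in\partial\mathcal E$ and $t\in[0,1]$, I would form the multi-valued map $T(x):=t\,\partial\phi_0(x)+(1-t)\,\partial\phi_1(x)$. Each $\partial\phi_k$ is MCM by Lemma~\ref{MCM}(1), and $T$ inherits this: condition (i) of Definition~\ref{MCM_defn} is stable under convex combinations, and for (ii) one applies the weighted AM--GM inequality $t a_i+(1-t)b_i\ge a_i^{t}b_i^{1-t}$ termwise, with $a_i:=1+\langle y_i^0,x_{i+1}-x_i\rangle\ge 0$ and $b_i:=1+\langle y_i^1,x_{i+1}-x_i\rangle\ge 0$, to obtain $\prod_i\bigl(1+\langle y_i,x_{i+1}-x_i\rangle\bigr)\ge\bigl(\prod_i a_i\bigr)^{t}\bigl(\prod_i b_i\bigr)^{1-t}\ge 1$. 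Lemma~\ref{MCM}(2) then produces $\phi\in\mathcal E$ with $T(x)\subseteq\partial\phi(x)$ for every $x$; since concave functions are a.e.\ differentiable, $t\nabla\phi_0+(1-t)\nabla\phi_1=\nabla\phi$ a.e., so this lies in $\partial\mathcal E$.

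\emph{Pointwise estimate and boundedness.} The key estimate is that for every $\phi\in\mathcal E$ and a.e.\ $x$, with $G=e^\phi$, $\theta=\nabla\phi(x)$ and $\tilde\theta:=\theta+(1-\langle x,\theta\rangle)\boldsymbol 1$, one has $\tilde\theta_j=\pi_G^j(x)/x^j\ge 0$ for each $j$ (the sign coming from the two-point-cycle case of MCM(i), namely $\langle\theta,q-x\rangle\ge-1$ for all $q\in\Delta^{d-1}_+$, upon letting $q$ tend to the $j$th vertex) and $\langle x,\tilde\theta\rangle=1$; thus $\tilde\theta=\sum_j\pi_G^j(x)\,(e_j/x^j)$ is a convex combination of the vectors $e_j/x^j$. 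Since $c(x)\boldsymbol 1=0$ we have $\theta^\top c(x)\theta=\tilde\theta^\top c(x)\tilde\theta$, and convexity of $v\mapsto v^\top c(x)v$ gives
\[
\nabla\phi(x)^\top c(x)\,\nabla\phi(x)\ \le\ \sum_{j=1}^d\pi_G^j(x)\,\frac{c_{jj}(x)}{(x^j)^2}\ \le\ \sum_{j=1}^d\frac{c_{jj}(x)}{(x^j)^2}\ =:\ \Psi(x).
\]
Finiteness of $\int_{\Delta^{d-1}_+}\Psi\,p$ follows by applying Lemma~\ref{C2_lem} to the affine function $G(x)=x^j$, which is $C^2$, concave and positive with $L(x^j)\equiv 0$: the lemma gives $0=\int_{\Delta^{d-1}_+}\tfrac{-L(x^j)}{x^j}\,p=\tfrac12\int_{\Delta^{d-1}_+}\bigl(\ell^\top c\ell-(\ell-\nabla\log x^j)^\top c(\ell-\nabla\log x^j)\bigr)p$, which forces $\nabla\log x^j\in\mathcal H^{c,p}$ and, after expanding, $\|\nabla\log x^j\|^2_{\mathcal H^{c,p}}=2(\ell,\nabla\log x^j)_{\mathcal H^{c,p}}\le 4\|\ell\|^2_{\mathcal H^{c,p}}$; summing over $j$ yields $\int\Psi\,p=\sum_j\|\nabla\log x^j\|^2_{\mathcal H^{c,p}}\le 4d\|\ell\|^2_{\mathcal H^{c,p}}<\infty$. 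Boundedness of $\partial\mathcal E$ is then immediate, since $\|\nabla\phi\|^2_{\mathcal H^{c,p}}\le\int\Psi\,p$ for every $\phi\in\mathcal E$.

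\emph{Closedness and density.} For closedness, if $\nabla\phi_n\to v$ in $\mathcal H^{c,p}$ I pass to a subsequence with $(\nabla\phi_n-v)^\top c(\nabla\phi_n-v)\to 0$ Lebesgue-a.e.; since $\ker c(x)=\R\boldsymbol 1$ this means $\nabla\phi_n(x)\to v(x)$ a.e.\ modulo $\boldsymbol 1$, and as the pairing $\langle\,\cdot\,,x_{i+1}-x_i\rangle$ is unchanged by $\boldsymbol 1$-shifts (both points lie on the simplex), passing to the limit in the MCM inequalities for $\partial\phi_n$ shows $x\mapsto\{v(x)\}$ is MCM off a Lebesgue-null set; Lemma~\ref{MCM}(2) then gives $\phi\in\mathcal E$ with $v=\nabla\phi$ a.e., so $v\in\partial\mathcal E$. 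For density, given $\phi\in\mathcal E$ I work in the chart $E\subset\R^{d-1}$, where $g:=e^\phi$ is concave and positive on the bounded open convex set $E$. Fixing an interior point $x_0$, the dilations $g^\lambda:=g\bigl(x_0+\lambda(\,\cdot\,-x_0)\bigr)$, $\lambda\uparrow 1$, are concave and positive on an open convex set strictly containing $\overline E$, so their mollifications are $C^\infty$, concave and positive on a neighbourhood of $\overline E$; a diagonal sequence converges to $g$ locally uniformly on $E$, whence (by the standard fact for concave functions) the gradients, and then their logarithmic gradients, converge a.e. Transporting back produces $C^2$ exponentially concave $\phi_n$ with $\nabla\phi_n\to\nabla\phi$ a.e.\ modulo $\boldsymbol 1$, and since $(\nabla\phi_n-\nabla\phi)^\top c(\nabla\phi_n-\nabla\phi)\le 4\Psi\in L^1(p\,dx)$ by the pointwise estimate, dominated convergence gives $\nabla\phi_n\to\nabla\phi$ in $\mathcal H^{c,p}$; hence $\partial\mathcal E\cap C^1$ is dense in $\partial\mathcal E$.

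I expect the density step to be the main obstacle: one must mollify on a bounded domain without destroying concavity or positivity---resolved by dilating toward an interior point before smoothing---and, more substantially, promote the a.e.\ convergence of gradients to convergence in $\|\cdot\|_{\mathcal H^{c,p}}$. That promotion hinges entirely on the uniform domination by $\Psi$, whose integrability in turn rests on the somewhat unexpected device of feeding the coordinate functions $x\mapsto x^j$ into Lemma~\ref{C2_lem}.
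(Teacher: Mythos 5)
Your proof is correct, and although the skeleton (MCM for closedness, dilation toward an interior point followed by mollification for the smooth approximants) coincides with the paper's, two steps take a genuinely different route. For boundedness the paper only bounds $\partial\mathcal{E}\cap C^1$ directly, via the identity of Lemma~\ref{C2_lem} together with $-LG/G\ge 0$ (giving $\|\nabla\phi\|_{\mathcal{H}^{c,p}}\le 2\|\ell\|_{\mathcal{H}^{c,p}}$), and recovers boundedness of all of $\partial\mathcal{E}$ only after the density step. You instead establish the pointwise domination $\nabla\phi^\top c\,\nabla\phi\le\sum_j c_{jj}/(x^j)^2=:\Psi$ for \emph{every} $\phi\in\mathcal{E}$, by writing $\nabla\phi$ modulo $\boldsymbol{1}$ as the convex combination $\sum_j\pi_G^j(x)\,e_j/x^j$ (the long-only property of concave generators) and using convexity of the quadratic form $v\mapsto v^\top c(x)v$; the integrability of $\Psi p$, which you obtain by feeding the coordinate functions into Lemma~\ref{C2_lem}, is precisely the diagonal case of the paper's Lemma~\ref{dom_conv} (proved there independently of this theorem, so there is no circularity). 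This buys you two things at once: boundedness of the whole set, and an integrable dominating function that upgrades the a.e.\ convergence of the mollified gradients to $\mathcal{H}^{c,p}$ convergence by dominated convergence. The paper instead extracts a weakly convergent subsequence from the norm-bounded family $\{\nabla\phi_n\}$ and invokes the coincidence of weak and strong closures of the convex set $\partial\mathcal{E}\cap C^1$. Your domination device is in fact the one the paper deploys later, in Proposition~\ref{extreme_points} and Lemma~\ref{cutoff_lemma}, but there only for uniformly bounded generating functions; observing that the long-only weights make it available uniformly over all of $\mathcal{E}$ is a genuine simplification. The direct MCM/weighted AM--GM proof of convexity is also sound (the paper simply cites convexity of $\mathcal{E}$), and your care with $\boldsymbol{1}$-shifts in the closedness step is, if anything, more precise than the paper's treatment.
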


\begin{proof}
	It is established in Proposition 5 of \cite{alirezaei2018exponentially} that $\mathcal{E}$, the set of exponentially concave functions, forms a convex set. From this it follows that $\partial \mathcal{E}$ is convex.
	
	To prove closedness, suppose that $\{\nabla \phi_n\}_{n \in \N}$ is a sequence in $\partial \mathcal{E}$ converging to some $v$ in $\mathcal{H}^{c,p}$. Then
	$$ \lim\limits_{n \to \infty}\int_{\Delta^{d-1}_+} (\nabla \phi_n - v)^\top c(\nabla \phi_n - v)p = 0$$ so that there is a subsequence $(n_k)_{k\in \N}$ such that $\nabla \phi_{n_k}$  converges to $v$ almost everywhere as $k \to \infty$.  We will verify that $v$ possesses the MCM property and so must be the supergradient of an exponentially concave function by Lemma \ref{MCM}.
	
	Let $N$ be the Lebesgue null set where the convergence does not take place. For any cycle $\{x_{j}\}_{j=0}^{m}\subseteq \Delta_+^{d-1}$ we have that $\langle \nabla \phi_{n_k}(x_j),x_{j+1} - x_{j}\rangle \geq -1$ and $\prod_{j=0}^{m-1}(1 + \langle \nabla \phi_{n_k}(x_j), x_{j+1} - x_j\rangle) \geq 1$ by part (1) of Lemma \ref{MCM}. Hence we see by passing to the limit in the above expressions that $v$ satisfies MCM on $\Delta_+^{d-1} \setminus N$. 
	By Lemma \ref{MCM} (2) we have that $v$ is a version of the superdifferential of an exponentially concave function; that is there exists a $\phi \in \mathcal{E}$ such that $v = \nabla \phi$, which establishes that $v \in \partial\mathcal{E}$. 
	
	To handle the boundedness claim, we will first show that $\partial \mathcal{E} \cap C^1$ is bounded and obtain that $\partial \mathcal{E}$ is bounded by proving the density assertion. Fix $\nabla \phi \in \partial \mathcal{E} \cap C^1$ and set $G = \exp \phi$. By concavity of $G$ we have that $-LG/G$ is a nonnegative function. Using Lemma \ref{C2_lem} we get 
	\begin{align*} 0 \leq \int_{\Delta^{d-1}_+}  \frac{-LG}{G}p & = \frac{1}{2} \int_{\Delta^{d-1}_+} \(\ell^\top c \ell - (\ell - \nabla \phi)^\top c (\ell - \nabla \phi)\)p \\
	& = (\nabla \phi, \ell)_{\mathcal{H}^{c,p}} - \frac{1}{2}\|\nabla \phi\|^2_{\mathcal{H}^{c,p}}.
	\end{align*}
	From this inequality and Cauchy--Schwarz we deduce
	$$ \frac{1}{2}\|\nabla \phi\|^2_{\mathcal{H}^{c,p}} \leq (\nabla \phi, \ell)_{\mathcal{H}^{c,p}} \leq \|\nabla \phi\|_{\mathcal{H}^{c,p}}\|\ell\|_{\mathcal{H}^{c,p}}.$$
	This yields the bound $\|\nabla \phi\|_{\mathcal{H}^{c,p}} \leq 2 \|\ell\|_{\mathcal{H}^{c,p}}$, which is finite by Assumption \ref{finite_assumption}(i). Thus $\partial \mathcal{E} \cap C^1$ is bounded.
	
	Now given a general $\nabla \phi \in \partial \mathcal{E}$ we will approximate it by members of $\partial \mathcal{E} \cap C^1$ in the following way.  Let $G = \exp \phi$, which is a positive concave function. Viewing $\Delta^{d-1}_+$ as an open set in $\R^{d-1}$, let
	$\psi \in C_c^\infty(\R^{d-1})$ be such that $\psi \geq 0$, $\text{supp}(\psi) \subseteq \Delta^{d-1}_+$, $\text{diam(supp(}\psi)) = 1/2$ and  $\int_{\R^{d-1}} \psi = 1$. We define $\psi_n(x) := n^{d-1} \psi(nx)$. For each $t \in (0,1)$ define a concave function $G^t$ on a $t$-neighbourhood of $\Delta^{d-1}_+$  by 
	$$G^t(x) := G((1-t)x + t \bar x);\quad \text{dist}(x,\Delta^{d-1}_+) < t,$$  where $\bar x = \frac{1}{d}\boldsymbol{1}$. We can then define the functions $G_n:= G^{1/n}*\psi_n$ on $\Delta^{d-1}_+$. They are all positive, concave and smooth. It follows that $G_n \to G$ pointwise on $\Delta^{d-1}_+$. Indeed, fix $x \in \Delta^{d-1}_+$ and choose a compact set $K$ such that $x \in K$ and supp($\psi_N(x-\cdot)) \subseteq K$ for some $N$ large enough. Using the fact that every concave function is locally Lipschitz continuous there is a constant $L > 0$ such that $|G(y) - G(z)| \leq L|y-z|$ for all $y,z \in K$. Then for all $n \geq N$ we estimate that 
	\begin{align*}|G(x) - G_n(x)| & \leq \int_{B(0,1)} |G(x) - G((1-n^{-1}y) + n^{-1}\bar x)|\psi_n(x-y) dy \\
	& \leq L \int_{B(0,1)} \((1-n^{-1})|x-y| + n^{-1}|x-\bar x| \psi_n(x-y)\)dy \\
	&\leq \frac{L}{2}\(\frac{1+n^{-1}}{n} + \frac{1}{n}\) \xrightarrow{n \to \infty} 0.
	\end{align*}  By Theorem 25.7 in \cite{rockafellar1970convex} it follows that $\nabla G_n \to \nabla G$ at every point where $G$ is differentiable. As a consequence we obtain that $\phi_n \to \phi$ and $\nabla \phi_n \to \nabla \phi$ almost everywhere where $\phi_n = \log G_n$. 
	
	By construction we have that $\nabla \phi_n \in \partial \mathcal{E} \cap C^1$ for every $n$. We have aleady shown that $\partial \mathcal{E} \cap C^1$ is norm bounded so we can find a subsequence $(n_k)_{k \in \N}$ such that $\nabla \phi_{n_k}$ converges weakly in $\mathcal{H}^{c,p}$. However, since $\nabla \phi_n \to \nabla \phi$ pointwise it follows that this weak limit must be $\nabla \phi$. From here we conclude that $\nabla \phi$ is in the weak closure of $\partial \mathcal{E} \cap C^1$. However, since $\partial \mathcal{E} \cap C^1$ is a convex set, its weak and strong closures coincide (see e.g.\ Corollary 2 in Chapter II.9 of \cite{schaefer1971locally}) so that $\nabla \phi \in \overline{\partial \mathcal{E} \cap C^1}$, where $\overline{\partial \mathcal{E} \cap C^1}$ denotes the strong closure of $\partial \mathcal{E} \cap C^1$ in $\mathcal{H}^{c,p}$. This establishes that $\partial \mathcal{E} = \overline{\partial \mathcal{E} \cap C^1}$ which proves the density claim. The boundedness of $\partial \mathcal{E}$ follows from the boundedness of $\partial \mathcal{E} \cap C^1$.
\end{proof}
\begin{cor} \label{long_thm}
	There exists a unique solution $\nabla \hat \phi$ to \eqref{EC}.
\end{cor}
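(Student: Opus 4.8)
The plan is to recognise \eqref{EC} as a nearest-point (projection) problem in the Hilbert space $\mathcal{H}^{c,p}$ and to invoke the classical projection theorem. All the needed ingredients are already in place. First, $\mathcal{H}^{c,p}$ is a Hilbert space, and by Theorem~\ref{partialE_thm} the set $\partial\mathcal{E}$ is closed and convex in $\mathcal{H}^{c,p}$. Second, $\partial\mathcal{E}$ is nonempty: the constant function $\phi \equiv 0$ is exponentially concave, since $e^{0} \equiv 1$ is concave, so $0 = \nabla\phi \in \partial\mathcal{E}$ (and it trivially has finite norm). Third, Assumption~\ref{finite_assumption}(i) says precisely that $\|\ell\|^2_{\mathcal{H}^{c,p}} = \int_{\Delta^{d-1}_+}\ell^\top c\ell\, p < \infty$, so $\ell \in \mathcal{H}^{c,p}$ and the functional $v \mapsto \|\ell - v\|^2_{\mathcal{H}^{c,p}}$ is a well-defined, finite, strictly convex functional on $\partial\mathcal{E}$.

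With these observations, I would apply the standard fact that in a Hilbert space every point has a unique nearest point in any given nonempty closed convex set. Taking the point to be $\ell$ and the set to be $\partial\mathcal{E}$ yields a unique element $\nabla\hat\phi \in \partial\mathcal{E}$ attaining the infimum in \eqref{EC}, which is the existence and uniqueness assertion of the corollary. It then remains to translate this statement back into the language of exponentially concave functions, as used in Theorem~\ref{main_thm}: elements of $\mathcal{H}^{c,p}$ are equivalence classes modulo additive multiples of $\boldsymbol{1}$, and $\nabla$ is understood in the manifold sense on the simplex, so two functions in $\mathcal{E}$ whose supergradients determine the same class in $\mathcal{H}^{c,p}$ have tangential gradients agreeing almost everywhere, and therefore differ by a constant because $\Delta^{d-1}_+$ is connected. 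Hence uniqueness of the minimising class $\nabla\hat\phi$ in $\mathcal{H}^{c,p}$ is exactly uniqueness of the potential $\hat\phi$ up to an additive constant.

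I do not expect a genuine obstacle here: once the projection viewpoint is adopted, the corollary is an immediate consequence of the closedness and convexity established in Theorem~\ref{partialE_thm} (boundedness of $\partial\mathcal{E}$ is not even needed for this argument). The only point requiring a moment's care is the final step, passing from uniqueness of the gradient as an $\mathcal{H}^{c,p}$-class to uniqueness of the potential up to a constant, which relies on connectedness of the simplex together with the manifold interpretation of the gradient.
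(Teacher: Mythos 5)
Your proposal is correct and is essentially the paper's own argument: the paper likewise deduces the corollary from the Hilbert space projection theorem applied to the point $\ell$ and the nonempty closed convex set $\partial\mathcal{E}$ established in Theorem~\ref{partialE_thm}. The extra remarks on nonemptiness, finiteness of $\|\ell\|_{\mathcal{H}^{c,p}}$, and recovering $\hat\phi$ up to an additive constant are fine but go beyond what the one-line proof in the paper records.
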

\begin{proof}
By the previous theorem we have that $\partial \mathcal{E}$ is a closed convex set in $\mathcal{H}^{c,p}$ so the result follows from the Hilbert space projection theorem.
\end{proof}
Lastly, before proving Theorem~\ref{main_thm}, we establish that a portfolio generated by a concave function achieves at least the same growth rate under any measure $\P \in \tilde \Pi$ as it does under $\tilde \P$.
	\begin{thm} \label{concave_invariance_thm}
		Let $G: \Delta^{d-1}_+ \to (0,\infty)$ be a concave function. Then 
		$g(V^{\pi_G};\P) \geq g(V^{\pi_G};\tilde \P)$ for every $\P \in \tilde \Pi$.
	\end{thm}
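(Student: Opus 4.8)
The plan is to reduce the claim to the ``built-in'' drift-invariance of portfolios generated by $C^2$ functions (Lemma~\ref{C2_lem}) by a regularization argument. First I would dispose of the trivial case: if $g(V^{\pi_G};\tilde\P) = -\infty$ there is nothing to prove, so by Lemma~\ref{worst_case_growth} and the master formula I may assume $\nabla\phi \in \mathcal{H}^{c,p}$, where $\phi := \log G$; since $G = e^\phi$ is concave, $\phi \in \mathcal{E}$ and $\nabla\phi \in \partial\mathcal{E}$, and (expanding the quadratic and using $c\boldsymbol 1 = 0$) one has $g(V^{\pi_G};\tilde\P) = (\ell,\nabla\phi)_{\mathcal{H}^{c,p}} - \tfrac12\|\nabla\phi\|^2_{\mathcal{H}^{c,p}}$. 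Now fix $\P \in \tilde\Pi$. By the density assertion of Theorem~\ref{partialE_thm} pick $C^2$ exponentially concave $\phi_n$ with $\nabla\phi_n \to \nabla\phi$ in $\mathcal{H}^{c,p}$ and set $G_n := e^{\phi_n}$. Lemma~\ref{C2_lem} then gives, for \emph{every} $\P' \in \Pi$,
\[
g(V^{\pi_{G_n}};\P') \;=\; \int_{\Delta^{d-1}_+}\frac{-LG_n}{G_n}\,p \;=\; (\ell,\nabla\phi_n)_{\mathcal{H}^{c,p}} - \tfrac12\|\nabla\phi_n\|^2_{\mathcal{H}^{c,p}} \;\xrightarrow[n\to\infty]{}\; g(V^{\pi_G};\tilde\P),
\]
and in particular $T^{-1}\log V_T^{\pi_{G_n}} \to g(V^{\pi_{G_n}};\P)$ in $\P$-probability. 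It therefore suffices to establish, for each $n$, a bound $g(V^{\pi_G};\P) \ge g(V^{\pi_{G_n}};\P) - \varepsilon_n$ with $\varepsilon_n \downarrow 0$, which I would extract from a pathwise comparison of $\log V_T^{\pi_G}$ and $\log V_T^{\pi_{G_n}}$.

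For the comparison I would apply It\^o's formula to \eqref{wealth}. Both $\pi_G$ and $\pi_{G_n}$ are feedback portfolios of the master-formula form, and the identities $X^\top\boldsymbol 1 = 1$, $\boldsymbol 1 \in \mathrm{Ker}\,c(x)$ make the quadratic-variation corrections collapse, giving $d\log V_t^{\pi_G} = \nabla\phi(X_t)^\top dX_t - \tfrac12\nabla\phi^\top c\nabla\phi(X_t)\,dt$ and likewise for $\pi_{G_n}$; note this requires no smoothness of $\phi$. Subtracting and inserting $dX_t = b_t^\P\,dt + \sigma(X_t)\,dW_t$, with $w_n := \nabla\phi - \nabla\phi_n$, yields
\[
\frac1T\bigl(\log V_T^{\pi_G} - \log V_T^{\pi_{G_n}}\bigr) \;=\; \frac1T\int_0^T w_n^\top b_t^\P\,dt \;+\; \frac1T\int_0^T w_n^\top\sigma(X_t)\,dW_t \;+\; \frac1T\int_0^T\Bigl(\tfrac12 w_n^\top c w_n - w_n^\top c\nabla\phi\Bigr)(X_t)\,dt.
\]
The stochastic integral is a local martingale whose bracket grows like $\|w_n\|^2_{\mathcal{H}^{c,p}}T$ by the ergodic property (Definition~\ref{Pi_def}(ii)), hence is $o(T)$ a.s.\ exactly as in the proof of Lemma~\ref{worst_case_growth}; the last integral converges a.s.\ to $\tfrac12\|w_n\|^2_{\mathcal{H}^{c,p}} - (w_n,\nabla\phi)_{\mathcal{H}^{c,p}}$, which tends to $0$ as $n\to\infty$. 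Everything therefore comes down to the drift term $T^{-1}\int_0^T w_n(X_t)^\top b_t^\P\,dt$.

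I expect controlling this drift term to be the main obstacle, and it is here that membership in $\tilde\Pi$ is used. On a compact $K \subset \Delta^{d-1}_+$ I would bound $\E\bigl[T^{-1}\int_0^T |w_n^\top b_t^\P|\,\I_K(X_t)\,dt\bigr]$ by H\"older's inequality with exponents $(q,q')$: the factor involving $b^\P$ has finite $\limsup_T$ by \eqref{drift_assumption}, while the factor $T^{-1}\int_0^T\E\bigl[|w_n(X_t)|^{q'}\I_K(X_t)\bigr]\,dt$ converges to $\int_K|w_n|^{q'}p$ (using that $T^{-1}\int_0^T\E f(X_t)\,dt \to \int f p$ for bounded $f$, a consequence of the ergodic property and bounded convergence), and this tends to $0$ as $n\to\infty$ since $w_n\to 0$ in $\mathcal{H}^{c,p}$, $cp$ is nondegenerate on $K$, and $w_n$ is uniformly bounded on $K$ --- the latter because the local Lipschitz property of concave functions gives uniform gradient bounds on $K$ for the mollified building blocks of the approximants in Theorem~\ref{partialE_thm}, and these bounds survive the convex combinations used to obtain norm (rather than merely weak) convergence. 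The genuinely delicate point is the contribution on $\Delta^{d-1}_+\setminus K$: near $\partial\Delta^{d-1}_+$ the drift $b^\P$ is entirely unconstrained by \eqref{drift_assumption} and $|w_n|$ may blow up, so this piece must be absorbed by other means --- I anticipate this requires exploiting that $\pi_G$ and $\pi_{G_n}$ are long-only (so that their nonnegative excess growth rates bound $\log V_T^{\pi_G}$, $\log V_T^{\pi_{G_n}}$ below by market-relative log-functionals) together with the tightness in Definition~\ref{Pi_def}(iii), which forces the time spent near the boundary to be asymptotically negligible. Granting this, $\limsup_T\E\bigl[T^{-1}\int_0^T|w_n^\top b_t^\P|\,dt\bigr] \le \varepsilon_n$ with $\varepsilon_n\downarrow 0$, so by Markov's inequality $T^{-1}\int_0^T w_n^\top b_t^\P\,dt \ge -\varepsilon_n$ with probability tending to $1$; assembling the three pieces and letting $T\to\infty$ and then $n\to\infty$ gives $g(V^{\pi_G};\P) \ge g(V^{\pi_G};\tilde\P)$.
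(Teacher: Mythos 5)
Your reduction to the $C^2$ case and your treatment of the martingale and quadratic terms are sound, and you have correctly located the crux of the matter: the drift integral $T^{-1}\int_0^T w_n^\top b_t^\P\,dt$ near $\partial\Delta^{d-1}_+$, where \eqref{drift_assumption} gives no information and $|w_n|$ need not be controlled. But the argument you sketch to absorb that piece does not close the gap. Tightness of the laws of $X_t$ (Definition~\ref{Pi_def}(iii)) bounds the \emph{occupation time} of $\Delta^{d-1}_+\setminus K$, not the integral of the unbounded quantity $|w_n^\top b_t^\P|$ over that occupation time; and the positivity of the excess growth of $\pi_G$ and $\pi_{G_n}$ gives one-sided bounds on $\Gamma_T$ and $\Gamma^n_T$ \emph{separately}, which does not control their \emph{difference} --- and it is the difference that your decomposition requires you to estimate on the boundary region. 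As written, the proof is incomplete at exactly the step you flag as ``granting this.''

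The paper's proof avoids ever having to estimate anything near the boundary, by a different decomposition. It introduces cutoffs $\eta_m\in C_c^\infty(\Delta^{d-1}_+)$ with $0\le\eta_m\le1$, $\eta_m\to1$ pointwise and $\nabla\eta_m\to0$ in $\mathcal{H}^{c,p}$ (Lemma~\ref{cutoff_lemma}, whose proof rests on Lemma~\ref{dom_conv}), and compares $\Gamma_T$ not with $\Gamma^n_T$ but with the truncated process $\Gamma^m_T=\int_0^T\eta_m(X_t)\,d\Gamma_t$. Because $G$ is concave, $\Gamma$ is \emph{increasing}, so $\Gamma_T\ge\Gamma^m_T$ holds pathwise for free: the boundary contribution is simply discarded, with the correct sign. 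All the mollification and drift estimates (the analogue of your Hölder argument, packaged in Lemma~\ref{uniform_lemma}) are then carried out for the compactly supported integrands $\eta_m\cdot(\cdots)$, where \eqref{drift_assumption} applies, yielding $\Gamma^m_T/T\to\gamma_m$ in probability; and the cost of the truncation, $\gamma-\gamma_m$, is an integration-by-parts error of size $\int\nabla\eta_m^\top c\,\nabla\phi\,p$, which vanishes as $m\to\infty$ precisely because $\nabla\eta_m\to0$ in $\mathcal{H}^{c,p}$. This cutoff-plus-monotonicity device is the idea missing from your proposal; if you want to salvage your route, you would need to import it (or something equivalent) rather than rely on tightness.
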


The proof of Theorem~\ref{concave_invariance_thm} is contained in Appendix~\ref{proof_appendix}. 
We are now ready to prove our main theorem.
\begin{proof} [Proof of Theorem \ref{main_thm}]
From Lemma \ref{worst_case_growth} together with the master formula we know that $$g(V^{\pi_G};\tilde \P) = \frac{1}{2}\int_{\Delta^{d-1}_+} (\ell^\top c \ell - (\ell-\nabla \phi)^\top c (\ell-\nabla \phi))p$$ for any function $G = \exp \phi$ where $\phi \in \mathcal{E}$. By Corollary \ref{long_thm} we have that there is a unique solution $\nabla \hat \phi$ to \eqref{EC} so combining these results yields the upper bound
$$\lambda_{\mathcal{E}} \leq \sup\limits_{V \in \mathcal{V}^{\mathcal{E}}} g(V;\tilde \P) = \frac{1}{2}\int_{\Delta^{d-1}_+} (\ell^\top c \ell - (\ell-\nabla \hat \phi)^\top c (\ell-\nabla \hat \phi))p.$$

For the lower bound we use an approximation argument. By Theorem \ref{partialE_thm} we can find $C^2$ exponentially concave functions $\hat \phi_n$ such that $\nabla \hat \phi_n \to \nabla \hat \phi$ in $\mathcal{H}^{c,p}$ as $n \to \infty$. But then by Lemma~\ref{C2_lem} we see by setting $\hat G_n:= \exp \hat \phi_n$ that 
\begin{align*}
 \lambda_{\mathcal{E}} \geq \inf_{\P \in \Pi} g(V^{\pi_{\hat G_n}};\P) & = \frac{1}{2}\int_{\Delta^{d-1}_+} (\ell^\top c \ell - (\ell-\nabla \hat \phi_n )^\top c ( \ell-\nabla \hat \phi_n ))p
\end{align*}
Sending $n \to \infty$ yields the lower bound and proves \eqref{lambda_EC}. The final claim in the statement of the theorem follows from Theorem~\ref{concave_invariance_thm} and the fact that $g(V^{\pi_{\hat G}};\tilde {\P}) = \lambda_{\mathcal{E}}$ by Lemma~\ref{worst_case_growth}.
\end{proof}
\begin{remark}
We can establish an inequality for $\lambda_{\mathcal{E}}$ in terms of the norm of the optimum $\nabla \hat \phi$. To derive this bound, we define the function $g: [0,1] \to \R$  by $g(t) = (t\nabla \hat \phi,\ell)_{\mathcal{H}^{c,p}}  - \frac{1}{2}\|t\nabla \hat \phi\|^2_{\mathcal{H}^{c,p}}$, which is maximized at $t = 1$ by the optimality of $\hat \phi$. It follows that 
$$0 \leq g'(1) = (\nabla \hat \phi,\ell)_{\mathcal{H}^{c,p}} - \|\nabla \hat \phi\|^2_{\mathcal{H}^{c,p}}$$ from which obtain that $(\nabla \hat \phi,\ell)_{\mathcal{H}^{c,p}} \geq \|\nabla \hat \phi\|^2_{\mathcal{H}^{c,p}}$. Applying this to \eqref{lambda_EC} yields $\lambda_{\mathcal{E}} \geq \frac{1}{2}\|\nabla \hat \phi\|^2_{\mathcal{H}^{c,p}}$.
\end{remark}

\subsection{Connection to Rank-Based Models} \label{rank_based_section}

We are interested in extending the analysis from the previous section to a rank-based market model. Rank-based models play an important role in stochastic portfolio theory due to the observed empirical stability of the capital distribution curve of the ranked market weights. The asymptotic growth rate in the setting of rank-based models was studied in \cite{kardaras2018ergodic} and we obtain analogous results for the long-only problem. 

\begin{defn}
	\begin{enumerate} [label=(\roman*)]
	\item For $x \in \Delta_+^{d-1}$, we denote its rank vector by $x^{()} = (x^{(1)},\dots,x^{(d)})$ where $x^{(1)} \geq x^{(2)} \geq \dots \geq x^{(d)}$ and $\{x^{(k)}:k = 1,\dots,d\} = \{x^i:i = 1,\dots,d\}$.
	\item 	We define the ordered simplex $$\Delta^{d-1}_{+,\geq} := \left\{x \in \Delta_+^{d-1}: x^1 \geq x^2 \geq \dots \geq x^d\right\}$$ so that $x^{()} \in \Delta^{d-1}_{+,\geq}$ for every $x \in \Delta^{d-1}_+$. 
	\item For each $k \in \{1,\dots,d\}$ we define the function $r_k:\Delta^{d-1}_+ \to \{1,\dots,d\}$ via $r_k(x) = i$ if $x^{(k)} = x^{i}$ with ties broken by lexicographical ordering.
	\item We say a portfolio $\pi(\cdot)$ in feedback form is rank-based if there exists a function $\rho:\Delta^{d-1}_{+,\geq} \to \R^d$ such that $\pi^{i}(x) = \sum_{k=1}^d \rho^k(x^{()})1_{\{r_k(x) = i\}}$ for every $x \in \Delta^{d-1}_+$ and $i=1,\dots,d$.
	\end{enumerate}
\end{defn}

Note that the portfolio \eqref{vol_stabilized_portfolio} from Section~\ref{a_motivating_example} is rank-based so studying the long-only problem in the rank-based case is well motivated. We follow the setup and notational conventions of \cite{kardaras2018ergodic} to define the robust asymptotic growth problem in a rank-based setting. As inputs we take a pair $(\kappa,q)$ where $\kappa: \Delta^{d-1}_{+,\geq} \to \mathbb{S}^d_+$ and $q:\Delta^{d-1}_{+,\geq} \to (0,\infty)$ with $\int_{\Delta^{d-1}_+} q = 1$. In an analogous way to Definition~\ref{Pi_def} we define the following set of admissible measures.
\begin{defn} \label{Pi_geq}
	 Let $\Pi_{\geq}$ consist of all probability measures $\P$ on $(\Omega,\F)$ for which the following hold:
	\begin{enumerate}[label = ({\roman*})]
		\item $X$ is a $\P$-semimartingale and satisfies  $$\langle X^i,X^j\rangle = \int_0^\cdot \kappa_{kl}(X_t^{()})dt;\quad \P\text{-a.s}, \qquad  r_k(X_t) = i, r_l(X_t) = j, \quad i,j=1,\dots,d, $$
		\item For all Borel measurable functions $h$ on $\Delta_{+,\geq}^{d-1}$ with $\int_{\Delta_{+,\geq}^{d-1}}h^+q < \infty$  we have that 
		$$ \lim\limits_{T \to \infty} \frac{1}{T}\int_0^T h(X_t^{()})dt = \int_{\Delta_{+,\geq}^{d-1}}hq; \quad \P\text{-a.s.}$$
		\item The laws of $\{X_t\}_{t \geq 0}$ under $\P$ are tight.
	\end{enumerate}
\end{defn}

The problem we tackle in this section is to solve for
\begin{equation} \label{rank_robust_problem}
\lambda_{\mathcal{E}}^\geq:= \sup\limits_{V \in \mathcal{V}^\mathcal{E}} \inf\limits_{ \P \in \Pi_{\geq}} g(V;\P).
\end{equation}

First, we extend $\kappa$ and $q$ to the (unordered) simplex by setting
\begin{align} \label{c_sym}
c_{ij}(x) & := \kappa_{kl}(x^{()}),  \qquad  r_k(x) = i, r_l(x) = j, \quad i,j=1,\dots,d; & \quad x \in \Delta^{d-1}_+, \\
p(x) & := \frac{1}{d!}q(x^{()}); & x \in \Delta^{d-1}_+. \label{p_sym}
\end{align}
We now assume
\begin{assum} \label{sym_assumption}
	The pair $(\kappa,q)$ is such that the pair $(c,p)$ defined in \eqref{c_sym} and \eqref{p_sym} satisfy Assumptions~\ref{input_assumption} and \ref{finite_assumption}.
\end{assum}
It is not immediately clear when Assumption~\ref{sym_assumption} is satisfied given an input pair $(\kappa,q)$. In Section~\ref{model} we develop a large class of models for which Assumption~\ref{sym_assumption} holds. Moreover, the content of \cite[Proposition 3.8]{kardaras2018ergodic} gives a procedure for modifying the inputs $\kappa$ and $q$ in an arbitrarily small neighbourhood of the boundary so that Assumption~\ref{sym_assumption} is satisfied for the modified inputs. 

To state the main result of this section we define, analogously to Definition~\ref{tilde_pi_def}, $\tilde \Pi_{\geq}$ to be the set of all measures $\P \in \Pi_{\geq}$ that satisfy \eqref{drift_assumption}. Note that Theorem~\ref{concave_invariance_thm} continues to hold for all $\P \in \tilde \Pi_{\geq}$ as long as the concave generating function in the statement of theorem is assumed to be permutation invariant; this is discussed in Remark~\ref{rank_appendix_remark} following the proof of Theorem~\ref{concave_invariance_thm}. 
\begin{thm} \label{rank_optimal}
		Let Assumption~\ref{sym_assumption} hold for the pair $(\kappa,q)$. For the associated pair $(c,p)$ constructed by equations \eqref{c_sym} and \eqref{p_sym} let $\lambda_{\mathcal{E}}$ be as in \eqref{exp_concave_problem} and $\hat \phi,\hat G$ be as in the statement of Theorem~\ref{main_thm}. Then we have
		\begin{enumerate} [label = ({\roman*})]
				\item \label{rank_optimal_ii} The function $\hat G$ is permutation invariant. In particular, the portfolio $\pi_{\hat G}$ is rank-based.
			\item $\lambda_{\mathcal{E}}^{\geq}=\lambda_{\mathcal{E}}$. The robust growth-optimal portfolio for the problem \eqref{rank_robust_problem} is functionally generated by the concave function $\hat G = \exp \hat \phi$. It satisfies $g(V^{\pi_{\hat G}};\P) \geq \lambda_{\mathcal{E}}^\geq$ for every $\P \in \tilde \Pi_{\geq}$ and $g(V^{\pi_{\hat G}};\tilde \P) = \lambda_{\mathcal{E}}^{\geq}$.
		\end{enumerate} 
	\end{thm}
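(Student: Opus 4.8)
The plan is to reduce the rank-based problem to the unordered problem already solved in Theorem~\ref{main_thm}, exploiting the symmetry of the constructed pair $(c,p)$. First I would prove part~\ref{rank_optimal_ii}. The key observation is that $(c,p)$ as defined by \eqref{c_sym}--\eqref{p_sym} is permutation invariant by construction: for any permutation matrix $P$ one has $c(Px) = P c(x) P^\top$ and $p(Px) = p(x)$. Consequently the functional $\phi \mapsto \|\ell - \nabla\phi\|^2_{\mathcal{H}^{c,p}}$ appearing in \eqref{min_exp} is invariant under the substitution $\phi \mapsto \phi\circ P$; one checks that $\ell$ itself transforms correctly, i.e.\ $\ell(Px) = P\ell(x)$, which follows from $\nabla\log p$ and $c^{-1}\diver c$ both being equivariant (the latter using that $\diver c$ is equivariant, a direct computation from $c(Px) = Pc(x)P^\top$). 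Since the set $\mathcal{E}$ of exponentially concave functions is also stable under precomposition with $P$ (concavity of $e^\phi$ is preserved by the linear change of variables), the minimizer $\hat\phi$ is unique up to an additive constant, so $\hat\phi\circ P$ must equal $\hat\phi$ up to a constant; evaluating at a fixed point pins the constant to zero and gives permutation invariance of $\hat\phi$, hence of $\hat G = \exp\hat\phi$. That $\pi_{\hat G}$ is then rank-based follows from the master formula (Theorem~\ref{master_formula}): the weights $\pi^i_{\hat G}(x)/x^i$ are built from $\nabla\hat\phi(x)$ in a permutation-equivariant way, so they depend on $x$ only through the ranked vector $x^{()}$ and the identity of which coordinate holds which rank.

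For part~(ii), I would first establish $\lambda_{\mathcal{E}}^\geq \le \lambda_{\mathcal{E}}$ and then the reverse inequality. For the upper bound, note that any $\P \in \Pi_\geq$ has the property that the law of the \emph{ranked} process $X^{()}$ has the prescribed covariation and invariant density, and a permutation-invariant concave generating function $G$ produces a wealth process whose growth can be analyzed purely through $X^{()}$; more concretely, I expect that $\tilde\P$ (the reversible diffusion with dynamics \eqref{worst_case_dynamics} for the symmetric $(c,p)$) lies in $\Pi_\geq$ after pushing forward or relabeling, so that $\lambda_{\mathcal{E}}^\geq \le \sup_{V\in\mathcal{V}^\mathcal{E}} g(V;\tilde\P) = \lambda_{\mathcal{E}}$, the last equality being exactly the upper-bound half of the proof of Theorem~\ref{main_thm}. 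For the lower bound, I would take the $C^2$ exponentially concave approximants $\hat G_n = \exp\hat\phi_n$ supplied by Theorem~\ref{partialE_thm}, and \emph{symmetrize} them: replace $\hat\phi_n$ by its average $\frac{1}{d!}\sum_P \hat\phi_n\circ P$, which remains $C^2$, exponentially concave (by convexity of $\mathcal{E}$, cited in the proof of Theorem~\ref{partialE_thm} from \cite{alirezaei2018exponentially}), permutation invariant, and still converges to $\hat\phi$ in $\mathcal{H}^{c,p}$ since $\hat\phi$ is already symmetric and the projection onto the symmetric subspace is a contraction fixing $\hat\phi$. A permutation-invariant $C^2$ function generates a rank-based portfolio, and Lemma~\ref{C2_lem} applies verbatim to any $\P \in \Pi$; one then needs the analogue of Lemma~\ref{C2_lem} for $\Pi_\geq$, which holds because the argument only uses the ergodic property and tightness, both available in Definition~\ref{Pi_geq}, together with the identity $\int_{\Delta^{d-1}_+}\frac{-LG}{G}p = \frac{1}{d!}\int_{\Delta^{d-1}_{+,\geq}}(\cdots)q$ obtained by folding the symmetric integrand. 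This yields $\lambda_{\mathcal{E}}^\geq \ge \frac{1}{2}\int(\ell^\top c\ell - (\ell-\nabla\hat\phi_n)^\top c(\ell-\nabla\hat\phi_n))p \to \lambda_{\mathcal{E}}$. Finally, the claims $g(V^{\pi_{\hat G}};\P) \ge \lambda_{\mathcal{E}}^\geq$ for $\P\in\tilde\Pi_\geq$ and $g(V^{\pi_{\hat G}};\tilde\P) = \lambda_{\mathcal{E}}^\geq$ follow from the permutation-invariant version of Theorem~\ref{concave_invariance_thm} (noted in the text as Remark~\ref{rank_appendix_remark}) applied to the symmetric concave $\hat G$, together with Lemma~\ref{worst_case_growth}.

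The main obstacle I anticipate is the careful bookkeeping connecting measures in $\Pi_\geq$ with the unordered problem: unlike in Theorem~\ref{main_thm}, here the admissible class is defined through the ranked process, and one must verify both that the symmetric $\tilde\P$ genuinely belongs to $\Pi_\geq$ (so it can serve as the worst-case benchmark) and that Lemma~\ref{C2_lem}'s growth-invariance identity carries over with the $\frac{1}{d!}$ folding factor correctly absorbed into the definition \eqref{p_sym} of $p$. This is exactly the kind of symmetry reduction carried out in \cite{kardaras2018ergodic}, so I would follow their treatment of the rank-based case closely. A secondary technical point is confirming that symmetrization preserves membership of $\nabla\hat\phi_n$ in $\partial\mathcal{E}\cap C^1$ with the norm bound intact; this is immediate from convexity of $\mathcal{E}$ and the triangle inequality, since $\|\frac{1}{d!}\sum_P \nabla(\hat\phi_n\circ P)\|_{\mathcal{H}^{c,p}} \le \frac{1}{d!}\sum_P \|\nabla(\hat\phi_n\circ P)\|_{\mathcal{H}^{c,p}} = \|\nabla\hat\phi_n\|_{\mathcal{H}^{c,p}}$ by the permutation invariance of the $\mathcal{H}^{c,p}$-norm.
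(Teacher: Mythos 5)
Your proposal is correct and follows essentially the same route as the paper: permutation invariance of $\hat\phi$ via invariance of the variational problem \eqref{EC} plus uniqueness (the paper phrases this through symmetrizing $\hat\phi$ over all permutations and invoking convexity of $\mathcal{E}$ together with \cite[Equation~(B.1)]{kardaras2018ergodic}, but the content is the same), then $\Pi\subseteq\Pi_{\geq}$ for the upper bound, and the permutation-invariant version of Lemma~\ref{C2_lem} applied to symmetric $C^2$ approximants for the lower bound, with the final growth-rate claims delegated to Theorem~\ref{concave_invariance_thm} and Remark~\ref{rank_appendix_remark}. Your explicit symmetrization of the approximants $\hat\phi_n$, with the contraction estimate showing convergence is preserved, is a slightly more careful rendering of the paper's remark that the $\hat\phi_n$ ``can be chosen to be permutation invariant.''
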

	\begin{proof}  
		The proof of the first item is very similar to the proof of Proposition~3.6 in \cite{kardaras2018ergodic}. Given a permutation $\sigma$ of $\{1,\dots,d\}$ we define $ \hat \phi_\sigma: \Delta^{d-1}_+ \to \R$ via $\hat \phi_\sigma(x) = \hat\phi(x_\sigma)$ where $x_\sigma^i = x^{\sigma(i)}$. It follows that $\hat \phi_\sigma$ is exponentially concave for every $\sigma$ and by convexity of $\mathcal{E}$ we see that $\frac{1}{d!} \sum_{\sigma} \hat \phi_\sigma$ is exponentially concave as well. By the permutation invariance of $c$ and $p$ it follows from \cite[Equation~(B.1)]{kardaras2018ergodic} that
		\begin{equation} \label{perm_invariance}
		\int_{\Delta^{d-1}_+} (\ell-\nabla \phi)^\top c (\ell-\nabla \phi)p = \int_{\Delta^{d-1}_+} (\ell-\nabla \phi_\sigma)^\top c (\ell-\nabla \phi_\sigma)p
		\end{equation} for every $\sigma$. Hence we see that
		$$ \|\ell - \frac{1}{d!}\sum\limits_\sigma \nabla \hat  \phi_\sigma\|^2_{\mathcal{H}^{c,p}} \leq \frac{1}{d!}\sum\limits_{\sigma} \|\ell-\nabla \hat \phi_\sigma\|^2_{\mathcal{H}^{c,p}} = \|\ell -\nabla \hat \phi\|^2_{\mathcal{H}^{c,p}} = \inf\limits_{\nabla \phi \in \partial \mathcal{E}} \|\ell -\nabla \phi\|^2_{\mathcal{H}^{c,p}}.$$
		Thus, $\frac{1}{d!} \sum_{\sigma} \hat \phi_\sigma$ is a minimizer for \eqref{EC}. By uniqueness it follows that (up to an additive constant) $\hat \phi = \frac{1}{d!} \sum_{\sigma} \hat \phi_\sigma$ so that $\hat \phi$, and hence $\hat G$, is permutation invariant. By Theorem~\ref{master_formula} this, in turn, implies that $\pi_{\hat G}$ is a rank-based portfolio.
		
		To prove the second item, we note that we clearly have $\Pi \subseteq \Pi_{\geq}$, from which it follows that $\lambda_{\mathcal{E}}^{\geq} \leq \lambda_{\mathcal{E}}$. To prove the reverse inequality we argue in a similar way to the proof of Theorem \ref{main_thm}. First note that by \eqref{perm_invariance}, Lemma~\ref{C2_lem} continues to hold for all $\P \in \Pi_{\geq}$ as long as the function $G$ in the statement of the lemma is permutation invariant. By Theorem~\ref{partialE_thm} we can find $C^2$ exponentially concave functions $\hat \phi_n$ such that $\nabla \hat \phi_n \to \nabla \hat \phi$ in $\mathcal{H}^{c,p}$ as $n \to \infty$. Moreover, since $\hat \phi$ is permutation invariant it is clear that the functions $\hat \phi_n$ can be chosen to be permutation invariant as well.
	
		 Then by Lemma~\ref{C2_lem} we see by setting $G_n:= \exp \hat \phi_n$ that 
		\begin{align*}
		\lambda_{\mathcal{E}}^{\geq} \geq \inf_{\P \in \Pi_{\geq}} g(V^{\pi_{\hat G_n}};\P) & = \frac{1}{2}\int_{\Delta^{d-1}_+} (\ell^\top c \ell - (\ell -\nabla \hat \phi_n)^\top c (\ell - \nabla \hat \phi_n))p
		\end{align*}
		Sending $n \to \infty$ yields $\lambda^{\geq}_{\mathcal{E}} \geq \lambda_{\mathcal{E}}$. The claims regarding the asymptotic growth rate of the portfolio $\pi_{\hat G}$ readily follow from Theorem~\ref{concave_invariance_thm}, Remark~\ref{rank_appendix_remark} and the definition of $\lambda_{\mathcal{E}}^{\geq}$. 
	\end{proof}

\begin{remark}
	By the previous theorem, \eqref{lambda_EC} and \eqref{perm_invariance} it follows that
	$$\lambda_{\mathcal{E}}^{\geq} = \lambda_{\mathcal{E}} =  \frac{1}{2}\int_{\Delta^{d-1}_+} (\ell^\top c \ell - (\ell - \nabla \hat \phi)^\top c (\ell - \nabla \hat\phi)) p = \frac{1}{2}\int_{\Delta^{d-1}_{+,\geq}}(\varrho^\top \kappa \varrho - (\varrho - \nabla \hat \phi)^\top \kappa (\varrho - \nabla  \hat\phi)) q$$
	where $\varrho = \frac{1}{2}\kappa^{-1}\diver \kappa + \frac{1}{2}\nabla \log q$.
	Moreover by the permutation invariance of $\hat \phi$ and optimality of $\hat \phi$ in \eqref{min_exp} it is easy to see $\hat \phi$ can additionally be characterized as the unique (up to additive constant) minimizer of
	$$\inf_{\phi}\int_{\Delta^{d-1}_{+,\geq}} (\varrho - \nabla \phi)^\top \kappa (\varrho - \nabla \phi) q$$ 
	where the infimum if taken over all exponentially concave functions $\phi$ on $\Delta^{d-1}_{+,\geq}$.
\end{remark}

\subsection{Discussion of Long-Only Constraints} \label{long_only_discussion}
As previously mentioned, there are (at least) four possible long-only problems that could have been considered. Namely, optimizing over
\begin{enumerate}[label = ({\arabic*}),noitemsep]
	\item arbitrary long-only portfolios,
	\item long-only portfolios in feedback form,
	\item functionally generated long-only portfolios,
	\item long-only portfolios generated by concave functions.
\end{enumerate}

The only procedure we are aware of to determine the optimum is to find a suitable class of portfolios such that the following two properties are satisfied:
\begin{itemize}[noitemsep]
	\item (Growth Rate Invariance) Each portfolio in the class has the same asymptotic growth rate under every measure $\P \in \Pi$.
	\item (Approximation) The growth rate of the optimal portfolio can be approximated by the growth rates of portfolios in the chosen class.
\end{itemize}
In \cite{kardaras2018ergodic}, the authors used the class of portfolios generated by $C^2$ functions to tackle the unconstrained problem. In this paper we were able to use the class $\partial \mathcal{E} \cap C^1$(defined in the statement of Theorem \ref{partialE_thm}) to carry out a tractable analysis for problem (4).

A seemingly natural choice would be to consider problem (3). At this point we would like to establish that not all long-only functionally generated portfolios are generated by concave functions, so that problems (3) and (4) are truly different. Indeed, even \textit{convex} functions can generate long-only portfolios as the following example shows.
\begin{eg}
	Set $G(x)= \exp(\frac{1}{2}\|x\|^2)$. An application of the master formula Theorem \ref{master_formula} yields
	$$ \frac{\pi^i_{G}(x)}{x^i} = 1 + (x^i - \|x\|^2); \quad i =1,\dots,d,$$ so that $\pi_{G}$ is long-only.
\end{eg}
For problem (3) one would expect the candidate optimum to solve the variational problem \eqref{variational} over all weakly differentiable functions $\phi$ under the constraint $\pi_{e^{\phi}}^i(x) \geq 0$ for all $x \in \Delta^{d-1}_+$ and $i \in \{1,\dots,d\}$. One can establish existence and uniqueness of the solution $\hat \phi$ to this mathematical problem and the corresponding objective function value would serve as an upper bound for the optimal robust asymptotic growth rate in this setting. However, the method for proving the lower bound in Theorem \ref{main_thm} relies on approximating $\nabla \hat \phi$ in the $\mathcal{H}^{c,p}$ sense by gradients of $C^2$ functions that generate long-only portfolios. In the general context of functionally generated portfolios, it is not clear that this is possible.  Additionally, without establishing regularity properties of the optimizer $\hat \phi$, it is not clear if $\hat \phi(X_t)$ is even a semimartingale, in which case the candidate optimal portfolio would not be functionally generated in the sense of Definition \ref{def_func_gen}.  

In regards to problem (2), portfolios in feedback form do \textit{not} in general have the growth rate invariance property. Indeed, consider the case when $\ell$ is not a gradient so that the measures $\tilde \P$ and $\P_{\hat u}$, defined in Assumption \ref{finite_assumption}(iii) and Theorem \ref{KR_main} respectively, are distinct and both belong to $\Pi$. It is established in Theorem \ref{KR_main} that $\pi_{\hat G}$ is growth-optimal under ${\P_{\hat u}}$ where $ \hat G = \exp \hat u$ and it follows from Lemma \ref{worst_case_growth} that the portfolio  $\pi_\ell$ given by
$$\pi_{\ell}^i(x) := x^i \ell^i(x) + x^i - x^i \ell(x)^\top x; \quad i = 1,\dots,d$$ is the growth-optimal portfolio under $\tilde \P$. Moreover, since the objective function value for the variational problem \eqref{variational} is strictly positive at the optimum $\hat u$, we see that $g(V^{\pi_\ell}; \tilde \P) > g(V^{\pi_{\hat G}}; \tilde \P)$. Thus, if the portfolio $\pi_\ell$ achieved the same growth rate in all admissible models it would outperform $\pi_{\hat G}$ contradicting Theorem \ref{KR_main}. It follows that the class of feedback portfolios is too large to establish growth-rate invariance under the admissible measures. The special structure of functionally generated portfolios is needed. We note that this analysis only holds for $d > 2$ as in the two dimensional case $\ell$ is always a gradient. Moreover, as we will see in Section \ref{2d_long_only}, problems (2) and (3) are equivalent when $d=2$ since every portfolio $\pi$ in feedback form is functionally generated by the function $\varphi^\pi$ given by \eqref{2d_func_gen} below. As a result, the $d=2$ case is explicitly solvable and we were able to get around the aforementioned difficulties. However, without explicit formulae for the solution to the constrained variational problem in higher dimensions, problems (2) and (3) remain open and are beyond the scope of this paper.

In view of the obstructions for problems (2) and (3), problem (1) seems out of reach. As such, we believe problem (4) is the natural one to consider in this context. Moreover, in Section \ref{fin_dim} we will see that the variational problem \eqref{EC} is susceptible to a finite dimensional approximation due to properties of exponentially concave functions and the fact that $\partial \mathcal{E}$ is a bounded set in $\mathcal{H}^{c,p}$. Thus, problem (4) can be numerically solved and implemented in practice.

\section{Long-Only Feedback Portfolios and the \texorpdfstring{$\boldsymbol{d=2}$}{d=2} Case} \label{2d_long_only}
In the $d =2 $ case more can be said about long-only portfolios. In this case we can phrase everything in terms of a one dimensional problem since $X^2 = 1-X^1$ and $\pi^2 = 1-\pi^1$ for every portfolio $\pi$. Moreover if $\pi = \pi(x^1,x^2)$ is a portfolio in feedback form then it is always functionally generated. Indeed, by defining the function
\begin{equation}\label{2d_func_gen}
\varphi^\pi(x) := \frac{\pi^1(x,1-x)}{x(1-x)} -\frac{1}{1-x}
\end{equation} 
we observe that
\begin{align*}
\frac{\pi^1(x, 1-x)}{x} & = 1 + (1-x)\varphi^\pi(x)\\
\frac{\pi^2(x,1-x)}{1-x} & = 1 - x \varphi^\pi(x)
\end{align*}
for every $x \in (0,1)$.
From these equations it follows that $\pi$ is functionally generated by
\[
	G(x,y) = G(x) = \exp\(\int_\theta^x \varphi^\pi(z)dz\)
\] where $\theta \in (0,1)$ is arbitrary. Additionally from the above representation of the portfolio weights we see that $\pi$ is long-only if and only if 
\begin{equation}  \label{2d_long_only_constraint}-\frac{1}{1-x} \leq  \varphi^\pi(x) \leq \frac{1}{x}
\end{equation}
for every $x \in (0,1)$. 

In the two dimensional case the matrix $c$ is of the form
$$c(x,1-x) = \begin{bmatrix}
c_{11}(x,1-x) & - c_{11}(x,1-x) \\
-c_{11}(x,1-x) & c_{11}(x,1-x)
\end{bmatrix}$$ for some nonnegative function $c_{11}$. Moreover it is easy to check that $c^{-1}\diver c = \nabla \log c_{11}$ and so in particular $\ell(x)= \frac{1}{2}\nabla \log(p(x)c_{11}(x))$. 

To state the next theorem we define
$\tilde p(x) = p(x,1-x)$, $ \tilde c(x) = c_{11}(x,1-x)$ and $\tilde \ell (x) = \frac{1}{2}(\log \tilde p \tilde c)'(x)$.
\begin{thm} \label{opt_long_only}
	Let $\Xi$ be the set of all long-only portfolios in feedback form and set 
	$$\lambda_{\text{long}} := \sup_{\pi \in \Xi}\inf_{\P \in \Pi} g(V^{\pi};\P).$$
	Then we have that 
\[
	\lambda_{\text{long}} = \frac{1}{2}\int_0^1 \(\tilde \ell^2 \tilde c - \(\tilde \ell -\varphi^{\hat \pi}\)^2\tilde c\)\tilde p(x)dx
\]
	where 
		\begin{align*}\hat \pi^1 (x) & = \begin{cases}
	1 & \text{if } \tilde \ell(x) > 1/x \\
	0 & \text{if } \tilde \ell(x) < -1/(1-x)\\
	x + x(1-x)\tilde \ell(x) & \text{otherwise,}
	\end{cases}\\
	\hat \pi^2(x) & = 1 - \hat \pi^1(x)
	\end{align*}  and  $\varphi^{\hat \pi}$ is given by \eqref{2d_func_gen}.
\end{thm}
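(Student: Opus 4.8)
The plan is to run the same two-sided estimate as in the proof of Theorem~\ref{main_thm}, except that now the upper bound comes from a one-dimensional \emph{pointwise} optimization, which is available because in $d=2$ every feedback portfolio is functionally generated.

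First I would specialize Lemma~\ref{worst_case_growth} to $d=2$. From the stated form of $c$ one has $v^\top c v = \tilde c\,(v^1-v^2)^2$ for any $v\in\R^2$; combining this with $\ell^1-\ell^2=\tilde\ell$ and the identities $\pi^1/x = 1+(1-x)\varphi^\pi$, $\pi^2/(1-x)=1-x\varphi^\pi$ coming from \eqref{2d_func_gen} gives $\big(\ell-\tfrac\pi x\big)^\top c\big(\ell-\tfrac\pi x\big) = \tilde c\,(\tilde\ell-\varphi^\pi)^2$ and $\ell^\top c\ell = \tilde c\,\tilde\ell^2$. Hence Lemma~\ref{worst_case_growth} yields, for every $\pi\in\Xi$,
\[
g(V^\pi;\tilde\P) = \frac12\int_0^1\big(\tilde\ell^2\tilde c - (\tilde\ell-\varphi^\pi)^2\tilde c\big)\tilde p(x)\,dx,
\]
with the right-hand side understood to be $-\infty$ when the subtracted integral diverges. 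Since $\tilde\P\in\Pi$ we have $\inf_{\P\in\Pi}g(V^\pi;\P)\le g(V^\pi;\tilde\P)$, and, by \eqref{2d_long_only_constraint}, $\varphi^\pi(x)$ is an arbitrary element of $[-\tfrac1{1-x},\tfrac1x]$, so for every $x$ we have $(\tilde\ell(x)-\varphi^\pi(x))^2\ge(\tilde\ell(x)-\varphi^{\hat\pi}(x))^2$ because $\varphi^{\hat\pi}(x)$ is exactly the metric projection of $\tilde\ell(x)$ onto that interval --- projecting to its right endpoint produces $\hat\pi^1=1$, to its left endpoint $\hat\pi^1=0$, and to $\tilde\ell(x)$ itself $\hat\pi^1=x+x(1-x)\tilde\ell(x)$, which is precisely the three-case formula in the statement. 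Taking the supremum over $\pi\in\Xi$ gives $\lambda_{\text{long}}\le\frac12\int_0^1(\tilde\ell^2\tilde c-(\tilde\ell-\varphi^{\hat\pi})^2\tilde c)\tilde p$, a finite quantity since the projection satisfies $|\varphi^{\hat\pi}|\le|\tilde\ell|$ pointwise, so that $\int_0^1\tilde c\,(\varphi^{\hat\pi})^2\tilde p\le\int_0^1\tilde c\,\tilde\ell^2\tilde p<\infty$ by Assumption~\ref{finite_assumption}(i).

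For the reverse inequality I would argue by approximation, as in Theorem~\ref{main_thm}. The generating function $G=\exp(\int_\theta^\cdot\varphi^{\hat\pi})$ of $\hat\pi$ is only $C^1$ --- it has kinks where $\tilde\ell$ meets the constraint curves --- so Lemma~\ref{C2_lem} cannot be applied to it directly. Instead I would build $\varphi_n\in C_c^\infty((0,1))$ with $-\tfrac1{1-x}\le\varphi_n(x)\le\tfrac1x$ and $\int_0^1\tilde c\,(\varphi_n-\varphi^{\hat\pi})^2\tilde p\to0$: multiply $\varphi^{\hat\pi}$ by a $[0,1]$-valued smooth cutoff supported in $[1/n,1-1/n]$ (which preserves the constraint since the band contains $0$), shrink by a factor $1-\delta_n\uparrow1$ to create strict room, and mollify at a scale chosen fine enough relative to $\delta_n$ that the band is still respected after convolution; the first two operations converge in $\mathcal{H}^{c,p}$ because $\varphi^{\hat\pi}\in\mathcal{H}^{c,p}$, and the last converges because the cutoff function is bounded with compact support. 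Then $G_n:=\exp(\int_\theta^\cdot\varphi_n)$ is smooth and generates a portfolio $\pi_{G_n}\in\Xi$, and $\tfrac{LG_n}{G_n}=\tfrac12\tilde c\,(\varphi_n'+\varphi_n^2)$ is supported on a fixed compact subset of $(0,1)$ where $\tilde c$ and $\tilde p$ are bounded, so $\int(LG_n/G_n)^+\tilde p<\infty$ and Lemma~\ref{C2_lem} gives $\inf_{\P\in\Pi}g(V^{\pi_{G_n}};\P)=\frac12\int_0^1(\tilde\ell^2\tilde c-(\tilde\ell-\varphi_n)^2\tilde c)\tilde p$. Letting $n\to\infty$ and using norm continuity yields $\lambda_{\text{long}}\ge\frac12\int_0^1(\tilde\ell^2\tilde c-(\tilde\ell-\varphi^{\hat\pi})^2\tilde c)\tilde p$, which combined with the upper bound proves the theorem, with $\varphi^{\hat\pi}$ given by \eqref{2d_func_gen}.

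I expect the delicate point to be the lower-bound approximation. Since $\hat\pi$ is not $C^2$-generated, invoking the growth-rate invariance of Lemma~\ref{C2_lem} forces a smoothing that must simultaneously (i) remain inside the \emph{state-dependent} band $[-\tfrac1{1-x},\tfrac1x]$, (ii) converge in $\mathcal{H}^{c,p}$, and (iii) keep $\int(LG_n/G_n)^+\tilde p<\infty$ --- the last being nontrivial because $G_n$ need not be concave, so $-LG_n$ need not have a sign. Taking $\varphi_n$ compactly supported secures (ii) and (iii); the genuinely fiddly part is balancing the mollification scale against the shrinkage $\delta_n$ so that (i) survives convolution against the $x$-dependent bounds.
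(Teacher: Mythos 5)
Your proposal is correct and follows essentially the same route as the paper: the upper bound via pointwise projection of $\tilde\ell(x)$ onto the band $[-\tfrac{1}{1-x},\tfrac{1}{x}]$ under $\tilde\P$, and the lower bound via a truncate--shrink--mollify approximation of $\varphi^{\hat\pi}$ that preserves the state-dependent constraint, followed by Lemma~\ref{C2_lem}. The only (cosmetic) difference is in the final limit passage, where the paper uses dominated convergence with a dominating function supplied by Lemma~\ref{dom_conv}, while you pass to the limit via $\mathcal{H}^{c,p}$-norm continuity.
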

	Here the optimal portfolio $\hat \pi$ itself has a simple interpretation:\ wherever the unconstrained optimal solution was long-only one performs that strategy. As soon as we enter the region where we were to short $x^1$ we do not invest in it and hold all of our wealth in $x^2$. Similarly, when we hit the region where we were to short $x^2$ we do not invest in it and hold all of our wealth in $x^1$. 
	
To prove this theorem we will need the following lemma, which does not assume that $d=2$ and is also used in the proof of Proposition \ref{extreme_points} and Lemma \ref{cutoff_lemma}.

\begin{lem} \label{dom_conv}
	Let Assumptions \ref{input_assumption} and \ref{finite_assumption} hold. Then

$$ \int_{\Delta^{d-1}_+} \frac{1}{x^ix^j}|c_{ij}(x)|p(x)dx < \infty$$ for all $i,j \in \{1,\dots,d\}$.
\end{lem}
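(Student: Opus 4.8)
The plan is to reduce the claim to a finiteness statement that we can extract from Assumption~\ref{finite_assumption}, most directly from part (i) together with the nondegeneracy of $c$. The key observation is that the quantity $\frac{1}{x^ix^j}|c_{ij}(x)|$ can be controlled by the diagonal entries $\frac{c_{ii}(x)}{(x^i)^2}$ and $\frac{c_{jj}(x)}{(x^j)^2}$: since $c(x)$ is positive semidefinite, for each fixed $x$ we have $|c_{ij}(x)| \leq \sqrt{c_{ii}(x)c_{jj}(x)}$, and hence by the AM--GM inequality
\[
\frac{|c_{ij}(x)|}{x^ix^j} \leq \frac{1}{2}\left(\frac{c_{ii}(x)}{(x^i)^2} + \frac{c_{jj}(x)}{(x^j)^2}\right).
\]
Thus it suffices to show $\int_{\Delta^{d-1}_+} \frac{c_{ii}(x)}{(x^i)^2}\,p(x)\,dx < \infty$ for every $i$.

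To handle the diagonal terms, I would test against the functions $G_i(x) = x^i$ (or a smoothed, bounded-away-from-the-boundary variant thereof) and invoke Lemma~\ref{C2_lem} or, more fundamentally, Assumption~\ref{finite_assumption}(i). Indeed, writing $\phi = \log p^{1/2} + \ldots$, Assumption~\ref{finite_assumption}(i) says $\int \ell^\top c\,\ell\, p < \infty$ where $\ell = \frac12(\nabla\log p + c^{-1}\diver c)$. The more robust route is: consider the $C^2$ function $\phi_i := \log x^i$ on the simplex, with gradient (in the manifold sense) $\nabla \phi_i$ whose $j$-th component is $\frac{1}{x^i}(\delta_{ij} - \ldots)$; then $\nabla\phi_i^\top c\,\nabla\phi_i = \frac{c_{ii}(x)}{(x^i)^2}$ using $c(x)\mathbf{1} = 0$, so that
\[
\int_{\Delta^{d-1}_+} \frac{c_{ii}(x)}{(x^i)^2}\,p(x)\,dx = \|\nabla\phi_i\|^2_{\mathcal{H}^{c,p}}.
\]
Now $e^{\phi_i}(x) = x^i$ is a (bounded, affine, hence) concave function on the simplex, so $\phi_i$ is exponentially concave, i.e.\ $\nabla\phi_i \in \partial\mathcal{E}$. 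By Theorem~\ref{partialE_thm}, $\partial\mathcal{E}$ is a \emph{bounded} subset of $\mathcal{H}^{c,p}$, so $\|\nabla\phi_i\|^2_{\mathcal{H}^{c,p}} \leq (2\|\ell\|_{\mathcal{H}^{c,p}})^2 < \infty$ by Assumption~\ref{finite_assumption}(i). Combining with the pointwise bound above yields the claim.

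The main obstacle I anticipate is a technical one: the function $x \mapsto \log x^i$ is not bounded near the boundary face $\{x^i = 0\}$, so strictly speaking one must first verify that $\nabla\phi_i$ genuinely lies in $\mathcal{H}^{c,p}$ before invoking boundedness of $\partial\mathcal{E}$ — but this is precisely the finiteness we are trying to prove, so there is a potential circularity. The clean way around it is to approximate $\phi_i$ from within $\partial\mathcal{E} \cap C^1$ by bounded concave functions $G_i^{(n)}(x) = x^i \wedge (1 - 1/n)$ mollified appropriately (these are bounded, concave, and $C^\infty$, hence in $\partial\mathcal{E}\cap C^1$), apply the uniform bound $\|\nabla\phi_i^{(n)}\|_{\mathcal{H}^{c,p}} \leq 2\|\ell\|_{\mathcal{H}^{c,p}}$ from the proof of Theorem~\ref{partialE_thm}, and pass to the limit via Fatou's lemma since $\nabla\phi_i^{(n)} \to \nabla\phi_i$ pointwise a.e.\ with $(\nabla\phi_i^{(n)})^\top c\,\nabla\phi_i^{(n)} \uparrow (\nabla\phi_i)^\top c\,\nabla\phi_i$ on the set where the truncation is inactive. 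This gives $\|\nabla\phi_i\|^2_{\mathcal{H}^{c,p}} \leq 4\|\ell\|^2_{\mathcal{H}^{c,p}} < \infty$ without any circularity, and the proof concludes as above.
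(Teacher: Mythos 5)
Your proof is correct, but the way you handle the diagonal terms differs from the paper's argument. The paper proves $\int \frac{c_{ii}}{(x^i)^2}p<\infty$ probabilistically: the portfolio fully invested in asset $i$ has $\log V_T^{\pi_i}=\log X_T^i$, and Lemma~\ref{worst_case_growth} shows that $\|\ell-e_i/x^i\|_{\mathcal{H}^{c,p}}=\infty$ would force $T^{-1}\log X_T^i\to-\infty$, hence $X_T^i\to 0$ $\tilde\P$-a.s., contradicting ergodicity; the triangle inequality then bounds $\|e_i/x^i\|_{\mathcal{H}^{c,p}}$. You instead route through the exponential-concavity machinery: $x\mapsto x^i$ generates a concave portfolio, so the uniform bound $\|\nabla\phi\|_{\mathcal{H}^{c,p}}\le 2\|\ell\|_{\mathcal{H}^{c,p}}$ from the proof of Theorem~\ref{partialE_thm} applies to truncated approximations, and Fatou passes to the limit. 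Both arguments ultimately rest on the dichotomy in Lemma~\ref{worst_case_growth} (an infinite norm forces growth rate $-\infty$, which is incompatible with a nonnegative drift process $\Gamma$, respectively with ergodicity), and neither is circular since Theorem~\ref{partialE_thm} and Lemma~\ref{C2_lem} are proved without this lemma. Your reduction of the off-diagonal terms via positive semidefiniteness of $\operatorname{diag}(1/x)\,c\,\operatorname{diag}(1/x)$ is exactly the paper's. Two minor remarks: your truncation detour, while correct, is not strictly necessary --- $G(x)=x^i$ is already a positive $C^2$ concave function with $LG\equiv 0$, so Lemma~\ref{C2_lem} applies to it directly and yields $\|\ell-e_i/x^i\|_{\mathcal{H}^{c,p}}=\|\ell\|_{\mathcal{H}^{c,p}}<\infty$ without any a priori membership in $\mathcal{H}^{c,p}$ (finiteness is a conclusion of the lemma, not a hypothesis); and the mollification of $x^i\wedge(1-1/n)$ near the boundary needs the small shift or rescaling trick from the proof of Theorem~\ref{partialE_thm} to preserve positivity, which you gesture at with ``mollified appropriately.''
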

\begin{proof}
	Consider the portfolio $ \pi_i$ given by $\pi^i_i(x) = 1$, $\pi^j_i(x) = 0$ for all $j \ne i$. For such a portfolio we have that 
	 $\log V_T^{\pi_i} = \log X_T^i$. By Lemma \ref{worst_case_growth} we have that $\tilde \P$-a.s.
	\begin{align*}\lim_{T\to \infty} T^{-1} \log X_T^i = \lim_{T\to \infty} T^{-1} \log V_T^{\pi_i} & =  \frac{1}{2}\|\ell\|^2_{\mathcal{H}^{c,p}} - \frac{1}{2}\|\ell - h_i\|^2_{\mathcal{H}^{c,p}}
	\end{align*}
	where $h_i(x) = \pi_i(x)/x$.
	
Now if $\|\ell - h_i\|^2_{\mathcal{H}^{c,p}} = \infty$ then we would have that $\lim_{T\to \infty} T^{-1}\log X_T^i = - \infty$; $\tilde \P$-a.s., which would in turn imply that $\lim_{T \to \infty} X_T^i = 0; \tilde \P$-a.s. This contradicts the ergodicity of the process $X_T$ so the norm must be finite. But then by the triangle inequality we see that 
$$\|h_i\|^2_{\mathcal{H}^{c,p}} \leq 2\|\ell - h_i\|^2_{\mathcal{H}^{c,p}} + 2\|\ell\|^2_{\mathcal{H}^{c,p}} < \infty.$$ 
Expanding out we have that 
$$ \|h_i\|^2_{\mathcal{H}^{c,p}}  = \int_{\Delta^{d-1}_+} \frac{1}{(x^i)^2} c_{ii}(x)p(x)dx$$ which proves the claim when $i = j$. To handle the general case let $a(x) = \text{diag}(1/x)c(x) \text{diag}(1/x)$ which is a positive semidefinite matrix for every $x \in \Delta^{d-1}_+$. By properties of positive semidefinite matrices we have for all $i,j$ that 
$$\frac{|c_{ij}(x)|}{x^ix^j} = |a_{ij}(x)| \leq \frac{1}{2}(a_{ii}(x) + a_{jj}(x)) =\frac{1}{2}\( \frac{c_{ii}(x)}{(x^i)^2} + \frac{c_{jj}(x)}{(x^j)^2}\).$$
It follows from this bound that $$\int_{\Delta^{d-1}_+} \frac{1}{x^ix^j}|c_{ij}(x)|p(x)dx \leq  \int_{\Delta^{d-1}_+} \frac{1}{2}\( \frac{c_{ii}(x)}{(x^i)^2} + \frac{c_{jj}(x)}{(x^j)^2}\)p(x)dx<\infty$$ completing the proof. 
\end{proof}

\begin{proof}[Proof of Theorem \ref{opt_long_only}]
	By \eqref{2d_func_gen}, we have that $\frac{\hat \pi^1(x)}{x} = 1 + (1-x)\varphi^{\hat \pi}(x)$
where
\begin{equation} \label{varphi_opt}\varphi^{\hat \pi}(x) = \begin{cases}
1/x & \text{if } \tilde \ell(x) > 1/x \\
-1/(1-x) & \text{if }  \tilde \ell(x) < -1/(1-x)\\
\tilde \ell(x) & \text{otherwise.} 
\end{cases}
\end{equation}
We obtain (in a similar way to the proof of Theorem \ref{main_thm}) the upper bound
\begin{align}  
	\lambda_{\text{long}} \leq \sup_{\pi \in \Xi} g(V^{\pi};\tilde \P) & = \frac{1}{2}\int_{\Delta^{d-1}_+} \ell^\top c \ell p - \frac{1}{2}\inf_{\pi \in \Xi} \int_{\Delta^{d-1}_+} \(\ell(x) - \frac{\pi(x)}{x}\)^\top c(x) \(\ell(x) - \frac{\pi(x)}{x}\)p(x)dx  \nonumber\\
	& = \frac{1}{2} \int_0^1 \tilde \ell^2 \tilde c \tilde p(x)dx - \frac{1}{2}\inf_{\pi \in \Xi} \int_0^1(\tilde \ell -\varphi^\pi(x) )^2\tilde c\tilde p(x)dx \label{inf_xi}
\end{align}
where we performed the change of variables $x_2 = 1-x_1$. By virtue of \eqref{2d_long_only_constraint}, which is a pointwise constraint, the infimum in \eqref{inf_xi} can be computed by pointwise minimization of the integrand. 
Because $\tilde c\tilde p(x) > 0$ the minimization problem is equivalent to
$$\min_{-1/(1-x) \leq y \leq 1/x} ( \tilde \ell(x)-y)^2$$ for every $x \in (0,1)$. Since the objective function is a one dimensional quadratic it follows that for each fixed $x$ the minimum is achieved at $\hat y = \varphi^{\hat \pi}(x)$. This establishes the upper bound
$$ \lambda_{\text{long}} \leq \frac{1}{2}\int_0^1 \(\tilde \ell^2 \tilde c - \(\tilde \ell -\varphi^{\hat \pi}\)^2\tilde c\)\tilde p(x)dx.$$

To obtain the lower bound we will again approximate the generating function of $\hat \pi$ by $C^2$ generating functions. Extend $\varphi^{\hat \pi}$ to $\R$ by setting it to be identically zero outside of $(0,1)$ and let $\psi \in C_c^\infty(\R)$ be such that $\psi \geq 0$, supp($\psi$) = $[0,1]$ and $\int_\R \psi = 1$. Define the mollifiers $\psi_n(x) = n^2\psi(n^2x)$ and set $\varphi_n(x) := \frac{n}{n+1}(\varphi^{\hat \pi}\I{(1/n,1-1/n)} * \psi_n)(x)$. We have that $\varphi_n \in C_c^\infty(\R)$ and claim that $\varphi_n \to \varphi^{\hat \pi}$ pointwise. To show this we fix $x \in (0,1)$ and let $\epsilon > 0$ be arbitrary. By continuity of $\varphi^{\hat \pi}$ we can find an $N$ large enough so that
\begin{enumerate}[label = ({\roman*}),noitemsep]
	\item  if $|x-y| < 2/N^2$ then $|\varphi^{\hat \pi}(x) - \varphi^{\hat \pi}(y)| \leq \epsilon / 2$,
	\item $|\varphi^{\hat \pi}(x)|/(N+1) \leq \epsilon/2$,
	\item $x -1/N^2 \geq 1/N$
 \end{enumerate} Then for all $n \geq N$ we see that 
\begin{align*}
|\varphi_n(x) - \varphi^{\hat \pi}(x)| & \leq \frac{n}{n+1} \left|\int_{\R} (\varphi^{\hat \pi}(y)\I_{(1/n,1-1/n)}(y) - \varphi^{\hat \pi}(x))\psi_n(x-y)dy\right| + \frac{|\varphi^{\hat \pi}(x)|}{n+1}\\
& \leq \int_{x-1/n^2}^{x+1/n^2} |\varphi^{\hat \pi}(y) - \varphi^{\hat \pi}(x)|n^2\psi(n^2(x-y))dy +  \frac{|\varphi^{\hat \pi}(x)|}{n+1}&  \text{(iii)} \\
& \leq \frac{\epsilon}{2} + \frac{\epsilon}{2} = \epsilon & \text{(i) and (ii)}
\end{align*}
which proves the pointwise convergence.
  Next we claim that 
\begin{equation} \label{phin_long_only} -\frac{1}{1-x} \leq \varphi_n(x) \leq \frac{1}{x}
\end{equation} for every $x \in (0,1)$ and $n \in \N$. We fix $n$ and estimate that 
\begin{align}\varphi_n(x) & = \frac{n}{n+1}\int_{(x - 1/n^2) \lor 1/n}^{(x+1/n^2) \land (1-1/n)}  \varphi^{\hat \pi}(y)n^2\psi(n^2(x-y))dy \label{2d_mollify} \\
& \leq \frac{n}{n+1}\int_{(x - 1/n^2) \lor 1/n}^{(x+1/n^2) \land (1-1/n)}  \frac{1}{y}n^2\psi(n^2(x-y))dy \nonumber\\
& \leq \frac{n}{n+1}\frac{1}{(x - 1/n^2) \lor 1/n} \nonumber
\end{align}
If $x  - 1/n^2 \geq 1/n$ then
$$\frac{n}{n+1}\frac{1}{(x - 1/n^2) \lor 1/n} = \frac{n}{(n+1)(x-1/n^2)} = \frac{n}{nx + x - 1/n^2 - 1/n} \leq \frac{1}{x}.$$ Conversely if $x < 1/n + 1/n^2$ then 
 $$\frac{n}{n+1}\frac{1}{(x - 1/n^2) \lor 1/n} = \frac{n^2}{n+1} \leq \frac{1}{x}$$ so that in any case we have $\varphi_n(x) \leq 1/x$. For the lower bound we proceed from \eqref{2d_mollify} to obtain
\begin{align*}\varphi_n(x) & \geq - \frac{n}{n+1}\int_{(x - 1/n^2) \lor 1/n}^{(x+1/n^2) \land (1-1/n)}  \frac{1}{1-y}n^2\psi(n^2(x-y))dy \\
& \geq  -\frac{n}{n+1}\frac{1}{(x+1/n^2) \land (1-1/n)}.
\end{align*}
A similar case analysis shows that $\varphi_n(x) \geq - 1/(1-x)$ for every $x \in (0,1)$. Since $\varphi_n$  satisfies \eqref{phin_long_only} and is $C^2$ we see by Lemma \ref{C2_lem} that
\begin{equation} \label{2d_lower_bound}\lambda_{\text{long}} \geq \inf_{\P \in \Pi} g(V^{\pi_{G_n}};\P) = g(V^{\pi_{G_n}};\tilde \P) =\frac{1}{2}\int_0^1 (\tilde \ell^2 \tilde c  - (\tilde \ell - \varphi_n)^2 \tilde c)\tilde p(x)dx
\end{equation}
where $G_n(x) = \exp(\int_\theta^x \varphi_n(y)dy)$ for arbitrary $\theta \in (0,1).$ By \eqref{phin_long_only} we deduce the bound 
\begin{align*}(\tilde \ell - \varphi_n)^2\tilde c \tilde p (x) & \leq \max\big\{(\tilde \ell(x) - 1/x)^2 \tilde c \tilde p(x), (\tilde \ell(x) + 1/(1-x))^2\tilde c \tilde p(x)\big\}\\
& \leq   4 \tilde \ell^2 \tilde c \tilde p(x) + \frac{2}{x^2}\tilde c \tilde p(x) +  \frac{2}{(1-x)^2} \tilde c \tilde p(x).
\end{align*}
Assumption \ref{finite_assumption}(i) and Lemma \ref{dom_conv} guarantee that the function on the right hand side is integrable so by the Lebesgue dominated convergence theorem we can send $n \to \infty$ in \eqref{2d_lower_bound} to obtain
$$ \lambda_{\text{long}} \geq  \frac{1}{2}\int_0^1 \(\tilde \ell^2 \tilde c - \(\tilde \ell -\varphi^{\hat \pi}\)^2\tilde c\)\tilde p(x)dx.$$
This completes the proof.
\end{proof}
We have been able to explicitly solve the robust optimal growth problem over all portfolios in feedback form in the two dimensional case. Next we investigate the relationship between this problem and the concave problem studied in Section \ref{long_only}. The following example shows that the optimal portfolio $\hat \pi$ from Theorem \ref{opt_long_only} need not be generated by a concave function.
\begin{eg}
	Let $\zeta: (0,1) \to (0,1)$ be given by $\zeta(x) = \exp(-\cos(x))$. Choose a smooth, nonnegative cutoff function $\psi:(0,1) \to [0,1]$ such that $\psi(x) = 1$ for $x \in (1/3,2/3)$ and $\psi(x) = 0$  for $x \in (0,1/4) \cup (3/4,1)$. Then define
	$$\tilde c(x) = \zeta(x)\psi(x) + x(1-x)(1-\psi(x));\quad \quad \tilde p(x) = \frac{1}{Z}\(\psi(x) + x(1-x)(1-\psi(x))\)$$
	where $Z = \int_0^1   (\psi(x) + x(1-x)(1-\psi(x)))dx$.  The inputs $c$ and $p$ from Section~\ref{a_motivating_example} with $a=2$ satisfy 
	$$ \tilde c(x) = c(x,1-x);\quad \quad \tilde p(x) = K p(x,1-x)$$ for $x \in (0,1/4) \cup (3/4,1)$ and for some constant $K > 0$. Thus, we have that $(c,p)$ satisfy Assumption~\ref{finite_assumption} and since this assumption concerns the behaviour of the inputs near the boundary of the simplex it follows that ($\tilde c, \tilde p$) satisfy Assumption~\ref{finite_assumption} as well. 
	
	By Theorem \ref{opt_long_only} the optimal portfolio is generated by $\hat G(x) = \exp(\int_\theta^x \varphi^{\hat \pi}(y)dy)$ where $\theta \in (0,1)$ is arbitrary and $\varphi^{\hat \pi}$ is given by \eqref{varphi_opt}. Note that for $x \in (1/3,2/3)$ we have $\tilde \ell(x) = \frac{1}{2}\sin(x)$ so that in this interval the inequalities $-1/(1-x) \leq \tilde \ell(x) \leq 1/x$ are satisfied. Thus, it follows that $\varphi^{\hat \pi}(x) = \frac{1}{2}\sin(x)$ for $x \in (1/3,2/3)$. For such $x$ we now compute that 
	$$\hat G''(x) = \hat G(x) (\varphi^{\hat \pi}(x)^2 + (\varphi^{\hat \pi})'(x)) = \hat G(x) \(\frac{1}{4} \sin^2(x) + \frac{1}{2}\cos(x)\) > 0.$$
	This shows that $\hat G$ is not concave on $(1/3,2/3)$.
\end{eg} 
The next proposition gives a necessary and sufficient condition for the optimal portfolio in feedback form to be generated by a concave function, so that it is also the solution to the less general robust growth-optimization problem considered is Section \ref{long_only}.
\begin{prop} \label{concave_suff}
Let $A = \{x \in (0,1): -\frac{1}{1-x} < \tilde \ell(x) < \frac{1}{x}\}$. The portfolio $\hat \pi$ from Theorem \ref{opt_long_only} is functionally generated by a concave function if and only if $\tilde \ell(x)^2 + \tilde \ell'(x) \leq 0$ for every $x \in A$.
\end{prop}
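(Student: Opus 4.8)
The plan is to reduce the proposition to a pointwise second-order condition on the generating function $\hat G$ of $\hat\pi$ and then to verify that condition directly from the explicit formula \eqref{varphi_opt} for $\varphi^{\hat\pi}$. First I would record the reduction ``$\hat\pi$ is generated by a concave function $\iff$ $\hat G$ is concave'', where $\hat G(x):=\exp\big(\int_\theta^{x}\varphi^{\hat\pi}(y)\,dy\big)$. The implication ``$\Leftarrow$'' is immediate since $\hat G$ generates $\hat\pi$ by Theorem~\ref{opt_long_only}; for ``$\Rightarrow$'' one uses that, for $d=2$, the generating function of a feedback portfolio is unique up to a positive multiplicative constant. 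Indeed, if $G$ is concave and generates $\hat\pi$, then by \eqref{2d_func_gen} and the master formula (Theorem~\ref{master_formula}) the continuous function $f(x):=\log G(x,1-x)-\log\hat G(x)$ is such that $f(X^1_t)$ is of finite variation under $\tilde\P$; since $X^1$ is a non-degenerate ergodic diffusion this forces $f$ to be constant, so $G(\cdot,1-\cdot)$ is a positive multiple of $\hat G$ and hence $\hat G$ is concave.

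Next I would turn the concavity of $\hat G$ into an a.e.\ inequality. By Assumption~\ref{input_assumption} we have $\tilde p,\tilde c\in C^{2,\gamma}((0,1);(0,\infty))$, so $\tilde\ell=\tfrac12(\log\tilde p\tilde c)'\in C^{1,\gamma}((0,1))$, and then \eqref{varphi_opt} shows $\varphi^{\hat\pi}$ is continuous and locally Lipschitz on $(0,1)$ (it is the clamp $\max\{-1/(1-x),\min\{\tilde\ell(x),1/x\}\}$). Consequently $\hat G\in C^1$ with $\hat G'=\hat G\,\varphi^{\hat\pi}$ locally Lipschitz, so $\hat G''(x)=\hat G(x)\big((\varphi^{\hat\pi}(x))^2+(\varphi^{\hat\pi})'(x)\big)$ for a.e.\ $x$. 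Since $\hat G>0$, a $C^1$ function is concave iff its derivative is nonincreasing, and the locally Lipschitz $\hat G'$ is nonincreasing iff $(\hat G')'\le 0$ a.e., we conclude that $\hat G$ is concave iff $(\varphi^{\hat\pi})^2+(\varphi^{\hat\pi})'\le 0$ for a.e.\ $x\in(0,1)$.

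Then I would evaluate this quantity piece by piece using \eqref{varphi_opt}. On the open set $\{\tilde\ell(x)>1/x\}$ one has $\varphi^{\hat\pi}=1/x$, so $(\varphi^{\hat\pi})^2+(\varphi^{\hat\pi})'=1/x^2-1/x^2=0$, and symmetrically the quantity vanishes on $\{\tilde\ell(x)<-1/(1-x)\}$ (there $\varphi^{\hat\pi}=-1/(1-x)$). On $A$ we have $\varphi^{\hat\pi}=\tilde\ell$, so the quantity equals $\tilde\ell^2+\tilde\ell'$. On the level sets $\{\tilde\ell=1/x\}$ and $\{\tilde\ell=-1/(1-x)\}$ the a.e.\ derivative of $\tilde\ell$ coincides with that of $1/x$, respectively $-1/(1-x)$, so the quantity again vanishes a.e. Hence $\hat G$ is concave iff $\tilde\ell^2+\tilde\ell'\le 0$ a.e.\ on $A$; and as $A$ is open and $\tilde\ell^2+\tilde\ell'$ is continuous on $A$, this is equivalent to the stated inequality holding at every point of $A$, which finishes the proof.

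I expect the main obstacle to be the bookkeeping at the interfaces between the three regimes in \eqref{varphi_opt}: justifying the a.e.\ product-rule identity for $\hat G''$ across these interfaces, correctly handling the level sets $\{\tilde\ell=1/x\}$ and $\{\tilde\ell=-1/(1-x)\}$ (which need not have Lebesgue measure zero, so one must use that the a.e.\ derivative of a locally Lipschitz function agrees a.e.\ with the classical derivative on any set where it coincides with a $C^1$ function), and upgrading ``a.e.\ on $A$'' to ``everywhere on $A$'' using openness of $A$ and continuity of $\tilde\ell^2+\tilde\ell'$. The uniqueness-of-generating-function step also warrants care, since a competing concave generating function need not be $C^2$, so one must invoke the standard occupation-time fact that a continuous function of the non-degenerate diffusion $X^1$ is of finite variation only when it is constant, rather than relying on a naive It\^o computation.
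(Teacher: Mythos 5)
Your proposal is correct, and in its key technical step it takes a genuinely different route from the paper. The paper works directly with the canonical generator $\hat G(x)=\exp(\int_\theta^x\varphi^{\hat\pi})$ and proves concavity by verifying the one-sided monotonicity condition $\limsup_{h\downarrow 0}(\hat G'(x+h)-\hat G'(x))/h\le 0$ at \emph{every} point, with an explicit subsequence case analysis at the interface sets $\{\tilde\ell(y)=1/y\}$ and $\{\tilde\ell(y)=-1/(1-y)\}$ (splitting according to whether the minimum in $\varphi^{\hat\pi}(x+h_n)$ is attained by $1/(x+h_n)$ or by $\tilde\ell(x+h_n)$ along the extremal sequence); the converse is the same local computation $\hat G''(x)>0$. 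You instead observe that $\varphi^{\hat\pi}$ is the clamp of the $C^1$ function $\tilde\ell$ between $-1/(1-x)$ and $1/x$, hence locally Lipschitz, reduce concavity of the $C^1$ function $\hat G$ to the a.e.\ inequality $(\varphi^{\hat\pi})^2+(\varphi^{\hat\pi})'\le 0$, and dispose of the level sets by the standard fact that a.e.-derivatives of two functions agree a.e.\ on any set where the functions agree; the upgrade from a.e.\ to everywhere on $A$ uses openness of $A$ and continuity of $\tilde\ell^2+\tilde\ell'$. This buys a shorter and more robust argument (no bookkeeping about the structure of the interface sets, which indeed need not be null), at the cost of invoking a few measure-theoretic facts the paper avoids. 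You also make explicit a step the paper leaves implicit: the converse direction requires that \emph{no} concave function generates $\hat\pi$, not merely that $\hat G$ fails to be concave, and your finite-variation/occupation-time argument for essential uniqueness of the generating function in $d=2$ (using that $\log G-\log\hat G$ is a difference of convex functions, so Itô--Tanaka applies and nondegeneracy of $\tilde c$ forces it to be constant) correctly closes that gap.
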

\begin{remark} \label{sqrtcp}
	Since $A$ is an open set we can write $A = \cup_{n \in \N} I_n$ for some pairwise disjoint open intervals $\{I_n\}_{n \in \N}$. The condition on $\tilde \ell$ in the statement of the previous proposition is then equivalent to $\sqrt{\tilde c\tilde p}$ being a concave function on each $I_n$.
\end{remark}
\begin{proof}
	First suppose that $\tilde \ell(x)^2 + \tilde \ell'(x) \leq 0$ for every $x \in A$. Note that $\hat \pi$ is functionally generated by 
	$\hat G(x) := \exp(\int_\theta^x \varphi^{\hat \pi}(y)dy)$ for arbitrary $\theta \in (0,1)$ where $\varphi^{\hat \pi}$ is given by \eqref{varphi_opt}. Now fix $$x \in B:=  A \cup \{y: \tilde \ell (y) > 1/y\} \cup \{y: \tilde \ell(y) < -1/(1-y)\}.$$ Since $B$ is an open set, it is clear that $\varphi^{\hat \pi}$ is differentiable at $x$ and we have that 
	\begin{equation} \label{concave_criterion} \varphi^{\hat \pi}(x)^2 + (\varphi^{\hat \pi})'(x) \leq 0
	\end{equation}
	by our assumption on $\tilde \ell$ together with the fact that the functions $1/x$ and $-1/(1-x)$ satisfy the above relationship (with equality).
		Next we claim that 
	\begin{equation} \label{G'_decreasing}
	\limsup_{h \downarrow 0} \frac{\hat G'(x+h) - \hat G'(x)}{h} \leq 0
	\end{equation} 
	for every $x \in (0,1)$. Indeed, an application of the chain rule together with \eqref{concave_criterion} yields $\hat G''(x) \leq 0$ for all $x \in B$, which in turn implies \eqref{G'_decreasing} for $x \in B$. Next fix $x \in \{y: \tilde \ell(y) =1/y\}$. Let $\{h_n\}_{n \in \N}$ be a sequence converging to $0$ which achieves the limsup in \eqref{G'_decreasing} for our choice of $x$. Since $\hat G'(x) = \hat G(x)\varphi^{\hat \pi}(x) > 0$ we have that $\hat G'(x+h_n) > 0$ for $h_n$ small enough. For such $h_n$, $\hat G'(x+h_n) =\hat G(x+h_n)\min\{1/(x+h_n),\tilde \ell(x+h_n)\}$. Assume first that $\min\{1/(x+h_n),\tilde \ell(x+h_n)\} = 1/(x+h_n)$ for infinitely many $n$. Denoting the subsequence where this occurs by $h_{n_k}$ we see that 
	\begin{align*}
	\lim_{k \to \infty} \frac{\hat G'(x+h_{n_k})- \hat G'(x)}{h_{n_k}}&  = \lim_{k \to \infty} \frac{\hat G(x+h_{n_k})\frac{1}{x+h_{n_k}} - \hat G(x)\frac{1}{x}}{h_{n_k}} \\
	& = \(\hat G(y)\frac{1}{y}\)'\bigg|_{y = x} = \hat G(x)\(\frac{\varphi^{\hat \pi}(x)}{x} - \frac{1}{x^2}\) = 0 
	\end{align*}
	where we used the fact that $\varphi^{\hat \pi}(x) = 1/x$. Since the original sequence converged to the limsup we see that \eqref{G'_decreasing} holds in this case. If instead we have that $\min\{1/(x+h_n),\tilde \ell(x+h_n)\} = \tilde \ell(x+h_n)$ for infinitely many $n$ then by considering the subsequence $h_{n_m}$ where this occurs we again see that 
	\begin{align*}	\lim_{m \to \infty} \frac{\hat G'(x+h_{n_m})- \hat G'(x)}{h_{n_m}}&  = \lim_{m \to \infty} \frac{\hat G(x+h_{n_m})\tilde \ell(x+h_{n_m}) - \hat G(x)\tilde \ell(x)}{h_{n_m}} \\
	& = (\hat G\tilde \ell)'(x) = \hat G(x)\(\varphi^{\hat \pi}(x)\tilde \ell(x) + \tilde \ell'(x)\)\\
	& = \hat G(x)(\tilde \ell (x)^2 + \tilde \ell'(x)) \leq 0
	\end{align*} 
where we used the fact that $\varphi^{\hat \pi}(x) = \tilde \ell(x)$ in the last equality and the assumption on $\tilde \ell$ together with continuity of $\tilde \ell^2 + \tilde \ell'$ in the inequality. This establishes that \eqref{G'_decreasing} holds for $x \in \{y: \tilde \ell(y) =1/y\}$. A similar argument yields that \eqref{G'_decreasing} holds for $x \in \{y: \tilde \ell(y) = -1/(1-y)\}$ which proves that \eqref{G'_decreasing} is satisfied for all $x \in (0,1)$. This implies that $\hat G'$ is nonincreasing, which is equivalent to $\hat G$ being concave and completes the proof of the forward direction.

Conversely, assume that there exists an $x \in A$ such that $\tilde \ell(x)^2 +\tilde \ell'(x) > 0$. Then a similar calculation to the one above yields $\hat G''(x) > 0$. Thus, we can find an $\epsilon > 0$ such that $\hat G'$ is increasing on $(x-\epsilon,x+\epsilon)$. It follows that $\hat G$ is not concave.
\end{proof}

 \section{A Tractable Class of Models} \label{model}
 The theory developed by Kardaras and Robertson provides a theoretical way to perform asymptotic growth maximization, but when the number of assets $d$ is large, solving the variational problem \eqref{variational} or equivalently the PDE \eqref{euler_lagrange} can be intractable. Although the main focus of this paper is the study of the long-only robust growth-optimization problem as defined in Section~\ref{long_only}, it is useful to develop large classes of models in the context of SPT where tractable analyses can be performed.  Indeed, without the explicit formula \eqref{vol_stabilized_portfolio} for the growth optimal portfolio in the volatility-stabilized market example it was not clear from the variational problem \eqref{variational} that the optimal strategy may exhibit undesirable features, such as the short-selling we observed. 
 
  In Section~\ref{model_def} we propose the use of a specific, but fairly general, volatility structure which generalizes the volatility structures previously considered in the SPT literature, while still providing closed form solutions to the unconstrained problem. Section~\ref{verify_assumptions} is dedicated to developing conditions on the inputs which ensure Assumptions~\ref{input_assumption} and \ref{finite_assumption} hold for the class of models introduced in Section~\ref{model_def}. Lastly, in Section~\ref{rank_based_model} we discuss when the aforementioned tractable class of models can be understood to come from a rank-based model in the sense of Section~\ref{rank_based_section}.
 
 \subsection{The Model} \label{model_def}
Take functions $g:\Delta^{d-1}_+ \to (0,\infty)$, $f_{i}: (0,1) \to (0,\infty)$,  $f_{ij}:\tilde E \to [0,\infty)$ and set
\begin{align} \label{c_def}
c_{ij}(x) &:= \begin{cases}
- f_{ij}( x^{-ij})f_i(x^i)f_j(x^j)g(x) & i \ne j\\
\sum\limits_{k\ne i} f_{ik}( x^{-ik})f_i(x^i)f_k(x^k) g(x) & i = j
\end{cases} & 1 \leq i,j \leq d.
\end{align}
Here $\tilde E = \{y \in (0,\infty)^{d-2}:\sum_{i=1}^{d-2} y^i < 1\}$ and $ x^{-ij}$ stands for the $d-2$ dimensional vector obtained from $x \in \Delta^{d-1}_+$ by removing the $i^{\text{th}}$ and $j^\text{th}$ coordinate. We impose the following assumption on the functions:

\begin{assum} \label{function_assumption} {} \
	\begin{enumerate}[label = (\roman*)]
		\itemsep-0.2em
		\item  The functions $g,\{f_i\}_{i=1}^d$ and $\{f_{ij}\}_{i \ne j}$ are all $C^{2,\gamma}$ for some $\gamma \in (0,1]$,
		\item  For every $i$, $f_i$ satisfies $\lim_{x \downarrow 0} f_i(x) = 0$ and $\lim_{x \uparrow 1} f_i(x) < \infty$,
		\item  For all $i\ne j$, $f_{ij}$ is bounded and $f_{ij} = f_{ji}$.
	\end{enumerate} 
\end{assum}

\begin{thm} \label{divc_lemma}With these specifications it follows that
\[
	c^{-1}\diver c(x) = \nabla \log \(g(x)\prod\limits_{i=1}^d f_i(x^i)\).
\]
\end{thm}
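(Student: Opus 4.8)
The plan is to verify the asserted identity by a direct pointwise computation: I would check that the vector field $y(x) := \nabla\log\bigl(g(x)\prod_{i=1}^d f_i(x^i)\bigr)$ solves the defining linear equation $c(x)\,y(x) = \diver c(x)$. Since $\boldsymbol 1 \in \mathrm{Ker}(c(x))$ and $\diver c(x)$ lies in the range of $c(x)$, exhibiting one gradient-valued solution is exactly what is needed, as this is the representative singled out in Section~\ref{setup}. First I would fix notation: set $h(x) := g(x)\prod_{k=1}^d f_k(x^k)$ and, for $i\ne j$, write $\Phi_{ij}(x) := f_{ij}(x^{-ij})f_i(x^i)f_j(x^j)g(x)$, so that by \eqref{c_def} and the symmetry $f_{ij}=f_{ji}$ in Assumption~\ref{function_assumption}(iii) we have $c_{ij} = -\Phi_{ij}$ for $i\ne j$, $c_{ii} = \sum_{k\ne i}\Phi_{ik}$, and $\Phi_{ij} = \Phi_{ji}$. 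Extending all functions smoothly to a neighbourhood of $\Delta^{d-1}_+$, I would note that the rows of $c$ sum to zero (which re-derives $\boldsymbol 1\in\mathrm{Ker}(c)$) and use the unambiguous formula $\diver c_i = \sum_{j\ne i}(\partial_j-\partial_i)c_{ij} = -\sum_{j\ne i}(\partial_j-\partial_i)\Phi_{ij}$.

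The crux of the argument is the identity $(\partial_j-\partial_i)\Phi_{ij} = \Phi_{ij}\,(\partial_j-\partial_i)\log h$ for each pair $i\ne j$. This holds because $f_{ij}(x^{-ij})$ does not involve the coordinates $x^i$ or $x^j$, so it is annihilated by $\partial_j-\partial_i$; the surviving factor $f_i(x^i)f_j(x^j)g(x)$ has the same $(\partial_j-\partial_i)$-logarithmic derivative as $h$, since the extra factors $f_k(x^k)$ with $k\notin\{i,j\}$ appearing in $\log h$ are killed by both $\partial_i$ and $\partial_j$. Writing $y_k = \partial_k\log h$, this converts the divergence formula into $\diver c_i = -\sum_{j\ne i}\Phi_{ij}(y_j-y_i)$. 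On the other hand, expanding the matrix product using $c_{ii}=\sum_{j\ne i}\Phi_{ij}$ gives
\[
(c(x)\,y)_i = c_{ii}y_i + \sum_{j\ne i}c_{ij}y_j = \sum_{j\ne i}\Phi_{ij}(y_i-y_j) = -\sum_{j\ne i}\Phi_{ij}(y_j-y_i),
\]
which coincides with $\diver c_i$. Thus $c(x)\,y(x) = \diver c(x)$ at every $x\in\Delta^{d-1}_+$, proving the claim.

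I expect the main obstacle to be organizational rather than analytic: carefully tracking the argument $x^{-ij}$ of the off-diagonal coefficients and the symmetry $f_{ij}=f_{ji}$ (without which $\Phi_{ij}$ is neither well defined nor symmetric), and recognizing that it is precisely the \emph{difference} operator $\partial_j-\partial_i$ — differentiation along the simplex — that makes the $f_{ij}$ factor drop out, so the computation is insensitive to the chosen extension. The calculation itself uses only enough smoothness for first derivatives together with positivity of $g$ and the $f_i$ (so that $\log h$ makes sense) and the symmetry (iii); the full $C^{2,\gamma}$ hypothesis in Assumption~\ref{function_assumption}(i) is needed only so that $c$ meets the regularity demanded by Assumption~\ref{input_assumption}, and the boundary behaviour in (ii) plays no role here.
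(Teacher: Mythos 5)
Your proof is correct and is essentially the same as the paper's: both verify $c(x)\nabla\log\bigl(g(x)\prod_i f_i(x^i)\bigr)=\diver c(x)$ row by row, with the key observation (written out explicitly in the paper's displayed computation) that $(\partial_i-\partial_j)$ annihilates the factor $f_{ij}(x^{-ij})$ so that $(\partial_i-\partial_j)\Phi_{ij}=\Phi_{ij}(\partial_i-\partial_j)\log h$. The only cosmetic difference is that you compute $\diver c_i$ and $(c\,y)_i$ separately and match them, whereas the paper transforms one directly into the other.
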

\begin{proof}
	It suffices to show that $c(x) \nabla \log(g(x)\prod_{i=1}^d f_i(x^i)) = \diver c (x)$. Writing $c_i(x)$ for the $i^{\text{th}}$ row of the matrix $c(x)$, we compute for every $i \in \{1,\dots,d\}$ that
	\begin{align*} c_i(x)& \nabla\log\( g(x)\prod_{i=1}^d f_i(x^i)\) \\
	& = \sum\limits_{\substack{j=1 \\ j \ne i}}^d f_{ij}(x^{-ij})f_i(x^i)f_j(x^j)g(x)\(\frac{\partial_i f_i(x^i)}{f_i(x^i)}  - \frac{\partial_j f_j(x^j)}{f_j(x^j)}+ \frac{\partial_ig(x)}{g(x)}  - \frac{\partial_j g(x)}{g(x)}\) \\
	&= \sum\limits_{\substack{j=1 \\ j \ne i}}^d (\partial_i - \partial_j)(f_{ij}(x^{-ij}) f_i(x^i)f_j(x^j)g(x))\\
	& =  \diver c_i(x).
	\end{align*}
	This gives the result.
\end{proof}

Now applying Theorem \ref{KR_main} we get the following result for this tractable class of models.
\begin{cor} \label{stationary_cor}
	Let $c$ be given by \eqref{c_def}. Under Assumptions \ref{input_assumption} and \ref{finite_assumption} on the inputs $(c,p)$ the solution to \eqref{variational} is given by $\hat \phi(x)  = \frac{1}{2}(\log p(x) + \log g(x) + \sum_{i=1}^d \log f_i(x^i))$ and the corresponding growth rate is
	$$ \lambda = \frac{1}{8}\int_{\Delta_+^{d-1}}\nabla \log \(p(x)g(x)\prod_{i=1}^df_i(x^i)\)^\top c(x) \nabla \log \(p(x)g(x)\prod_{i=1}^df_i(x^i)\)p(x)dx.$$
	Moreover the optimal portfolio is functionally generated with generating function $\hat G = \exp({\hat \phi})$ and corresponding portfolio weights
	\begin{equation}\label{stationary_weights}
	\frac{\pi_{\hat G}^i(x)}{x^i} = \frac{1}{2}\partial_i \log p(x) + \frac{1}{2}\partial_i \log g(x) + \frac{1}{2}\partial_i \log f_i(x^i)  +1 - \frac{1}{2}\sum\limits_{j=1}^d x^j\partial_j\log \(p(x)g(x)\prod_{i=1}^df_i(x^i)\).
	\end{equation}
\end{cor}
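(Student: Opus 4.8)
The plan is to read off Corollary~\ref{stationary_cor} as a direct specialization of Theorem~\ref{KR_main} to the situation where the vector field $\ell$ happens to be an exact gradient. First I would invoke Theorem~\ref{divc_lemma}, which for the volatility matrix \eqref{c_def} gives $c^{-1}\diver c(x) = \nabla\log\big(g(x)\prod_{i=1}^d f_i(x^i)\big)$. Using the gradient representative in the definition $\ell = \tfrac12\big(\nabla\log p + c^{-1}\diver c\big)$ from Assumption~\ref{finite_assumption} then yields
\[
\ell(x) = \tfrac12\,\nabla\log\Big(p(x)g(x)\prod_{i=1}^d f_i(x^i)\Big) = \nabla\hat\phi(x),
\qquad \hat\phi := \tfrac12\log\Big(pg\prod_{i=1}^d f_i\Big).
\]
By Assumption~\ref{input_assumption}(ii) and Assumption~\ref{function_assumption}(i) the function $\hat\phi$ lies in $C^{2,\gamma}(\Delta^{d-1}_+)\subseteq C^2(\Delta^{d-1}_+)$, so it is an admissible competitor in the variational problem \eqref{variational}.

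Next I would identify $\hat\phi$ as the minimizer. The functional in \eqref{variational} is nonnegative since $c$ is positive semidefinite, and it vanishes at $u=\hat\phi$ because $\ell-\nabla\hat\phi$ is (the zero element in) $\mathcal{H}^{c,p}$; hence $\hat\phi$ minimizes \eqref{variational}, and by the uniqueness assertion of Theorem~\ref{KR_main} it is the unique minimizer up to an additive constant. This is exactly the "$\ell=\nabla\log R$ is a gradient'' case discussed after Theorem~\ref{KR_main}, with $R=\sqrt{pg\prod_i f_i}$. Substituting $\hat u=\hat\phi$ into the formula $\lambda = \tfrac12\int_{\Delta^{d-1}_+}\nabla\hat u^\top c\,\nabla\hat u\,p$ from Theorem~\ref{KR_main} and using $\nabla\hat\phi = \tfrac12\nabla\log(pg\prod_i f_i)$ gives the stated value of $\lambda$, the factor $\tfrac14$ coming from the square producing the $\tfrac18$ in the statement; finiteness of this integral is guaranteed by Assumption~\ref{finite_assumption}(i) since $\ell=\nabla\hat\phi$.

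Finally, Theorem~\ref{KR_main} already asserts that the optimal portfolio is functionally generated by $\hat G = \exp\hat u = \exp\hat\phi$, which is $C^2$, so Remark~\ref{func_gen_remark} (the $C^2$ case of the master formula, Theorem~\ref{master_formula}) yields
\[
\frac{\pi_{\hat G}^i(x)}{x^i} = \partial_i\hat\phi(x) + 1 - \sum_{j=1}^d x^j\partial_j\hat\phi(x).
\]
The remaining step is to expand $\partial_i\hat\phi = \tfrac12\big(\partial_i\log p + \partial_i\log g + \partial_i\log f_i(x^i)\big)$, where one uses that $f_k(x^k)$ depends on $x$ only through its $k$-th coordinate so that $\partial_i\log\prod_k f_k(x^k) = \partial_i\log f_i(x^i)$, together with $\sum_j x^j\partial_j\hat\phi = \tfrac12\sum_j x^j\partial_j\log(pg\prod_k f_k)$; this gives \eqref{stationary_weights}. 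There is no genuinely hard step: the content sits entirely in Theorems~\ref{divc_lemma} and \ref{KR_main}, and the main thing to be careful about is the derivative convention on the simplex (so that the cross terms $\partial_i\log f_k(x^k)$ with $k\neq i$ drop out) and the observation that $\hat G$ is smooth enough for Remark~\ref{func_gen_remark} to apply.
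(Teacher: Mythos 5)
Your proposal is correct and follows exactly the paper's route: the paper's proof simply states that the corollary follows immediately from Theorems~\ref{KR_main} and~\ref{divc_lemma}, and your write-up fills in precisely those routine details (that $\ell=\nabla\hat\phi$ is a gradient, so the variational objective vanishes there, plus the bookkeeping of the factor $\tfrac18$ and the master-formula weights). Nothing is missing or different in substance.
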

\begin{proof}
	This follows immediately from Theorems  \ref{KR_main} and \ref{divc_lemma}.
\end{proof}
Note that the case $f_i(x^i) = x^i$, $f_{ij} = g \equiv 1$ and $p(x) =\frac{\Gamma(ad)}{\Gamma(a)^d}\prod_{i=1}^d (x^i)^{a-1}$ for some $a > 1$ corresponds to the volatility-stabilized market discussed in Section~\ref{a_motivating_example} and \eqref{stationary_weights} reduces to \eqref{vol_stabilized_portfolio}. In general Corollary~\ref{stationary_cor} shows that the specification \eqref{c_def} leads to analytical solutions for the robust growth-optimization problem for arbitrary dimension $d$. We now interpret each of the functions that make up the $c$ matrix to provide intuition regarding how these ingredients affect the market weight dynamics under the worst-case measure $\tilde \P$ (which is equal to $\P_{\hat u}$ under the specification \eqref{c_def} thanks to Theorem \ref{divc_lemma}).

The function $g(x)$ is common to each entry of the matrix $c(x)$ and can be interpreted as a state-dependent time change. Indeed if we let $\tilde X$ be a process with covariation matrix given by \eqref{c_def} without the function $g$ and we set $A_t = \int_0^t (1/g(\tilde X_s))ds$ then $\hat X_t := \tilde X_{A_t^{-1}}$ has covariation matrix $c(\cdot)$ given by \eqref{c_def}. The drift of $\hat X_t$ under this time-change, however, will not match the dynamics of $X_t$ given by \eqref{worst_case_dynamics}. This is because such a time change affects the invariant density; indeed if $p$ was the invariant density for $\tilde X$ then (up to the normalizing constant) $p/g$ will be the invariant density for $\hat X$ as defined above. Since we require $p$ to be the invariant density for the market weight process $X$, the drift must be adjusted accordingly -- in \eqref{worst_case_dynamics} this is handled by the $\diver c$ term appearing in the drift and, as such, the function $g$ also appears in the optimal strategy \eqref{stationary_weights}.

To simplify the discussion regarding the other terms assume now, for simplicity, that $g \equiv 1$. Note that
$$d\langle X^i\rangle_t = f_i(X^i_t)\sum_{j \ne i} f_{ij}(X^{-ij}_t)f_j(X^j_t)dt.$$ From this expression we see that the instantaneous volatility of $X^i$ can be decomposed as a product of a term that depends only on the size of $X^i$ and a term that depends on the configuration of all the other market weights. As such, the function $f_i$ represents the effect on the volatility of $X^i$ driven by the value of $X^i$ itself. To better understand the $f_{ij}$ terms define $F_i: (0,1) \to \R$ via $F_i(x) = \int_\theta^x \frac{1}{f_i(y)}dy$ for every $i =1,\dots,d$ where $\theta \in (0,1)$ is arbitrary. Then we have
\begin{equation} \label{scaled_variable}
d\langle F_i(X^i),F_j(X^j)\rangle_t = -f_{ij}(X^{-{ij}}_t)dt.
\end{equation}
Here the function $F_i$ is a coordinate transformation of $X^i$ and $-f_{ij}$ represents the covariation of the transformed variables. If we take $f_i(x^i) = x^i$ as in the volatility-stabilized example, then \eqref{scaled_variable} reads $d\langle \log X^i,\log X^j \rangle_t = -f_{ij}(X^{-ij})dt$. As such, it is surprising and puzzling from a financial perspective that the optimal strategy from Corollary \ref{stationary_cor} does not depend on the choice of the functions $f_{ij}$. However, the $f_{ij}$'s do affect the growth rate and we expect that they also determine the rate of convergence to the asymptotic growth rate for the wealth process $\log V^{\pi_{\hat G}}$. This indicates that under this framework, the covariations of the coordinate transformed variables $F_i(X^i)$ play a role in determining the achievable growth rate in the market, but under the worst case measure $\tilde \P$ cannot be explicitly exploited to obtain additional growth beyond what is given by the other inputs; namely the invariant density $p(x)$ and the functions $f_i(x^i)$. From a calibration perspective this is an attractive property. Indeed, one does not need to estimate the $f_{ij}$'s, which is a difficult task, to determine the optimal strategy. Nevertheless, this is an unexpected and interesting property of the solution for which we have not been able to obtain a deeper financial explanation.

\subsection{Verifying Assumptions \ref{input_assumption} and \ref{finite_assumption}} \label{verify_assumptions}
Next we establish necessary and sufficient conditions for $c$ to be nondegenerate in the sense of Assumption \ref{input_assumption}(i).
\begin{prop} \label{graph}
	For each $x \in \Delta^{d-1}_+$ let $G_x$ be a graph on $\{1,\dots,d\}$ where we create an edge between vertices $i$ and $j$ if $f_{ij}( x^{-ij}) > 0$. Then the matrix $c(x)$ satisfies the non-degeneracy condition in Assumption \ref{input_assumption}(i) if and only if the graph $G_x$ is connected. 
\end{prop}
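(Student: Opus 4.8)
The plan is to recognize that $c(x)$ is a weighted graph Laplacian, the weights absorbing the strictly positive factors $f_i$, $f_j$, $g$, and then to invoke the elementary correspondence between the kernel of a Laplacian and the connected components of its underlying graph.

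First I would introduce, for $i \ne j$, the weights $w_{ij} := f_{ij}(x^{-ij})\, f_i(x^i)\, f_j(x^j)\, g(x)$. Since $x \in \Delta^{d-1}_+$ forces $x^i \in (0,1)$, Assumption~\ref{function_assumption} guarantees $f_i(x^i)>0$, $f_j(x^j)>0$ and $g(x)>0$, so $w_{ij} \ge 0$ with $w_{ij} > 0$ if and only if $f_{ij}(x^{-ij}) > 0$, i.e.\ if and only if $\{i,j\}$ is an edge of $G_x$; moreover $w_{ij} = w_{ji}$ by Assumption~\ref{function_assumption}(iii). With this notation \eqref{c_def} reads $c_{ij}(x) = -w_{ij}$ for $i \ne j$ and $c_{ii}(x) = \sum_{k \ne i} w_{ik}$, which is precisely the weighted Laplacian of the graph $G_x$. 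A direct expansion then gives the identity
$$ y^\top c(x) y = \sum_{1 \le i < j \le d} w_{ij}\,(y^i - y^j)^2, \qquad y \in \R^d, $$
from which $c(x) \in \mathbb{S}^d_+$ and, taking $y = \boldsymbol 1$, $\boldsymbol 1 \in \text{Ker}(c(x))$; the latter holds regardless of connectivity, so only the rank condition is really at stake.

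Next, using positive semidefiniteness, I would observe that $y \in \text{Ker}(c(x))$ if and only if $y^\top c(x) y = 0$, which by the displayed identity holds precisely when $y^i = y^j$ for every edge $\{i,j\}$ of $G_x$ — equivalently, when $y$ is constant on each connected component of $G_x$. Hence $\dim \text{Ker}(c(x))$ equals the number of connected components of $G_x$, so $\text{rank}(c(x)) = d - \#\{\text{components of }G_x\}$. Therefore $\text{rank}(c(x)) = d-1$ — which, together with $\boldsymbol 1 \in \text{Ker}(c(x))$, is the non-degeneracy condition of Assumption~\ref{input_assumption}(i) — if and only if $G_x$ has a single component, i.e.\ is connected.

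There is no substantial obstacle here: the only content is the identification of $c(x)$ with a weighted graph Laplacian, after which everything reduces to the standard spectral description of the Laplacian kernel. The one point deserving care is that the strict positivity of $f_i(x^i)$, $f_j(x^j)$ and $g(x)$ — exactly where Assumption~\ref{function_assumption} and $x \in \Delta^{d-1}_+$ are used — is what makes the combinatorial graph $G_x$, defined purely through the $f_{ij}$'s, coincide with the off-diagonal sparsity pattern $\{\{i,j\}: w_{ij}>0\}$ of $c(x)$; without it $G_x$ could fail to reflect the structure of $c(x)$.
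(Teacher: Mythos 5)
Your proof is correct and follows essentially the same route as the paper's: both rest on the quadratic form identity $y^\top c(x)y=\sum_{i<j}w_{ij}(y^i-y^j)^2$ and the observation that, by positive semidefiniteness, the kernel consists exactly of vectors constant on each connected component of $G_x$. Your phrasing via $\dim\mathrm{Ker}(c(x))=\#\{\text{components}\}$ packages the paper's two directions (path-following for sufficiency, the indicator vector of a component for necessity) into one statement, but the substance is identical.
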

\begin{proof}
	Fix an $x \in \Delta^{d-1}_+$.
	First we note that  \begin{equation} \label{c_calc}
	v^\top c(x)v = \sum\limits_{i >j} \zeta_{ij}, \quad \text{where } \zeta_{ij} := f_{ij}( x^{-ij})f_i(x^i)f_j(x^j)g(x)(v^i-v^j)^2 \quad \text{for every }v \in \R^d.
	\end{equation} 
	$(\Leftarrow)$
	We see that 
	$$ v^\top c(x)v = 0 \iff \zeta_{ij} = 0 \text{ for all } i \ne j \iff f_{ij}( x^{-ij})(v^i - v^j)^2 = 0 \text{ for all } i \ne j.$$
	From the last condition if follows that if $f_{ij}( x^{-ij}) > 0$ then $\zeta_{ij}= 0\iff v^i = v^j$.
	Now fix an arbitrary $i,j \in \{1,\dots,d\}$ with $i \ne j$. Since $G_x$ is connected we can find a path of the graph $G_x$
	$$ i \mapsto i_1 \mapsto \dots \mapsto i_m \mapsto j$$ connecting $i$ to $j$. Thus by the definition of the graph it must be that $v^i = v^{i_1} = \dots = v^{i_m}= v^j$ to have $v^\top c(x)v = 0$. Since $i,j$ was arbitrary it follows that 
	$$v^\top c(x)v = 0 \iff v^i = v^j \text{ for all } i,j \in \{1,\dots, d\}$$ so that $v \in \text{span}(\boldsymbol{1})$.
	
	$(\Rightarrow)$ Suppose by way of contradiction that $G_x$ is not connected. Then we can find a nonempty subset $A$ of vertices of $G_x$ such that $A^c$ is nonempty and there are no edges between $A$ and $A^c$. Setting
	$$ v := \sum\limits_{i \in A} e_i ,$$ where $\{e_i\}_{i=1}^d$ are the standard basis vectors in $\R^d$, we see from \eqref{c_calc} that $v^\top c(x)v = 0$ and $v \ne \boldsymbol{1}$ which is the required contradiction.
\end{proof}
The previous proposition shows that the non-degeneracy of $c(x)$ can be reduced to the study of a certain graph. We now consider an example to help interpret this condition with regards to the market weight process $X$. Set $d=4$ and pick constant  functions $f_{ij}$ given by $f_{12} = f_{34} =1$, $f_{13} = f_{14} = f_{23} = f_{24} = 0$. Then we see that $e_1+e_2$ and $e_3+e_4$ are in the kernel of $c(x)$ for every $x$. Here the graph condition fails with the corresponding graph having two distinct connected components: $\{1,2\}$ and $\{3,4\}$. As a result the dynamics of the market weight process $X$ given by \eqref{worst_case_dynamics} satisfy $d(X^1 + X^2) = d(X^3 + X^4) = 0$. Hence the sub-markets consisting of $\{X^1,X^2\}$ and $\{X^3,X^4\}$ each have a fixed size; that is for an initial configuration $x_0$ we have $X^1_t + X^2_t = x^1_0+x^2_0$ and $X^3_t + X^4_t = x^3_0 + x^4_0$ for all $t\geq 0, \tilde \P$-a.s.  Conversely, when the graph condition is satisfied no subset of the market weights can be seen to occupy a fixed proportion of the market. Indeed, consider even the most extreme case where we set $f_{i,i+1} = 1$ for every $i =1,\dots,d-1$ and $f_{ij} = 0$ if $|i-j| > 1$. Here the graph is connected, but only by the single path 
$$1 \leftrightarrow 2 \leftrightarrow \dots \leftrightarrow d.$$
However, we will still have that the fluctuations of $X^1$ indirectly affect the fluctuations of $X^d$ since $X^1$ has nontrivial covariation with $X^2$, which in turn has nontrivial covariation with $X^3$ and so on. This way of viewing the graph condition as a connectivity condition between the market weight covariation processes motivates the following way one can check if the graph condition of Proposition~\ref{graph} is satisfied. Let $A_{ij}(x)= \I\{f_{ij}(x) > 0\}$ for $i \ne j$ and set $A_{ii}(x)= 1$ for every $i$. It is easy to see that the graph condition for $G_x$ holds if and only if $A^{d-1}(x)$ has strictly positive entries. In particular if $A^k_{ij}(x) > 0$ for all $i,j = 1,\dots,d$ and some $k \leq d-1$ then the graph condition holds.

We now present sufficient conditions on the input functions in \eqref{c_def} so that Assumption~\ref{finite_assumption} holds.
\begin{prop} \label{sufficient}
	Let $c$ be given by \eqref{c_def} where the functions $g,f_i,f_{ij}$ satisfy Assumption \ref{function_assumption} and let
	\[
	R(x)=\sqrt{p(x)g(x)\prod_{i=1}^df_i(x^i)}.
	\]
	If $\lim_{x \to \partial \Delta^{d-1}_+}R(x) = 0$, $LR/R$ is bounded from below and we have that
	\[
	\int_{\Delta^{d-1}_+} \(|\frac{LR}{R}| + |L(\log R)|\)p < \infty
	\]
	then conditions (i)--(iii) in Assumption \ref{finite_assumption} hold.
\end{prop}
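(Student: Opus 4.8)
The plan is to reduce the whole proposition to properties of the single function $R$. By Theorem~\ref{divc_lemma} one has $c^{-1}\diver c = \nabla\log(g\prod_i f_i)$, hence $\ell = \frac{1}{2}(\nabla\log p + c^{-1}\diver c) = \nabla\log R$; in particular $\ell$ is a genuine gradient. I would then record two elementary identities. Expanding $\partial_{ij}\log R = \partial_{ij}R/R - (\partial_i R)(\partial_j R)/R^2$ and contracting against $c$ gives
\[
L(\log R) = \frac{LR}{R} - \frac{1}{2}(\nabla\log R)^\top c\,\nabla\log R,
\qquad\text{i.e.}\qquad
\ell^\top c\,\ell = 2\Big(\frac{LR}{R} - L(\log R)\Big).
\]
Second, the Leibniz rule together with the symmetry of $c$ and the definition of $\ell$ yields, for every $u \in C^2$, the divergence identity $\diver(p\,c\,\nabla u) = 2p\,L^R u$, where $L^R$ is the operator from Assumption~\ref{finite_assumption}(iii); here one uses that $(c^{-1}\diver c)^\top c\,\nabla u = (\diver c)^\top\nabla u$ regardless of the chosen solution of \eqref{divc}.

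Parts (i) and (ii) then follow directly. For (i), integrating the first identity against $p$ gives $\int_{\Delta^{d-1}_+}\ell^\top c\,\ell\, p = 2\int_{\Delta^{d-1}_+}(LR/R - L(\log R))\,p$, which is finite since $\int_{\Delta^{d-1}_+}(|LR/R| + |L(\log R)|)p < \infty$ by hypothesis. For (ii), taking $u = \log R$ in the divergence identity and using the first identity to write $L^R(\log R) = L(\log R) + \ell^\top c\,\ell = 2\,LR/R - L(\log R)$, one obtains $\diver(p\,c\,\ell) = 2p\,L^R(\log R) = 4p\,(LR/R) - 2p\,L(\log R)$; hence $(\diver(pc\ell))^+ \le 4p\,|LR/R| + 2p\,|L(\log R)|$, which is integrable by the same hypothesis.

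The substantive part is (iii). On the open simplex the coefficients of $L^R$ — namely $c$ and the drift $c\,\nabla\log R$ — are $C^{1,\gamma}$ (by Assumption~\ref{function_assumption} and positivity of $R$), and $c$ is locally uniformly elliptic in the $(d-1)$-dimensional chart by Assumption~\ref{input_assumption}(i), so the generalized martingale problem of Appendix~\ref{mart_prob_app} has a unique solution $\tilde\P$ up to explosion; since $\Delta^{d-1}_+$ is bounded, explosion means reaching $\partial\Delta^{d-1}_+$. I would rule this out with the Lyapunov function $W := 1/R$. The hypothesis $\lim_{x\to\partial\Delta^{d-1}_+}R(x) = 0$ makes $W$ a nonnegative $C^2$ function with compact sublevel sets, and the conjugation relation $L^R\phi = \frac{1}{R}L(R\phi) - \frac{LR}{R}\phi$ applied to $\phi = 1/R$ gives $L^R W = -\frac{LR}{R^2} = -\big(\frac{LR}{R}\big)W$. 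Since $LR/R \ge -M$ for some $M \ge 0$ by hypothesis, this yields $L^R W \le M\,W$ throughout $\Delta^{d-1}_+$, and the standard Lyapunov criterion for non-explosion (see \cite[Ch.~5]{pinsky1995positive} and the references in Appendix~\ref{mart_prob_app}) shows $\tilde\P$ is non-explosive, establishing (iii).

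I expect the only real obstacle to be bookkeeping: matching the pair $(W,\ L^R W \le MW)$ to the precise non-explosion statement used in Appendix~\ref{mart_prob_app}, and confirming local existence and uniqueness of the martingale problem under Assumptions~\ref{input_assumption} and \ref{function_assumption}. The analytic core — that $1/R$ is a Lyapunov function with the clean identity $L^R(1/R) = -(LR/R)(1/R)$, together with the two algebraic identities above — is short.
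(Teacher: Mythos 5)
Your proposal is correct and follows essentially the same route as the paper's proof: part (i) via the pointwise identity $\tfrac{1}{2}\ell^\top c\,\ell = LR/R - L(\log R)$, part (ii) via the Leibniz expansion of $\diver(pc\ell)$ (your general identity $\diver(pc\nabla u)=2pL^Ru$ specialized to $u=\log R$ reproduces the paper's formula exactly), and part (iii) via Pinsky's test-function criterion with the Lyapunov function $1/R$ and the conjugation identity $L^Ru=\frac{L(uR)}{R}-\frac{LR}{R}u$. No gaps.
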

This proposition will be used in Section~\ref{examples} to verify that Assumption~\ref{finite_assumption} holds in the various examples considered. The proof of the result is located in Appendix~\ref{mart_prob_app}.

\subsection{The Rank-Based Case} \label{rank_based_model}
We finish off this section by connecting the tractable class introduced in equation \eqref{c_def} to the rank-based problem of Section~\ref{rank_based_section}. In that section we started with a pair $(\kappa,q)$ defined on the ordered simplex and extended it to a pair $(c,p)$ via the equations \eqref{c_sym} and \eqref{p_sym}. Theorem~\ref{rank_optimal}\ref{rank_optimal_ii} showed that the optimal long-only portfolio turned out to be rank-based. Here, we take the converse approach. Namely, starting with \eqref{c_def} defined on the entire simplex, we will develop conditions on the functions $g,f_i$ and $f_{ij}$ so that $c$ can be viewed as an extension of some $\kappa$ defined on the ordered simplex. Additionally, using the explicit formula for the optimal unconstrained portfolio we develop conditions on the inputs that ensure it is a rank-based portfolio. This is the content of the next proposition.

\begin{prop} \label{ranked_based_tractable}
	\begin{enumerate}[label = ({\arabic*})]		
		\item There exists a $\kappa \in C^{2,\gamma}(\Delta^{d-1}_{+,\geq};\mathbb{S}^d_+)$ such that \eqref{c_sym} holds for the matrix $c$ given by \eqref{c_def} if and only if
		\begin{enumerate} [label = ({\roman*})]
			\item g is permutation invariant,
			\item There exists a common function $f$ such that $f_i = f$ for every $i =1,\dots,d,$
			\item There exists a common function $h$ such that $f_{ij} = h$ for every $i \ne j$.
		\end{enumerate}  
		\item Let $(c,p)$ satisfy Assumptions~ \ref{input_assumption} and \ref{finite_assumption}. Further suppose that $p$ is permutation invariant and that $c$ is given by \eqref{c_def} and satisfies (i) and (ii) above. Then the optimal unconstrained strategy $\pi_{\hat G}$ from Corollary \ref{stationary_cor} is rank-based.
	\end{enumerate}
\end{prop}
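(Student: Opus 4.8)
The plan is to treat the two items separately; item (2) will follow quickly from the explicit formula in Corollary~\ref{stationary_cor}, so the substance lies in item (1), which I would reduce to a symmetry analysis of \eqref{c_def}. For item (1), note first that evaluating \eqref{c_sym} at a point of the ordered simplex $\Delta^{d-1}_{+,\geq}$ (where $r_k(x)=k$ for all $k$) forces $\kappa$ to be the restriction of $c$ to $\Delta^{d-1}_{+,\geq}$; since $c$ is $C^{2,\gamma}$ by Assumption~\ref{function_assumption}, this restriction is automatically $C^{2,\gamma}$. Hence a suitable $\kappa$ exists if and only if this particular $\kappa$ works, which is equivalent to $c$ being \emph{permutation equivariant}:
\begin{equation*}
c_{ij}(x_\sigma) = c_{\sigma(i)\sigma(j)}(x), \qquad x\in\Delta^{d-1}_+,\ 1\le i,j\le d,
\end{equation*}
for every permutation $\sigma$ of $\{1,\dots,d\}$, where $x_\sigma^i:=x^{\sigma(i)}$. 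To pass between \eqref{c_sym} and this identity one uses that two points of $\Delta^{d-1}_+$ share a rank vector exactly when they are permutations of each other, together with the lexicographic tie-break in the definition of $r_k$ to dispose of the Lebesgue-null set of ties; and since transpositions generate the symmetric group it suffices to verify equivariance for transpositions. Moreover, because $\boldsymbol{1}\in\mathrm{Ker}\,c(x)$ forces $c_{ii}=-\sum_{j\neq i}c_{ij}$, equivariance of the off-diagonal entries already implies it for the diagonal, so I would work only with off-diagonal entries.

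The ``if'' direction is then a direct substitution: with $g$ permutation invariant, $f_i\equiv f$, and $f_{ij}\equiv h$ for a common $h$ that is moreover symmetric under permutations of its $d-2$ arguments, both $c_{ij}(x_\sigma)$ and $c_{\sigma(i)\sigma(j)}(x)$ collapse to $-h(x^{-\sigma(i)\sigma(j)})f(x^{\sigma(i)})f(x^{\sigma(j)})g(x)$ for $i\ne j$. For the ``only if'' direction, I would feed the factored form of \eqref{c_def} into the equivariance identity and match factors. Using transpositions that move indices disjoint from a fixed pair $\{i,j\}$ isolates $g$ and yields $g(x_\sigma)=g(x)$, i.e. (i). Dividing out $g$, the remaining identities have the shape $f_i(a)f_j(b)=f_j(a)f_i(b)$ and $f_{ik}(\cdot)f_i(\cdot)=f_{jk}(\cdot)f_j(\cdot)$, which force every ratio $f_i/f_j$ to be a positive constant and pin down each $f_{ij}$ up to these constants; since \eqref{c_def} only involves the products $f_{ij}f_if_j$, the rescaling $f_i\mapsto\mu_if_i$, $f_{ij}\mapsto\mu_i^{-1}\mu_j^{-1}f_{ij}$ lets me normalise $f_i\equiv f$ (that is (ii)), after which all the $f_{ij}$ agree with a single function $h$; comparing $f_{ij}(x_\sigma^{-ij})$ with $f_{\sigma(i)\sigma(j)}(x^{-\sigma(i)\sigma(j)})$ — these involve the same $d-2$ arguments listed in a permuted order — then shows $h$ is invariant under permutations of its arguments, giving (iii). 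The only real obstacle here is the bookkeeping: tracking the multiplicative gauge freedom in the decomposition and the reordering of the arguments of the $f_{ij}$'s; a judicious choice of transpositions keeps this routine.

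For item (2), Corollary~\ref{stationary_cor} gives $\hat G=\exp\hat\phi$ with $\hat\phi(x)=\tfrac12\big(\log p(x)+\log g(x)+\sum_{i=1}^d\log f_i(x^i)\big)$. By hypothesis $p$ is permutation invariant, $g$ is permutation invariant by (i), and because $f_i\equiv f$ by (ii) the sum $\sum_{i=1}^d\log f(x^i)$ is permutation invariant too; hence $\hat\phi$, and therefore $\hat G$, is permutation invariant. A portfolio functionally generated by a permutation-invariant function is rank-based — this is precisely the implication established in the proof of Theorem~\ref{rank_optimal}\ref{rank_optimal_ii} via the master formula Theorem~\ref{master_formula} — so $\pi_{\hat G}$ is rank-based. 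Alternatively this is visible directly from \eqref{stationary_weights}: $\partial_i\log f_i(x^i)=f'(x^i)/f(x^i)$ depends on $x$ only through $x^i$, $\partial_i\log(pg)$ transforms equivariantly under permutations because $pg$ is permutation invariant, and $\sum_j x^j\partial_j\log\!\big(pg\prod_i f_i\big)$ is permutation invariant. This is the routine part.
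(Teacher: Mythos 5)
Your proof is correct and follows essentially the same route as the paper: reduce item (1) to permutation equivariance of $c$ (the paper writes this as $c_{ij}(x) = c_{\sigma^{-1}(i)\sigma^{-1}(j)}(x_\sigma)$), verify the ``if'' direction by direct substitution, extract (i)--(iii) by matching factors for the ``only if'' direction, and deduce item (2) from the permutation invariance of $\hat G = \exp\hat\phi$. You are in fact more careful than the paper's one-line ``by comparing like terms'': you correctly flag the multiplicative gauge freedom $f_i\mapsto\mu_i f_i$, $f_{ij}\mapsto\mu_i^{-1}\mu_j^{-1}f_{ij}$ (so that (ii) is only forced after a normalisation) and the need for the common $h$ to be symmetric in its $d-2$ arguments (automatic in the ``only if'' direction, but genuinely needed for the ``if'' direction when $d\ge 4$), both of which the paper's statement and proof gloss over.
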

\begin{proof}
	First assume that \eqref{c_sym} holds for an appropriate matrix valued function $\kappa$. It follows that for every $x \in \Delta^{d-1}_+$ we have $c_{ij}(x) = c_{\sigma^{-1}(i)\sigma^{-1}(j)}(x_\sigma)$ for every $i,j$ and any permutation $x_\sigma$ of $x$. This reads that we must have
	\begin{align*} 
	-f_{ij}(x^{-ij})f_i(x^i)f_j(x^j)g(x) & = - f_{\sigma^{-1}(i)\sigma^{-1}(j)}(x^{-ij})f_{\sigma^{-1}(i)}(x^i)f_{\sigma^{-1}(j)}(x^j)g(x_\sigma); & i \ne j.
	\end{align*}By comparing like terms we see that $(i),(ii)$ and $(iii)$ must hold. For the converse direction assume that $(i),(ii)$ and $(iii)$ hold. Define $\kappa_{kl}(y) = -h(y^{-kl})f(y_k)f(y_l)g(y)$ for every $y \in \Delta^{d-1}_{+,\geq}$ and $k \ne l$. Set $\kappa_{kk} = -\sum_{l \ne k} \kappa_{kl}$ for each $k=1,\dots,d$. It is then easy to verify that for this choice of $\kappa$ the $c$ matrix defined by \eqref{c_sym} and \eqref{c_def} are the same.
	
	To prove (2) we note that if a generating function is permutation invariant then the portfolio it generates is rank-based. By Corollary~\ref{stationary_cor} and condition $(ii)$ from the statement of the theorem we see that 
	$$\hat G(x) = \sqrt{p(x)g(x)\prod_{i=1}^d f(x^i)}.$$
	The permutation invariance of $p$ and $g$ ensures $\hat G$ is permutation invariant, which completes the proof.
\end{proof}

	Note that if we had additionally assumed that $(iii)$ from (1) held then part (2) of the theorem would follow from Proposition~3.6 in \cite{kardaras2018ergodic}. However, the remarkable structure of this class of models makes it so that the optimal unconstrained portfolio is rank-based if only $(i)$ and $(ii)$ hold, even if $(iii)$ does not. In this case $c$ has non rank-based dependencies, but the  optimal trading strategy is still rank-based. The same cannot as easily be said about the optimal long-only strategy. If $(i),(ii)$ and $(iii)$ hold from part (1) of the above theorem then Theorem~\ref{rank_optimal}\ref{rank_optimal_ii} guarantees the optimal long-only portfolio generated by a concave function is rank-based. However, without an analytically available solution to this problem it is unclear if the optimal portfolio remains rank-based if condition (iii) fails as is the case for the unconstrained optimal portfolio.
\section{Examples} \label{examples}
We now consider a few examples.
\subsection{The Dirichlet Case} \label{dirichlet_example}
Take $g(x) = 1$, $f_i(x^i) = (x^i)^{b^i}$ for some parameters $b^i \geq 1$ and let
\[
p(x) = \frac{1}{B(a)}\prod_{i=1}^d (x^i)^{a^i-1}
\]
be the Dirichlet density with parameters $a = (a^1,\dots,a^d)$ with $a^i > 0$ for every $i \in \{1,\dots,d\}$. Here
\[
B(a) = \frac{ \prod_{i=1}^d \Gamma(a^i) }{ \Gamma(\sum_{i=1}^d a^i) }
\]
is the generalized Beta function. The choice $b^i=1,f_{ij} \equiv 1$ and $a^i=a$ for every $i,j$ and some $a > 1$ reduces to the volatility-stabilized market example discussed in Section~\ref{a_motivating_example}.

Define $\gamma^i := a^i + b^i -1$. Under the condition that $\gamma^i > 1$ for every $i$ we have that Assumption~\ref{finite_assumption} holds. Indeed using the notation of Proposition~\ref{sufficient} we have that $R(x) = \prod_{i=1}^d (x^i)^{\gamma^i/2} $ so that $\lim_{x \to \partial \Delta^{d-1}_+} R(x) = 0$. A direct computation in this case yields
$$ LR(x) = \frac{1}{2}R(x) \sum\limits_{i > j} f_{ij}( x^{-ij})(x^i)^{b^i}(x^j)^{b^j}\( \frac{\gamma^i(\gamma^i-1)}{(x^i)^2} + \frac{\gamma^j(\gamma^j-1)}{(x^j)^2} - 2\frac{\gamma^i\gamma^j}{x^ix^j}\).$$
Since $\gamma^i,\gamma^j > 1$, the coefficients in the first two terms are strictly positive and the condition that $b^i,b^j \geq 1$ ensures that the third term does not blow up to $-\infty$. It follows that $LR/R$ is bounded from below on $\Delta^{d-1}_+$. Next we see that 
\begin{align*} \frac{LR(x)}{R(x)}p(x)
= & \frac{1}{2B({a})}\sum\limits_{i > j}\bigg( f_{ij}(x^{-ij})\prod\limits_{k \not \in \{i,j\}}(x^k)^{a^k-1}\\
&\times  \(\gamma^i(\gamma^i-1)(x^i)^{\gamma^i -2 }(x^j)^{\gamma^j} + \gamma^j(\gamma^j-1)(x^i) ^{\gamma^i}(x^j)^{\gamma^j -2} - 2\gamma^i\gamma^j(x^i)^{\gamma^i -1}(x^j)^{\gamma^j -1}\)\bigg)
\end{align*}	 
which is integrable since $\gamma^i - 2 > -1$ for every $i$. Lastly we compute that 
\begin{align*}L(\log R)(x) &= -\frac{1}{4}\sum\limits_{\substack{i,j=1 \\ i \ne j}}^d \gamma^i f_{ij}(x^{-ij}) (x^i)^{b^i-2}(x^j)^{b^i}	
\shortintertext{so we see that}
|p(x) L(\log R)(x)| &= \frac{1}{4B({a})} \sum\limits_{\substack{i,j=1 \\ i \ne j}}^d \gamma^i f_{ij}(x^{-ij})\gamma^i (x^i)^{\gamma^i -2}(x^j)^{\gamma^j}
\end{align*} which is again integrable. Hence, by Proposition~\ref{sufficient} we conclude that  Assumption~\ref{finite_assumption} holds in this case. 

Further assume that $f_{ij}(x^{-ij}) = \alpha_{ij}$ for some constants $\alpha_{ij}$ satisfying the graph condition of Proposition \ref{graph}. Then, thanks to Corollary~\ref{stationary_cor} the robust growth-optimal portfolio is characterized by $\hat \phi = \sum_{i=1}^d \frac{\gamma^i}{2}\log(x^i)$. By the master formula Theorem~\ref{master_formula}, the portfolio weights are given by 
\[ \pi_{\hat G}^i(x) = \frac{1}{2}\(\gamma^i + x^i\(2-\sum\limits_{j=1}^d \gamma^j\)\).
\]
Furthermore we have  	$$ \lambda = \frac{1}{8B(a)}\sum\limits_{\substack{i,j=1 \\ i \ne j}}^d \alpha_{ij}\bigg((\gamma^i)^2B(a +(b^i -2)e_i + b^je_j) - \gamma^i\gamma^jB(a+(b^i - 1)e_i + (b^j-1)e_j)\bigg)$$ where $\{e_i\}_{i=1}^d$ are the standard basis vectors in $\R^d$.
Indeed, a calculation gives
$$\ell^\top c\ell(x) = \frac{1}{4}\sum\limits_{\substack{i,j=1 \\ i \ne j}}^d\alpha_{ij}(x^i)^{b^i}(x^j)^{b^j} \(\frac{(\gamma^i)^2}{(x^i)^2} -  \frac{\gamma^i\gamma^j}{x^ix^j}\)$$ so that 
multiplying by $\frac{1}{2}p$, integrating and using the fact that $\int_{\Delta^{d-1}_+} \prod_{i=1}^d (x^i)^{r^i-1}dx = B(r)$ for any constants $r^i > 0$ yields the desired result.

As in the example of Section~\ref{a_motivating_example} we see that $\pi_{\hat G}$ is given by an affine combination of a constant-weighted portfolio and the market portfolio. Moreover, it will never be long-only for $d > 2$ due to the constraint $\gamma^i > 1$ for every $i$. Using the results of Section~\ref{2d_long_only} we can obtain an explicit expression for the optimal long-only portfolio in feedback form when $d=2$. In the notation of that section we have $$ \tilde \ell(x) = \frac{1}{2}\(\frac{\gamma^1}{x} - \frac{\gamma^2}{1-x}\).$$
Setting $\theta^1= \frac{\gamma^1-2}{\gamma^1 + \gamma^2 -2}$ and $\theta^2 = \frac{\gamma^1}{\gamma^1 + \gamma^2 -2}$ it follows from Theorem~\ref{opt_long_only} that the optimal long-only portfolio in feedback form is given by
$$ \hat \pi^1(x) = \begin{cases}
1 &  x^1 < \theta^1\\
\frac{1}{2}\gamma^1 + x^1(1 - \frac{1}{2}\gamma^1 - \frac{1}{2}\gamma^2) &  \theta^1 \leq  x^1 \leq \theta^2 \\
0 & x^1 > \theta^2.
\end{cases}
$$ 
and $\hat \pi^2(x) = 1-\hat \pi^1(x)$.
In particular we see that the unconstrained portfolio is long-only if and only if $1 < \gamma^i \leq 2$ for $i = 1,2$. Moreover, we can explicitly compute the growth rate achieved by this portfolio in terms of the incomplete Beta function. Additionally, it can be verified that 
$\sqrt{\tilde c \tilde p(x)} = (x)^{\gamma^1/2}(1-x)^{\gamma^2/2}$ is concave on $(0 \lor \theta_1,\theta_2 \land 1)$ so by Remark \ref{sqrtcp} together with Proposition~\ref{concave_suff} it follows that $\hat \pi$ is generated by a concave function.

\begin{remark}
	In the case when $d = 2$ the condition $\gamma^i \geq 1$ for $i = 1,2$ is necessary and sufficient for Assumption \ref{finite_assumption} to hold as can be checked by Feller's test for explosion (see e.g. Theorem 5.29 in \cite{karatzas1998brownian} for the statement of Feller's test).
\end{remark}
\begin{remark}
	It is worth noting that before studying the ergodic robust asymptotic growth problem in \cite{kardaras2018ergodic}, the authors in \cite{kardaras2012robust} had studied under a similar framework the asymptotic growth problem where the only input was the covariation matrix $c$. They found that the optimal growth rate is given by the negative of the principal eigenvalue $\lambda^*$ of the operator $L$ and the optimal generating function is $\log \eta^*$ where $\eta^*$ is the eigenfunction corresponding to $-\lambda^*$; that is $\eta^*$ and $\lambda^*$ satisfy $L\eta^* = -\lambda^*\eta^*$. Note that one has the following minimax representation for the principal eigenvalue of an operator (see Theorem 4.4.7 in \cite{pinsky1995positive} )
	
	\begin{equation} \label{minimax}-\lambda^* = \inf\limits_{\mu  \in \mathcal{P}(\Delta^{d-1}_+)}\sup\limits_{\substack {u \in C^2(\Delta^{d-1}_+)\\ u >0}} \int_{\Delta^{d-1}_+} \frac{-Lu}{u}(x)\mu(dx)
	\end{equation}
	where $\mathcal{P}(\Delta^{d-1}_+)$ is the space of probability measures on $\Delta^{d-1}_+$.
	From Lemma \ref{C2_lem} we have for $C^2$ functions $G$ and any measure $\P\in \Pi$ that the growth rate is given by
	$$ g(V;\P) = \int_{\Delta^{d-1}_+} \frac{-LG}{G}p.$$ By comparing this representation of the growth rate with \eqref{minimax}, we might expect that in many cases, by minimizing the robust optimal-growth rate over densities $p$ for which Assumption~\ref{finite_assumption} holds, we would recover the eigenvalue problem of \cite{kardaras2012robust}. Indeed, we can verify this in the following example. Taking $b = \boldsymbol{1}$ in the Dirichlet example above it can easily be verified that $$-\lambda^* = \frac{1}{2} \sum\limits_{\substack{i,j=1 \\ i \ne j}}^d \alpha_{ij}$$ with eigenfunction $\eta^*(x) = \prod_{i=1}^d x^i$. Thus, by taking $a= (2,2,\dots,2)$ we do indeed recover the worst-case model from \cite{kardaras2012robust}. 
\end{remark}

\subsection{Generalized Volatility-Stabilized Models} \label{gen_vol_stab}
In \cite{pickova2014generalized} the authors consider a generalized volatility-stabilized model, which specifies that the stock capitalizations satisfy the SDE

\begin{equation} \label{vol_stab_stock_dynam}
d\log S^i_t = \frac{\alpha^i}{2(X^i_t)^{2\beta}}K(S_t)^2dt + \frac{\sigma}{(X^i_t)^\beta}K(S_t)dW^i_t; \quad i=1,\dots,d,
\end{equation}
where $X^i = S^i/(S^1+\dots+S^d)$. Here  $K$ is a measurable real-valued function and $\alpha^i \geq 0$, $\sigma > 0$ and $\beta > 0$ are fixed parameters. In this section, we assume the following conditions on $K$:
\begin{enumerate}[label = ({\roman*})]
	\itemsep=-0.3ex
	\item $K \in C^{2,\gamma}((0,\infty)^d,(0,\infty))$ for some $\gamma \in (0,1]$.
	\item There exist constants $K_{\min},K_{\max} > 0$ such that $K_{\min} \leq K(s) \leq K_{\max}$ for every $s \in (0,\infty)^d$,
	\item $K$ depends on $s$ only through $x$; that is, there exists a function $\tilde K$ which satisfies $\tilde K(x) = K(s)$ for every $s \in (0,\infty)^d$ where $x = s/(s^1+\dots+s^d)$. 
\end{enumerate}  It is shown in \cite{pickova2014generalized} that there exists a weak solution to \eqref{vol_stab_stock_dynam} which does not explode in finite time. 

A computation yields
\begin{equation} \label{vol_stab_market_dynam}
\begin{split}
\frac{dX_t^i}{X_t^i} = \(\sigma^2\(- (X_t^i)^{1-2\beta}  +\sum_{j=1}^d(X_t^j)^{2(1-\beta)} \) + \frac{a^i}{2}(X^i_t)^{-2\beta} - \frac{1}{2}\sum_{j=1}^da^j(X^j_t)^{1-2\beta} \)\tilde K(X_t)^2dt \\
+  \sigma \tilde K(X_t)\((X_t^i)^{-\beta}(1-X^i_t)dW^i_t + \sum_{j \ne i} (X^j_t)^{1-\beta}dW_j\)
\end{split}
\end{equation}
where $a^i := \alpha^i - \sigma^2$ for every $i$. Note that if we choose $\beta = 1/2, \sigma^2=1, \tilde K(\cdot) \equiv 1$ and $a^i = a$ for some $a > 0$ and every $i=1,\dots,d$, then 
\eqref{vol_stab_market_dynam} reduces to \eqref{vol_stabilized} from Section~\ref{a_motivating_example}. From \eqref{vol_stab_market_dynam} we see that the instantaneous covariation matrix is given by
$$ c_{ij}(x) = -\sigma^2 \tilde K^2(x)x^ix^j\((x^i)^{1-2\beta} + (x^j)^{1-2\beta}-\sum_{k}(x^k)^{2(1-\beta)}\); \quad i\ne j,\ x \in \Delta^{d-1}_+,$$
and $c_{ii}(x) = -\sum_{j \ne i} c_{ij}(x).$ For $\beta \ne 1/2$ this specification for $c$ does not correspond to the tractable class from Section~\ref{model} defined by \eqref{c_def}. Nevertheless, we are still able to compute $c^{-1}\diver c$ and obtain an explicit formula for the invariant density. To the best of our knowledge the formulas below for the invariant density and growth-optimal strategies in the case $\beta \ne 1/2$ are new. To simplify the presentation  set $\gamma^i = \frac{a^i}{\sigma^2}$ and for a vector $z \in \R^d$ and $q > 0$ we set $|z|_q = (\sum_{i=1}^d |z^i|^q)^{1/q}$. It can be shown that
$$c^{-1} \diver c(x) = \nabla \log\(\tilde K(x)^2|x|_{2\beta}^{2(1+(d-1)\beta)-d}\prod_{i=1}^d(x^i)^{2(1-\beta)}\).$$

We define
$$p(x) \propto |x|_{2\beta}^{b}\prod_{i=1}^d (x^i)^{\gamma^i + 2(\beta-1)}\tilde K(x)^{-2}$$
where $b = d-\sum_{i=1}^d \gamma^i - 2(d-1)\beta$ and note that since $|x|_{2\beta}^b$ and $\tilde K$ are both bounded, $p$ is integrable if and only if $\gamma^i > 1-2\beta$ for every $i=1,\dots,d$.
Under these specifications it is clear that Assumption~\ref{input_assumption} holds and a computation shows that \eqref{vol_stab_market_dynam} becomes 
$$dX_t = c\ell(X_t)dt + \sigma(X_t)dW_t$$ where $\sigma(x)$ is a matrix square root of $c(x)$, and 
$$\ell(x) = \frac{1}{2}c^{-1}\diver c + \frac{1}{2}\nabla \log p = \nabla \log \( |x|_{2\beta}^{1-\frac{1}{2}\sum_{i=1}^d \gamma^i} \prod_{i=1}^d(x^i)^{\frac{\gamma^i}{2}}\).$$  A similar calculation to the one in Example~\ref{dirichlet_example} shows that Assumptions~\ref{finite_assumption}(i) and (ii) hold if and only if $\gamma^i > \max\{2(1-\beta),1\}$ for every $i$. As previously mentioned it was shown in \cite{pickova2014generalized} that $S$ does not explode, from which it follows that $X$ does not explode so that Assumption~\ref{finite_assumption}(iii) holds here as well.

Thus, under the assumption $\gamma^i > \max \{2(1-\beta),1\}$ for every $i$ it follows from Theorem~\ref{KR_main} that the robust growth-optimal portfolio is generated by the function 
$$G(x) = |x|_{2\beta}^{1-\frac{1}{2}|\gamma|_1} \prod_{i=1}^d (x^i)^{\frac{\gamma^i}{2}}$$ and the corresponding portfolio is given by
$$\pi_{\hat G}^i(x) = \frac{1}{2}\(\gamma^i + \(2-|\gamma|_1\)\frac{(x^i)^{2\beta}}{|x|_{2\beta}^{2\beta}}\), \quad i=1,\dots,d.$$ We see that $\pi_{\hat G}$ is rank-based if and only if $\gamma^i = \gamma$ for some $\gamma > \max\{2(1-\beta),1\}$ and for every $i=1,\dots,d$. Additionally, $\pi_{\hat G}$ is a long-only portfolio if and only if $\gamma^i \geq |\gamma|_1-2$ for every $i$. When $d > 2$, it is easily checked that this condition is incompatible with the requirement $\gamma^i > 1$. Hence, as in the standard volatility-stabilized market example of Section~\ref{a_motivating_example}, the unconstrained robust growth-optimal strategy is never long-only for $d > 2$.

Again, using the results of Section~\ref{2d_long_only} we can obtain an explicit expression for the optimal long-only portfolio in feedback form when $d=2$. Set  $\theta^1 = (\frac{\gamma^2-2}{\gamma^1})^{1/(2\beta)}\I\{\gamma^2 >2\}$ and $\theta^2 = (\frac{\gamma^2}{\gamma^1-2})^{1/(2\beta)}\I\{\gamma^1 >2\} + \infty \I\{\gamma^1 \leq 2\}.$ From Theorem~\ref{opt_long_only} we obtain that

$$ \hat \pi^1(x) = \begin{cases}
	1 &  x < \frac{1}{1+\theta^2}\\
	\frac{1}{2}\(\gamma^1  + (2-\gamma^1-\gamma^2)\frac{x^{2\beta}}{x^{2\beta} + (1-x)^{2\beta}}\) &  \frac{1}{1+\theta^2} \leq  x \leq \frac{1}{1+\theta^1} \\
	0 & x > \frac{1}{1+\theta^1}.
\end{cases}
$$  
and $\hat \pi^2(x) = 1-\hat \pi^1(x)$. It follows that the unconstrained optimal strategy for $d=2$ is long-only if and only if $\max\{2(1-\beta),1\} < \gamma_i \leq 2$ for $i=1,2$. Moreover, it can be verified that $$\sqrt{\tilde p \tilde c(x)} = (x^{2\beta} + (1-x)^{2\beta})^\frac{{1 - \frac{1}{2}\gamma^1 - \frac{1}{2}\gamma^2}}{2\beta}x^{\frac{\gamma_1}{2}}(1-x)^{\frac{\gamma^2}{2}}$$ is always concave on the region $(\frac{1}{1+\theta^2},\frac{1}{1+\theta^1})$ so that by Proposition~\ref{concave_suff} together with Remark~\ref{sqrtcp} we see that $\hat \pi$ is generated by a concave function.

\subsection{The Logit-Normal Distribution} \label{logit_normal}
Let $g(x) = 1$ and set $f_i(x^i) = (x^i)^{a^i}(1-x^i)^{b^i}$ for some constants $a^i, b^i > 2$. Let $p$ be the density of the logit-normal distribution with parameters $\mu \in \R^d$ and $\Sigma \in \mathbb{S}^d_{++}$; that is we have
$$p(x)  = \frac{1}{\sqrt{2\pi|\det \Sigma|}} \frac{1}{\prod_{i=1}^dx^i(1-x^i)}\exp\(-\frac{1}{2} \(\text{logit}(x) - \mu\)^\top \Sigma^{-1}\(\text{logit}(x)-\mu\)\)$$
where $$\text{logit}(x) = \(\log\(\frac{x^1}{1-x^1}\),\dots,\log\(\frac{x^d}{1-x^d}\)\).$$
The logit-normal distribution has the property that if $Y$ is logit-normal with parameters $\mu,\Sigma$ then $\text{logit}(Y)$ is $d$-variate normal with mean vector $\mu$ and covariance matrix $\Sigma$. 

We will now show that under these specifications Assumption~\ref{finite_assumption} is satisfied by virtue of Proposition~\ref{sufficient}. Using the notation from Proposition~\ref{sufficient} we have in this case that
$$R(x)  \propto \prod_{i=1}^d (x^i)^{\frac{1}{2}(a^i-1)}(1-x^i)^{\frac{1}{2}(b^i-1)}\exp\(-\frac{1}{4} \(\text{logit}(x) - \mu\)^\top \Sigma^{-1}\(\text{logit}(x)-\mu\)\)$$
so we see that $\lim_{x \to \partial \Delta^{d-1}_+}R(x) = 0$. A direct calculation shows that there is a constant $C>0$ such that
\begin{align*}
\left|\frac{LR}{R}(x)\right| &\leq C\sum\limits_{i=1}^d (x^i)^{a^i-2}(1-x^i)^{b^i-2} \\
|\log R(x)| &\leq C \sum\limits_{i=1}^d (x^i)^{a^i-2}(1-x^i)^{b^i-2}
\end{align*}
and that $LR/R$ is bounded from below since $a^i,b^i > 2$. It follows that both $|LR/R|p$ and $|\log R|p$ are integrable so by Proposition \ref{sufficient} we have that Assumption \ref{finite_assumption} is satisfied.

Hence, by Corollary \ref{stationary_cor} we conclude that 
\begin{align*}\hat \phi(x) =   -&\frac{1}{4}\(\text{logit}(x)-\mu\)^\top \Sigma^{-1}\(\text{logit}(x) - \mu\)\\
+&  \frac{1}{2}\sum\limits_{i=1}^d(a^i-1)\log x^i + (b^i-1) \log (1-x^i) 
\end{align*}
and by the master formula Theorem \ref{master_formula} that
\begin{align*}\pi_{\hat G}^i(x)=  &\frac{1}{2}\(a^i-1 - (b^i-1) \frac{x^i}{1-x^i} - \frac{x^i}{(1-x^i)^2}\Sigma^{-1}\(\text{logit}(x)-\mu\)^i\) \\& + x^i- \frac{x^i}{2}\sum\limits_{j=1}^d \(a^j-1 - (b^j-1) \frac{x^j}{1-x^j} -\frac{x^j}{(1-x^j)^2} \Sigma^{-1}\(\text{logit}(x)-\mu\)^j\).
\end{align*}

\section{Applications} \label{applications}

\subsection{Finite Dimensional Approximation} \label{fin_dim}
Next we turn our attention to establishing a method for finding the solution to \eqref{EC}.
The class of exponentially concave functions has nice properties that make the optimization problem susceptible to a finite dimensional approximation which can be numerically implemented. Indeed, the next two results establish a large class of extreme points for the set $\partial \mathcal{E}$ defined in \eqref{partialE}.
\begin{defn}
	Let $\mathcal{C}$ be a convex set. We say that $f \in \mathcal{C}$ is an extreme point if whenever we have $f = \alpha g + (1-\alpha)h$ for some $g,h \in \mathcal{C}$ and $\alpha \in (0,1)$ then we must have that $ g= h = f$.
\end{defn}
For the next lemma set
$$\R^d_{++} = \{a \in \R^d: a^i > 0, \ i = 1,\dots,d\}.$$
\begin{lem} \label{log_lemma}
	Let $a,v,w \in \R^d_{++}$, and $\alpha \in (0,1)$ be given such that 
	\[
	\log(a^\top x) = \alpha \log(v^\top x) + (1-\alpha)\log(w^\top x)
	\] for every $x \in D$, where $D$ is any nonempty relatively open set in $\Delta^{d-1}_+$. Then there exist constants $c_1,c_2 > 0$ such that $a = c_1v = c_2w$.
\end{lem}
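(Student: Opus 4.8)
The plan is to reduce the functional identity to an algebraic one by exploiting the real-analyticity of all functions involved. First I would observe that $x \mapsto \log(a^\top x)$, $\log(v^\top x)$ and $\log(w^\top x)$ are real-analytic on the open set $\{x : a^\top x > 0, v^\top x > 0, w^\top x > 0\}$, which contains $\Delta^{d-1}_+$ because $a,v,w \in \R^d_{++}$. Since the identity $\log(a^\top x) = \alpha \log(v^\top x) + (1-\alpha)\log(w^\top x)$ holds on the nonempty relatively open set $D \subseteq \Delta^{d-1}_+$, and $\Delta^{d-1}_+$ is connected, the identity theorem for real-analytic functions extends the identity to all of $\Delta^{d-1}_+$, hence (by homogeneity and another analytic continuation) to the full open cone $\R^d_{++}$. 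Equivalently, exponentiating, $a^\top x = (v^\top x)^\alpha (w^\top x)^{1-\alpha}$ for all $x \in \R^d_{++}$.

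Next I would extract the conclusion from this pointwise identity. One clean way: fix any $x_0 \in \R^d_{++}$ and differentiate the identity $a^\top x = (v^\top x)^\alpha(w^\top x)^{1-\alpha}$ in a direction $h$, i.e. compute the gradient. Taking the logarithmic derivative gives
\[
\frac{a^\top h}{a^\top x} = \alpha \frac{v^\top h}{v^\top x} + (1-\alpha)\frac{w^\top h}{w^\top x}, \qquad x,h \in \R^d_{++}.
\]
Now I would specialize cleverly. Setting $h = x$ recovers the trivial identity, so instead fix $x$ and let $h$ range over $\R^d$: the map $h \mapsto a^\top h/(a^\top x)$ is linear, and the right side is a convex combination of two linear functionals $h \mapsto v^\top h/(v^\top x)$ and $h \mapsto w^\top h/(w^\top x)$. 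Two linear functionals on $\R^d$ agreeing everywhere forces equality of the vectors; so for each fixed $x$,
\[
\frac{a}{a^\top x} = \alpha \frac{v}{v^\top x} + (1-\alpha)\frac{w}{w^\top x}.
\]
Letting $x$ vary, the left side is a scalar multiple of the fixed vector $a$, while the right side is a combination of the fixed vectors $v,w$ with $x$-dependent scalar coefficients. Since $a^\top x$, $v^\top x$, $w^\top x$ each take a continuum of values, if $v$ and $w$ were linearly independent we could choose $x$ to make the coefficients of $v$ and $w$ take at least two distinct ratios, contradicting that the combination always lies on the line $\R a$. Hence $v$ and $w$ are parallel, say $w = t v$ with $t > 0$; then $(v^\top x)^\alpha(w^\top x)^{1-\alpha} = t^{1-\alpha} v^\top x$, so $a^\top x = t^{1-\alpha} v^\top x$ for all $x$, giving $a = t^{1-\alpha} v$ and hence $a = c_1 v = c_2 w$ with $c_1 = t^{1-\alpha}$, $c_2 = t^{-\alpha}$, both positive.

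The main obstacle — or rather the step requiring the most care — is the passage from the identity on the small open set $D$ to an identity valid on enough of $\R^d_{++}$ to run the linear-algebra argument; this is where analyticity (or, alternatively, a direct computation of second derivatives at a single interior point of $D$, which already pins down $a,v,w$ up to scaling without needing global extension) does the work. If one prefers to avoid invoking the identity theorem, an equivalent route is: take two derivatives of $\log(a^\top x) = \alpha\log(v^\top x) + (1-\alpha)\log(w^\top x)$ at a point $x_0 \in D$ to get $-\,\frac{a a^\top}{(a^\top x_0)^2} = -\alpha\frac{v v^\top}{(v^\top x_0)^2} - (1-\alpha)\frac{w w^\top}{(w^\top x_0)^2}$ as rank-one (or sums thereof) symmetric matrices, and combine with the first-derivative relation; a rank-one matrix written as a nontrivial convex combination of two rank-one positive semidefinite matrices forces all three to have the same range, i.e. $a,v,w$ pairwise parallel, and then the scalars are determined as above. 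Either way the conclusion $a = c_1 v = c_2 w$ with $c_1,c_2 > 0$ follows. I would write up the analyticity version as the cleanest.
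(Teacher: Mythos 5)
Your argument is correct, but it takes a genuinely different route from the paper. The paper works purely locally: it reduces to coordinates on $\R^{d-1}$, differentiates the identity twice at points of $D$, and observes that squaring the first-derivative relation and comparing with the second-derivative relation yields $0=\alpha(1-\alpha)\bigl(\tfrac{\tilde v^i}{v^d+\tilde v^\top x'}-\tfrac{\tilde w^i}{w^d+\tilde w^\top x'}\bigr)^2$, which immediately forces the two terms to coincide and, after substituting back, gives $a$, $v$, $w$ pairwise proportional. Your main route instead invokes the identity theorem for real-analytic functions to propagate the identity from $D$ to all of $\Delta^{d-1}_+$ and then (by homogeneity) to the cone $\R^d_{++}$, after which a single gradient computation plus a linear-independence argument finishes the job. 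Both are valid; the paper's version buys elementarity (only pointwise derivatives at $D$, no analytic continuation), while yours trades second derivatives for a global extension. Your sketched alternative -- two derivatives at a single point of $D$ and a rank-one decomposition argument -- is essentially the paper's proof. One point to be careful about in your write-up: $\Delta^{d-1}_+$ has empty interior in $\R^d$, so the identity theorem must be applied on the $(d-1)$-dimensional affine hyperplane $\{x:\boldsymbol{1}^\top x=1\}$ (equivalently after the coordinate reduction $x^d\mapsto 1-\sum_{i<d}x^i$ that the paper performs); as stated, "real-analytic on an open subset of $\R^d$ containing $\Delta^{d-1}_+$" does not directly let you continue from a set $D$ that is only relatively open. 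This is easily repaired and does not affect the correctness of the argument.
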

\begin{proof} 
	To simplify the presentation of the proof we explicitly make the transformation $x^d \mapsto 1- \sum_{i=1}^{d-1}x^i$ and work on $\R^{d-1}$. Write $x'$ for $(x_1,\dots,x_{d-1})$. Define $\tilde a \in \R^{d-1}$ via $\tilde a^i := a^i - a^d$ for $i=1,\dots,d-1$. The vectors $\tilde v$ and $\tilde w$ are defined analogously. Then by assumption we have that 
	$$\log (a^d + \tilde a^\top x' ) = \alpha \log(v^d + \tilde v^\top x') + (1-\alpha)\log(w^d + \tilde w^\top x')$$ for all $x'$ in an open set $D' \subset \R^{d-1}$.
	
	Since this equality holds on an open set we can take first and second partial derivatives to obtain
	\begin{align}
	\frac{\tilde a^i}{a^d + \tilde a^\top x'} &= \alpha \frac{\tilde v^i}{v^d + \tilde v^\top x'} + (1-\alpha)\frac{\tilde w^i}{w^d + \tilde w^\top x'} \label{first_deriv} \\[1ex]
	\frac{(\tilde a^i)^2}{(a^d + \tilde a^\top x')^2} & = \alpha \frac{(\tilde v^i)^2}{(v^d + \tilde v^\top x')^2} + (1-\alpha) \frac{(\tilde w^i)^2}{(w^d + \tilde w^\top x')^2} \label{second_deriv}
	\end{align}
	for every $x' \in D'$ and $i = 1,\dots,d-1$.
By squaring \eqref{first_deriv} and equating with \eqref{second_deriv} we obtain 
$$ 0 = \alpha(1-\alpha)\(\frac{\tilde v^i}{v^d + \tilde v^\top x'} - \frac{\tilde w^i}{w^d + \tilde w^\top x'}\)^2.$$ Since $\alpha \in (0,1)$ it follows that $\tilde v_i = \frac{v^d + \tilde v^\top x'}{w^d + \tilde w^\top x} \tilde w_i $.
	Now plugging this back into \eqref{first_deriv} gives
	$$ \frac{\tilde a^i}{a^d + \tilde a^\top x'} = \frac{\tilde w^i}{w^d + \tilde w^\top x'}.$$ Putting this all together we see that 
	\begin{equation} \label{linear_relationship}
	\tilde a^i = \frac{a^d + \tilde a^\top x'}{w^d + \tilde w^\top x'} \tilde w^i = \frac{a^d + \tilde a^\top x'}{v^d +\tilde v^\top x'}v^i; \quad x' \in D', \ i=1,\dots,d-1.
	\end{equation}
Multiplying \eqref{linear_relationship} by $x^i$, summing over the coordinates and simplifying yields
\begin{align*}w^d\tilde a^\top x' = a^d \tilde w^\top x', \qquad   v^d \tilde a^\top x' = a^d \tilde v^\top x',&&  x' \in D'
\end{align*}
Again taking derivative with respect to $x^i$ and simplifying we obtain 
\begin{equation} \label{linear_relationship2}
\frac{w^i}{w^d} = \frac{a^i}{a^d} = \frac{v^i}{v^d} 
\end{equation} for every $i$.
It follows that $a,v$ and $w$ are constant multiples of each other and by summing over \eqref{linear_relationship2} we see that the constants $c_1,c_2$ from the statement of theorem can be taken to be $a^\top \boldsymbol{1}/v^\top\boldsymbol{1}$ and $a^\top \boldsymbol{1}/w^\top \boldsymbol{1}$ respectively.  
\end{proof}
We now define the set \begin{equation} \label{E_laff} \partial\mathcal{E}_{\text{laff}} := \left\{\nabla \(\bigwedge_{k=1}^n \log w_i^\top x\) \ \bigg| \ n \in \N, \{w_k\}_{k=1}^n \subset \R^d_{++} \right\}.
\end{equation}
The functions appearing in the definition of $\partial\mathcal{E}_{\text{laff}}$ are logarithms of minimums of hyperplanes over the simplex. The ``laff" in the subscript refers to the functions being \emph{log-affine}. It is well known that any concave function can be pointwise approximated by affine functions. However, for our application to robust growth-optimization it is imperative that we can additionally approximate the asymptotic growth rate of a portfolio generated by a concave function. In view of the representation \eqref{lambda_EC} for $\lambda_{\mathcal{E}}$, this requires showing that gradients of minimums of log-affine functions are dense with respect to the $\mathcal{H}^{c,p}$ norm. This is the content of the next proposition.

\begin{prop} \label{extreme_points}
	The set $\partial\mathcal{E}_{\text{laff}}$ defined in \eqref{E_laff}
	 is dense in $\partial \mathcal{E}$ under the $\mathcal{H}^{c,p}$ norm and every member of $\partial\mathcal{E}_{\text{laff}}$ is an extreme point in $\partial \mathcal{E}$.
\end{prop}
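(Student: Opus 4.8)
The plan is to establish the two claims separately. For the extreme-point claim, suppose $\nabla\phi \in \partial\mathcal{E}_{\text{laff}}$ with $\phi = \bigwedge_{k=1}^n \log(w_k^\top x)$, and suppose $\nabla\phi = \alpha\nabla\phi_1 + (1-\alpha)\nabla\phi_2$ in $\mathcal{H}^{c,p}$ with $\phi_1,\phi_2 \in \mathcal{E}$ and $\alpha \in (0,1)$. First I would reduce to a statement about the functions themselves: since $\mathcal{H}^{c,p}$ identifies vectors that differ by a multiple of $\boldsymbol 1$, and since $\phi,\phi_1,\phi_2$ are concave (hence determined up to a constant by their a.e. gradients via integration along the manifold $\Delta^{d-1}_+$), the equation $\nabla\phi = \alpha\nabla\phi_1+(1-\alpha)\nabla\phi_2$ forces $\phi = \alpha\phi_1 + (1-\alpha)\phi_2 + \text{const}$ a.e., hence everywhere by continuity. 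Now on the (relatively open, nonempty) region $D_k$ where the $k$-th hyperplane is the strict minimum, $\phi = \log(w_k^\top x)$ is log-affine, so $\log(w_k^\top x) = \alpha\phi_1 + (1-\alpha)\phi_2$ on $D_k$. Since $e^{\phi_1}, e^{\phi_2}$ are concave and positive, and their weighted geometric-type combination recovers an affine function $w_k^\top x$, I would argue (using that $e^{\phi_i}$ concave forces, together with $\alpha\log e^{\phi_1} + (1-\alpha)\log e^{\phi_2} = \log(\text{affine})$ and the AM–GM bound $\alpha e^{\phi_1} + (1-\alpha)e^{\phi_2} \geq e^{\alpha\phi_1+(1-\alpha)\phi_2} = w_k^\top x$, combined with concavity of the left side) that in fact $e^{\phi_1}$ and $e^{\phi_2}$ must each be affine on $D_k$; then Lemma \ref{log_lemma} applies on $D_k$ to give $\nabla\phi_1 = \nabla\phi_2 = \nabla\log(w_k^\top x)$ as elements of $\mathcal{H}^{c,p}$ restricted to $D_k$. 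Doing this for every $k$ and noting $\bigcup_k \overline{D_k} = \Delta^{d-1}_+$ yields $\nabla\phi_1 = \nabla\phi_2 = \nabla\phi$ in $\mathcal{H}^{c,p}$, which is the extreme-point property.

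For the density claim, I would fix $\nabla\phi \in \partial\mathcal{E}$, so $G := e^\phi$ is a positive concave function on $\Delta^{d-1}_+$. By Theorem \ref{partialE_thm} we already know $\partial\mathcal{E}\cap C^1$ is dense, so it suffices to approximate a $C^2$ exponentially concave $\phi$, or even just to approximate a general positive concave $G$ by minima of finitely many affine functions in the relevant sense. The standard fact is that a concave function is the infimum of its supporting hyperplanes; choosing a fine finite net of supporting hyperplanes $w_k^\top x$ (with $w_k \in \R^d_{++}$, which can be arranged since $G > 0$ on the compact closure of a slightly shrunk simplex and supporting hyperplanes of a positive concave function on the simplex can be taken with strictly positive coefficients after adding a small multiple of $\boldsymbol 1^\top x = 1$) gives $G_n := \bigwedge_{k=1}^{n}(w_k^\top x) \downarrow G$ pointwise, hence $\phi_n := \log G_n \downarrow \phi$ pointwise. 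By Theorem 25.7 in \cite{rockafellar1970convex} (the same tool used in the proof of Theorem \ref{partialE_thm}) it follows that $\nabla\phi_n \to \nabla\phi$ at every point of differentiability of $\phi$, i.e. almost everywhere. Finally, since each $\phi_n \in \mathcal{E}$, we have $\nabla\phi_n \in \partial\mathcal{E}$, and $\partial\mathcal{E}$ is bounded in $\mathcal{H}^{c,p}$ by Theorem \ref{partialE_thm}; extracting a weakly convergent subsequence, the a.e.\ pointwise convergence identifies the weak limit as $\nabla\phi$, and convexity of $\partial\mathcal{E}$ (weak closure $=$ strong closure, as in Theorem \ref{partialE_thm}) places $\nabla\phi$ in the strong closure of $\partial\mathcal{E}_{\text{laff}}$.

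I expect the main obstacle to be the rigidity argument in the extreme-point claim: passing from ``$\alpha\log e^{\phi_1} + (1-\alpha)\log e^{\phi_2}$ equals a log-affine function on the open region $D_k$'' to ``$e^{\phi_1}$ and $e^{\phi_2}$ are each affine on $D_k$''. The inequality $\alpha e^{\phi_1} + (1-\alpha)e^{\phi_2} \geq e^{\alpha\phi_1+(1-\alpha)\phi_2} = w_k^\top x$ has a concave left-hand side dominating an affine function, and one must rule out strict inequality anywhere in $D_k$; this should follow because equality in AM–GM forces $e^{\phi_1} = e^{\phi_2}$ pointwise on $D_k$ wherever it holds, and a gap on any subregion together with concavity and the prescribed boundary-type constraints from the other pieces $\log(w_j^\top x)$ would contradict $\phi = \log(w_k^\top x)$ there — I would need to be careful to make this last contradiction airtight rather than hand-wavy. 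The density part I regard as routine given the machinery already developed in Theorem \ref{partialE_thm}, with the only minor care being the strict positivity $w_k \in \R^d_{++}$ of the approximating hyperplanes, handled by working on a slightly shrunk simplex (as in the mollification argument there) and exploiting $G > 0$.
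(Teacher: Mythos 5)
Your overall architecture mirrors the paper's proof, but both halves have a genuine gap at exactly the step where the real work happens. For the extreme-point claim, the reduction to $\phi = \alpha\phi_1 + (1-\alpha)\phi_2 + \mathrm{const}$ and the localization to the cells $D_k$ are correct, but the rigidity step you yourself flag does not close by the route you sketch. The AM--GM inequality $\alpha e^{\phi_1} + (1-\alpha)e^{\phi_2} \geq w_k^\top x$ on $D_k$ is just a concave function dominating an affine one on an open set, which carries no information (any strictly concave function strictly dominates each of its chords on the interior), and ``equality in AM--GM forces $e^{\phi_1}=e^{\phi_2}$'' is only usable once you know equality holds somewhere, which is precisely what is missing. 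The paper's missing idea is an \emph{integrated} version of this rigidity: take a random variable $X$ with essential range $\bar D_k$, so that $\E[X]\in \bar D_k$ by convexity, and chain
$\alpha\phi_1(\E[X]) + (1-\alpha)\phi_2(\E[X]) = \log\bigl(w_k^\top \E[X]\bigr) = \log\E[w_k^\top X] = \log\E\bigl[e^{\alpha\phi_1(X)}e^{(1-\alpha)\phi_2(X)}\bigr] \leq \alpha\log\E[e^{\phi_1(X)}] + (1-\alpha)\log\E[e^{\phi_2(X)}] \leq \alpha\phi_1(\E[X]) + (1-\alpha)\phi_2(\E[X])$,
using H\"older and then Jensen for the concave functions $e^{\phi_1},e^{\phi_2}$. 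Equality throughout forces equality in Jensen, hence $e^{\phi_1}$ and $e^{\phi_2}$ are affine on $\bar D_k$, and only then does Lemma \ref{log_lemma} apply as you intend.

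The density half fails at the last step. From $\nabla h_n \to \nabla\phi$ a.e.\ and norm-boundedness you correctly place $\nabla\phi$ in the \emph{weak} closure of $\partial\mathcal{E}_{\text{laff}}$, but $\partial\mathcal{E}_{\text{laff}}$ is not convex (its members are extreme points of $\partial\mathcal{E}$), so ``weak closure equals strong closure'' does not apply to it; the convexity of $\partial\mathcal{E}$ only re-proves $\nabla\phi\in\partial\mathcal{E}$, which you already knew. The paper repairs this in two stages: it first shows the \emph{convex} set $\partial\mathcal{E}_b$ of supergradients of bounded exponentially concave functions with bounded gradient is strongly dense in $\partial\mathcal{E}$ (here the weak-equals-strong-closure argument is legitimate), and then, for $\nabla\phi\in\partial\mathcal{E}_b$, upgrades $\nabla h_n \rightharpoonup \nabla\phi$ to strong convergence by proving $\|\nabla h_n\|_{\mathcal{H}^{c,p}} \to \|\nabla\phi\|_{\mathcal{H}^{c,p}}$, via the pointwise bound $|\partial_i h_n(x)| \leq (A+1)/(a x^i)$ and the integrability of $|c_{ij}|p/(x^ix^j)$ from Lemma \ref{dom_conv} (dominated convergence). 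Without this norm-convergence upgrade, strong $\mathcal{H}^{c,p}$-density --- which is what the finite-dimensional approximation in Section \ref{fin_dim} actually requires --- is not established.
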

\begin{proof} 
	We fix $n \in \N$, and a collection $\{w_k\}_{k=1}^n$ as in the definition of the set $\partial\mathcal{E}_{\text{laff}}$. Set $g_k(x) := \log(w_k^\top x)$ and $g(x) := \bigwedge_{k=1}^n g_k(x)$. By monotonicity of the logarithm and the fact that the minimum of finitely many concave functions is concave, we see that $g$ is exponentially concave. Now suppose that we have $\nabla g = \alpha \nabla \psi + (1-\alpha)\nabla \phi$ for some $\nabla\psi,\nabla \phi \in \partial \mathcal{E}$ and some $\alpha \in (0,1)$. It follows that there exists a constant $C > 0$ such that
	$g = \alpha \phi+ (1-\alpha)\psi + C$ and we assume without loss of generality that $C = 0$. Define $$D_k = \left\{x \in \Delta_+^{d-1}: g_k(x)  < \min\limits_{j \ne k}g_j(x)\right\}$$ and assume without loss of generality that $D_k \ne \emptyset$ for every $k = 1,\dots,n$ (since otherwise we could exclude the corresponding $g_k$ from the minimum). If $n = 1$ we just set $D_1 = \Delta^{d-1}_+$. It is clear that each $D_k$ is convex and relatively open in $\Delta^{d-1}_+$. Additionally we have that $g(x) = g_k(x)$ on $\bar D_k$ and that $\Delta_+^{d-1} = \cup_{i=1}^n \bar D_k$ where $\bar \cdot$  denotes the closure in $\Delta^{d-1}_+$. Now fix an index $k$ and let $X$ be a random variable whose (essential) range is equal to $\bar D_k$. By convexity of $\bar D_k$ we have that $\E[X] \in \bar D_k$ so we see that
	\begin{align*}
	\alpha \psi(\E[X]) + (1-\alpha)\phi(\E[X])&  = g(\E[X]) & \\
	& = \log\E[w_k^\top X]\\
	& = \log\E[e^{\log(w_k^\top X)}]\\
	& = \log\E[e^{g(X)}]\\
	& = \log\E[e^{\alpha \psi(X)}e^{(1-\alpha)\phi(X)}] \\ 
	& \leq \alpha \log\E[e^{\psi(X)}] + (1-\alpha)\log\E[e^{\phi(X)}]\\
	& \leq \alpha \psi(\E[X]) + (1-\alpha)\phi(\E[X])  
	\end{align*}
	where we used Hölder's inequality with exponent $1/\alpha$ in the first inequality and concavity of $e^\psi$ and $e^\phi$ together with Jensen's inequality in the second one. Thus we see that we have equality all the way through and in particular
	$$ 0 = \alpha\(\log\E[e^{\psi(X)}] - \psi(\E[X])\) + (1-\alpha)\(\log\E[e^{\phi(X)}] -\phi(\E[X]) \).$$
	By Jensen's inequality we have that $\log(\E[e^{\psi(X)}]) = \E[\psi(X)]$. Since we have equality in Jensen's inequality if and only if $e^\psi$ is affine on the (essential) range of $X$ we conclude that $\psi(x) = \log(u_k^\top x)$ for all $x \in \bar D_k$ and some $u_k \in \R^d_{++}$. By a similar argument $\phi(x) = \log(v_k^\top x)$ for every $x \in \bar D_k$ and some $v_k\in \R^d_{++}$. It follows that
	$$\log(w_k^\top x) = \alpha \log(u_k^\top x) + (1-\alpha) \log(v_k^\top x); \quad x \in \bar D_k.$$
Lemma \ref{log_lemma} now implies that $w_k = c_1u_k = c_2 v_k$ for some constants $c_1, c_2 > 0$. As such, we conclude that $g(x) = g_k(x) = \psi(x) + \alpha \log c_1 = \phi(x) + (1-\alpha) \log c_2$ on $D_k$ for every $k$. It follows that $\nabla g(x) = \nabla \psi(x) = \nabla \phi(x)$ proving that $\nabla g$ is an extreme point.

To prove density we argue in a similar way as in the proof of Theorem \ref{partialE_thm}.  Fix $ \nabla \phi \in \partial \mathcal{E}$. Since $e^\phi$ is a positive concave function and $x \mapsto e^x$ is monotone we can find a sequence of vectors $\{v_k\}_{k \in \N} \subseteq \R^{d}_{++}$ such that $\phi(x) = \lim_{n \to \infty} h_n(x)$ for all $x \in \Delta^{d-1}_+$ where $h_n(x) = \bigwedge_{k=1}^n \log (v^\top_k x)$. Since $h_n$ is a decreasing sequence of concave functions converging to a concave function it follows by Theorem 25.7 in \cite{rockafellar1970convex} that $h_n \to \phi$ uniformly and $\nabla h_n \to \nabla \phi$ almost everywhere. As in the proof of Theorem \ref{partialE_thm}, since $\{h_n\}_{n \in \N}$ is a bounded sequence in $\mathcal{H}^{c,p}$ we can find a subsequence that converges weakly. The almost everywhere convergence $\nabla h_n \to \nabla \phi$ implies that this weak limit must be $\nabla \phi$. This, along with the fact that $h_n$ and $\nabla h_n$ are bounded functions for every $n$, implies that 
$$ \partial \mathcal{E} = \overline{\partial \mathcal{E}_{\text{laff}}}^w = \overline{\partial \mathcal{E}_b}^w$$
where $\overline{ \cdot}^w$ denotes the weak closure in $\mathcal{H}^{cp}$ and 
$$ \partial \mathcal{E}_b = \left\{\nabla \phi \in \partial \mathcal{E}: \sup_{x \in \Delta^{d-1}_+} |\phi(x)| < \infty \text{ and } \text{esssup} |\nabla \phi| < \infty\right\}.$$
Since $\partial \mathcal{E}_b$ is a convex set, its strong and weak closure coincide so we see from this that $\partial \mathcal{E}_b$ is dense in $\partial \mathcal{E}$. As such to show that $\partial \mathcal{E}_{\text{laff}}$ (which is not a convex set) is dense in $\partial \mathcal{E}$ it suffices to show that we can strongly approximate any member of $\partial \mathcal{E}_b$.

We now assume that $\nabla \phi \in \partial\mathcal{E}_b$ is given and approximate $\phi,\nabla \phi$ by $h_n$, $\nabla h_n$ as before. In particular $\nabla h_n \to \nabla \phi$ weakly in $\mathcal{H}^{c,p}$. Setting $A =\sup_{x \in \Delta^{d-1}_+} e^\phi(x)$, and $a = \inf_{x \in \Delta^{d-1}_+} e^\phi(x)$ we have by definition of $\partial \mathcal{E}_b$ that $\infty > A \geq a > 0$.  Moreover, we can assume without loss of generality that $h_n(x) \leq  A + 1$ (for example by setting $v_1 = (A+1)\boldsymbol{1}$) for every $n$. For a fixed $x$ we have by definition that $h_n(x) = \log (v_{k(x)}^\top x)$ for some $v_{k(x)}$ where the index $k$ depends on $x$. It follows by the monotone convergence of $e^{h_n}$ to $e^\phi$ that
$$ (A+1) \geq v_{k(x)}^\top x \geq e^\phi(x) \geq a.$$ In particular we see that $1/({v_{k(x)}^\top x}) \leq 1/a$ and that $v_{k(x)}^i \leq (A+1)/x^i$ for every $i$. 
Now fixing $x$ in the full measure set 
$$\{x \in \Delta^{d-1}_+: \phi \text{ and } h_n \text{ are differentiable at } x \text{ for every } n\}$$ 
we obtain $$\partial_ih_n(x) = \frac{v_{k(x)}^i}{v_{k(x)}^\top x} \leq \frac{A+1}{ax^i}.$$
Next we estimate that
\begin{align*} \nabla h_n(x)^\top c(x) \nabla h_n(x) p(x) & = \sum_{i,j=1}^d  c_{ij}(x) \partial_i h_n(x) \partial_j h_n(x)p(x) \\ & \leq  \frac{(A+1)^2}{a^2} \sum_{i,j=1}^d \frac{1}{x^ix^j}|c_{ij}(x)|p(x)
\end{align*}
for almost every $x$.
By Lemma \ref{dom_conv} the function on the right hand side is integrable so by dominated convergence we conclude that 
$\|\nabla \phi\|_{\mathcal{H}^{c,p}} = \lim_{n \to \infty} \|\nabla h_n\|_{\mathcal{H}^{c,p}}$. Since we know that $\nabla h_n \to \nabla \phi$ weakly, it follows that $\nabla h_n \to \nabla \phi$ strongly as required. This completes the proof.
\end{proof}

 \subsection{Numerical Results} \label{numerical}
 In this section we make a specific choice for the model inputs $(c,p)$ to simulate the behaviour of the optimal portfolios encountered in the previous sections in a rank-based model. We make the choice of the symmetric Dirichlet distribution $p(x) = \frac{\Gamma(ad)}{\Gamma(a)^d}\prod_{i=1}^d (x^i)^{a-1}$ as in the volatility-stabilized model of Section~\ref{a_motivating_example}. 
  Figure \ref{fig:curves} plots on a log-log scale the order statistics of  random samples drawn from a Dirichlet distribution with parameters $a=0.5,1,$ and $2$ for $d=500$ and $d=5000$. These samples represent the capital distribution curve for the ranked market weights in the model described above.
 
 \begin{figure}[h]
 	\centering
 	\textbf{Capital Distribution Curve Simulation}
 	\includegraphics[width=1\linewidth]{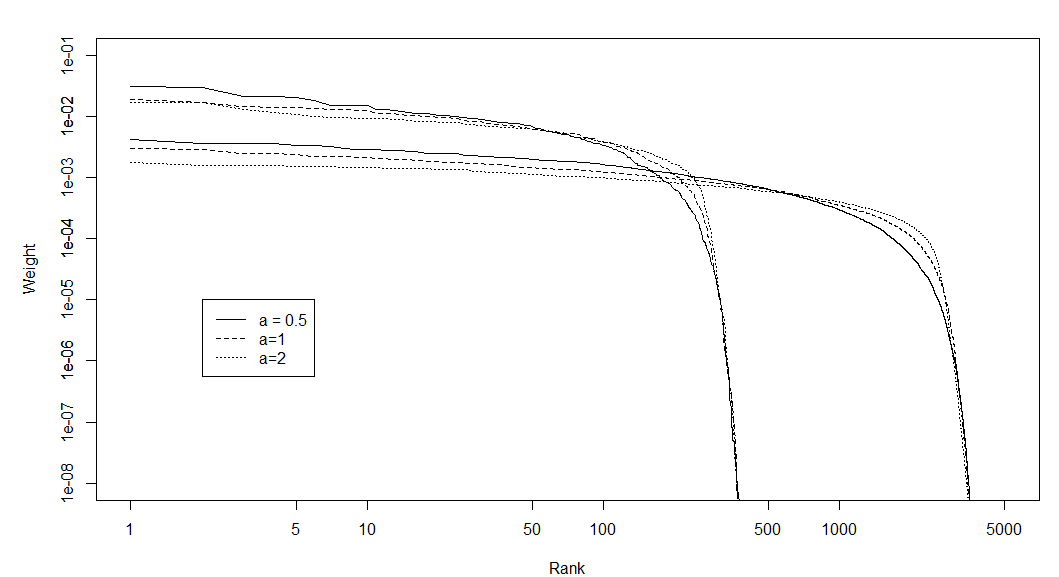}
 	\caption{Capital distribution curves from Dirichlet densities}
 	\label{fig:curves}
 \end{figure}

The shape of the curves are similar to those observed empirically (c.f. Figure 5.1 in \cite{fernholz2002stochastic}), however the largest market weight is not quite as large as seen in real financial markets. Nevertheless, since we are unaware of any better fitting distributions coming from financial models with an analytically available density, we choose the Dirichlet distribution as the invariant density for the ranked market weights in our simulations. We wish to remark that the best choice of density $p$ to take in order to accurately represent real world equity markets is non-trivial. In particular, although the setting adopted in both this paper and \cite{kardaras2018ergodic} consists of a \emph{closed market}, where the number of stocks is fixed, real equity markets do not have this property. This will impact the optimal robust asymptotic growth rate and carries difficulties for modelling the capital distribution curve as the number of assets in real world equity markets change over time. Though, in the recent paper \cite{karatzas2020open} the authors developed a framework to study \emph{open markets} -- where the investor is constrained to only invest in a subset of the market consisting of high capitalization stocks -- the extension of the robust growth problem studied in \cite{kardaras2018ergodic} and in this paper to open markets remains an open question for future research.

  For the $c$ matrix we use the specification \eqref{c_def} with the choices $f_i(x) = x$ and $f_{ij}(x) = \sigma^2$ for every $i,j \in \{1,\dots,d\}$ and some constant $\sigma^2  > 0$. This recovers the volatility structure from the volatility-stabilized model considered in Section~\ref{a_motivating_example} up to the constant volatility factor $\sigma^2$ (in that example we had $\sigma^2=1$). The unconstrained optimal strategy and the optimal portfolio in feedback form for $d =2$ are available analytically from Theorems \ref{KR_main} and \ref{opt_long_only} respectively. In Example \ref{dirichlet_example} we observed that the optimal concave portfolio is given by the optimal portfolio in feedback form, but for $d > 2$ the optimal concave portfolio needs to be estimated numerically. We now describe the numerical scheme used to compute this portfolio. We emphasize that this is for illustrative purposes; we have not attempted to optimize robustness and accuracy.

We fix $K,M \in \N$ and select vectors $a_{km} \in \R^d_{++}$ such that $a_{km}/d$ is uniformly distributed on the simplex for $k = 1\dots,K$ and $m = 1, \dots, M$. Then we set $$\phi_m(x) = \min_{k \in\{1,\dots,K\}} \log(a_{km}^\top x)$$ for $m = 1,\dots,M$ so that $\nabla \phi_m$ is an extreme point of the set $\partial \mathcal{E}$ by Proposition \ref{extreme_points}. Next we define $$\psi^\mu(x)= \sum_{m=1}^M \mu_m \phi_m(x)$$ for parameters $\mu_m$ such that $\mu_m \geq 0$ for every $m$ and $\sum_{m=1}^M \mu_m = 1$. Relying on the density and extremity of $\partial\mathcal{E}_{\text{laff}}$ in $\partial \mathcal{E}$ we replace the infinite dimensional problem \eqref{EC} by the finite dimensional problem 
$$\min\limits_{\mu} \|\nabla \psi^\mu - \ell\|^2_{\mathcal{H}^{c,p}}.$$
Notice that for a given convex combination $\mu$ we have that  $$\|\nabla \psi^\mu - \ell\|^2_{\mathcal{H}^{c,p}} = \E[ (\nabla \psi^\mu - \ell)^\top c (\nabla \psi^\mu - \ell)(Y)]$$ where $Y \sim p$. We then approximate this quantity by $\frac{1}{N}\sum_{n=1}^N    (\nabla \psi^\mu - \ell)^\top c (\nabla \psi^\mu - \ell)(Y_n)$ for a large value of $N$ and iid samples $Y_1,\dots,Y_N \sim p$.
A calculation shows that 
\begin{equation} \label{quad_prog}
\|\nabla \psi^\mu - \ell\|^2_{\mathcal{H}^{c,p}} \approx \frac{1}{2}\mu^\top \(\frac{1}{N}\sum\limits_{n=1}^N Q(Y_n)\)\mu - \mu^\top \(\frac{1}{N}\sum\limits_{n=1}^N r(Y_n)\) + C
\end{equation}
where $C$ is some constant independent of $\mu$, while
 $Q :\Delta^{d-1}_+ \to \R^{M \times M}$ and $r: \Delta^{d-1}_+ \to \R^M$ are given by
\begin{align*}
Q_{ij}(x) & = 2 \nabla \phi_i^\top c \nabla \phi_j(x) \\
r_{i}(x) & = 2 \nabla \phi_i^\top c \ell(x).
\end{align*}
As before $\nabla \phi_i$ refers to a version of the superdifferential of $\phi_i$.
It follows that $Q$ is positive-semidefinite with $v^\top Q v = (\sum_{m=1}^M v_m \nabla \phi_m)^\top c (\sum_{m=1}^M v_m \nabla \phi_m)$ for every $v \in \R^M$. Thus, minimizing \eqref{quad_prog} over $\mu$ is a quadratic programming problem that can be numerically solved. Using the optimal solution $\hat \mu_N$ we simulate the portfolio generated by $\exp \psi^{\hat \mu_N}$. Note that, $\hat \mu_N$ depends on the samples $Y_1,\dots,Y_N$ and so is a random concave function on the simplex -- such functions and their properties have recently been studied in \cite{baxendale2019random}.
 
 We now present simulations under the worst case measure $\tilde \P$, under which the coordinate process $X$ has dynamics given by \eqref{worst_case_dynamics}. Figure \ref{fig:d=2} shows the result in the $d=2$ case. We used the aforementioned algorithm to estimate the optimal concave portfolio, represented by the green curve, which as mentioned is the same as the explicitly known optimal portfolio in feedback form, represented by the red curve.
\begin{figure}[h]
 	\centering
 	\textbf{$\boldsymbol{d=2}$ Simulation}
 	\includegraphics[width=1\linewidth]{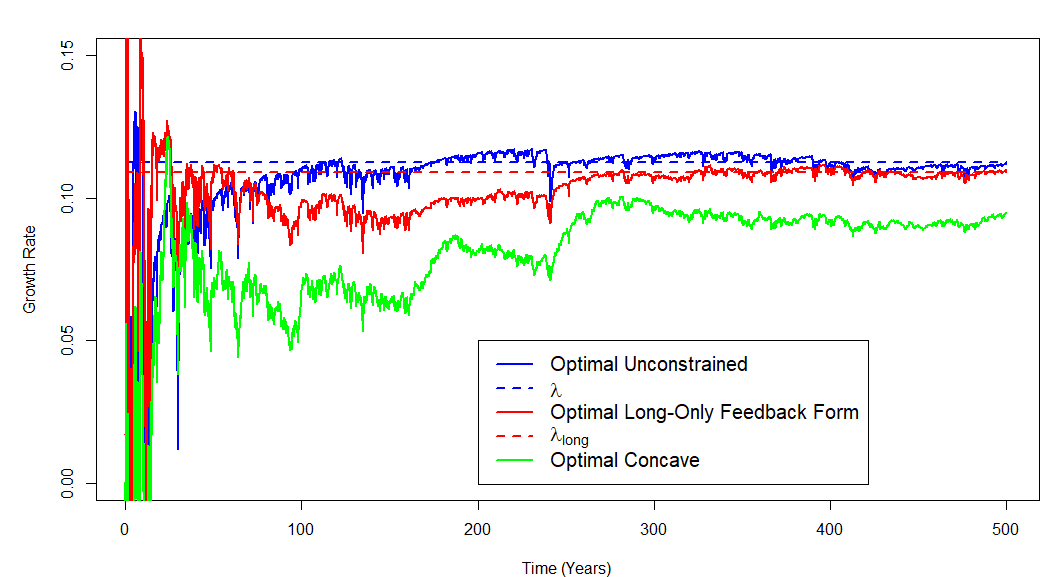}
 	
 	\caption{Parameters are $a=3,\sigma^2 = 0.1, M=25, K = 100, N=100$} 
 	\label{fig:d=2}
 \end{figure}
 As predicted by the theoretical calculations, the optimal unconstrained portfolio performs best, closely followed by the optimal long-only portfolio in feedback form ahead of our estimated optimal concave strategy. In this case we see that the attained growth rates are of the same order of magnitude though the relationship $\lambda > \lambda_{\text{long}} = \lambda_{\mathcal{E}}$ persists. This is in contrast to the results of Figure \ref{fig:d=20}. In this case we set $d=20$ and observe that $\lambda_{\mathcal{E}}$ is an order of magnitude lower than $\lambda$ (note the use of logarithmic scale for the $y$-axis). It is not surprising to see a larger disparity between the two growth rates in a higher dimension model. Indeed, as the dimension increases, the region of the simplex where the unconstrained strategy takes on a long position grows smaller relative to the size of the entire simplex. As such, the market weights do not spend as much (if any) time in this region when $d=20$ in comparison to when $d=2$. The interpretation is that a long-only investor is missing out more heavily on the possible gains from short-selling when the number of assets is large. This leads to a relatively smaller growth rate. It is unclear, however, how much of the observed loss in growth rate is purely due to the long-only restriction, how much is due to our further restriction to consider only those portfolios generated by  concave functions and how much is additionally lost from the aforementioned numerical implementation. It remains an open problem to identify tight upper bounds on the growth rates for these constrained-optimal portfolios.

 \begin{figure}[H]
	\centering
	\textbf{$\boldsymbol{d = 20}$ Simulation}
		\includegraphics[width=1\linewidth]{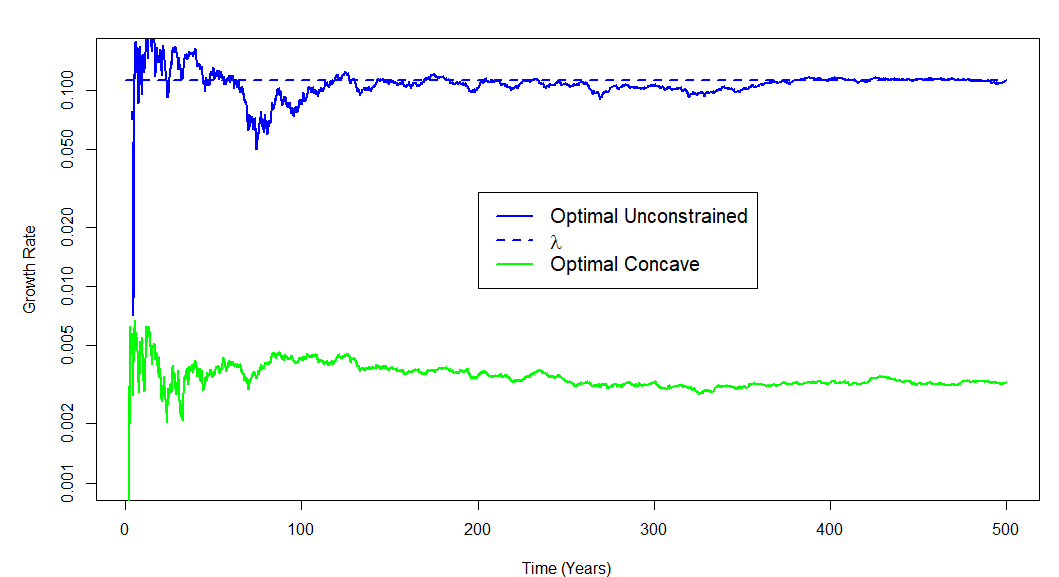}
	\caption{Parameters are $a=3,\sigma^2 = 5\times 10^{-4}, M=25, K = 100, N=100$.} 
	\label{fig:d=20}
\end{figure}

	\appendix
	
	\section{Generalized Martingale Problem on \texorpdfstring{$\Delta^{d-1}_+$}{the Simplex}}\label{mart_prob_app}
	The basic reference for the material in this section is \cite{pinsky1995positive}.
	Assumption \ref{finite_assumption}(iii) introduces a generalized martingale problem, for which having a nonexplosive solution is imperative for the analysis in the paper to hold. In this section we rigorously introduce the generalized martingale problem and using test function techniques prove Proposition \ref{sufficient}. 
	
	First we view $\Delta^{d-1}_+$ as an open subset of $\R^{d-1}$ and identify $\partial \Delta^{d-1}_+$ with a single absorbing state $\Theta$. That is, formally we define the domain of our generalized martingale problem to be $\tilde \Delta^{d-1}_+ := \Delta^{d-1}_+ \cup \{\Theta\}$; the one-point compactification of $\Delta^{d-1}_+$. Next, since the simplex has a non-smooth boundary we will need a localization procedure to define the generalized martingale problem. To this end, let $\{E_n\}_{n \in \N}$ be an increasing sequence of open connected domains such that (i) $\bar E_n \subseteq E_{n+1}$ for every $n \in \N$, (ii) $\partial E_n$ is of class $C^{2,\beta}$ for every $n \in \N$ and for some $\beta \in (0,1]$ and (iii) $\Delta^{d-1}_+ = \cup_nE_n$. For example one could take
	$$E_n = \left\{x \in \Delta^{d-1}_+: \prod_{i=1}^d x^i > 1/n\right\}.$$ Define the exit times $\tau_n :=\inf\{t \geq 0: X_t \not \in E_n\}$ and $\tau := \lim_{n \to \infty} \tau_n$. The corresponding sample space is
	$$ \tilde \Omega := \left\{\omega \in C([0,\infty);\tilde \Delta^{d-1}_+): \omega_{\tau + t} = \Theta \text{ for all } t \geq 0\right\}.$$
We denote by $\tilde X$ the coordinate process on $\tilde \Omega$, define $\tilde \F$ to be the Borel $\sigma$-algebra on $\tilde \Omega$ and let $\tilde {\mathbb{F}}$ be the right-continuous modification of the filtration generated by $\tilde X$. Now we are ready to define the generalized martingale problem.
	
	\begin{defn} \label{gen_mart}
	The generalized martingale problem on $ \Delta^{d-1}_+$ associated to an operator $\tilde L$ is for every $x \in \tilde \Delta^{d-1}_+$ to find a probability measure $\tilde \P_x$ on $(\tilde \Omega,\tilde \F)$ such that 
	\begin{enumerate}[label = ({\alph*})]
		\item $\tilde \P_x(\tilde X_0 = x) = 1$,
		\item $f(\tilde X_{T \land \tau_n}) - \int_0^{T \land \tau_n} \tilde Lf(\tilde X_t)dt$ is a martingale on $(\tilde \Omega, \tilde \F, \tilde {\mathbb{F}},  \tilde \P_x)$ for every $n \in \N$ and $f \in C^2(\Delta^{d-1}_+)$.
	\end{enumerate}
We say that the solution is nonexplosive if $\tilde \P_x(\tau < \infty) =  0$ for every $x \in \Delta^{d-1}_+$. 
	\end{defn}
We note that a nonexplosive solution to the generalized martingale problem does not charge the set of paths that reach $\Theta$ and as such a nonexplosive solution $\tilde \P_x$ can be viewed as a measure on $(\Omega,\F)$.

Now that the generalized martingale problem is introduced, we set our sights on establishing criteria for when a nonexplosive solution exists. As such, we assume that $c$ is of the form \eqref{c_def} and satisfies Assumption~\ref{function_assumption}. We define the function $R = p^{1/2}g^{1/2}\prod_{i=1}^d f_i^{1/2}$ and using the notation of Assumption~\ref{finite_assumption}(iii) define the operator $L^R$ via $L^Rh := Lh - \nabla \log R ^\top c \nabla h$ for every $h \in C^2(\Delta^{d-1}_+)$.
The following theorem from \cite{pinsky1995positive} establishes existence and uniqueness for the generalized martingale problem associated to this operator:

\begin{thm} [Theorem 1.13.1 in \cite{pinsky1995positive}] \label{nonexplosion} Under Assumption \ref{input_assumption}(i) on the instantaneous covariation matrix $c$ there exits a unique solution to the generalized martingale corresponding to the operator $L^R$.
\end{thm}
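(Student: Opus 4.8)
The statement is an instance of the general well-posedness theory for generalized martingale problems developed in \cite{pinsky1995positive}, so the plan is to verify that the operator $L^R$ on $\Delta^{d-1}_+$, together with the exhausting sequence $\{E_n\}$ of $C^{2,\beta}$ subdomains fixed above, satisfies the hypotheses of that theory, and then to quote the theorem. The first step is to pass to the $(d-1)$-dimensional chart: identifying $\Delta^{d-1}_+$ with the open set $E \subseteq \R^{d-1}$ through $(x^1,\dots,x^{d-1}) \mapsto (x^1,\dots,x^{d-1},1-\sum_{i=1}^{d-1}x^i)$ and using the chain rule $\partial_k\psi = \partial_k\phi - \partial_d\phi$ from Section~\ref{setup}, one checks that $L^R$ becomes a genuine second-order operator $\tilde L^R\psi = \frac{1}{2}\sum_{k,l=1}^{d-1}\tilde a_{kl}\,\partial_{kl}\psi + \sum_{k=1}^{d-1}\tilde b_k\,\partial_k\psi$ on $C^2(E)$, where $\tilde a$ is the leading $(d-1)\times(d-1)$ block of $c$ and $\tilde b$ is the corresponding block of $-c\,\nabla\log R$; the cross terms produced by the change of variables disappear because $c(x)\boldsymbol 1 = 0$.

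Next I would verify the two structural hypotheses of Pinsky's theorem on each $\bar E_n$. For ellipticity it suffices that $\tilde a(x)$ is positive definite: indeed $c(x) \in \mathbb{S}^d_+$ is positive semidefinite with $\text{Ker}\,c(x) = \text{span}(\boldsymbol 1)$ by Assumption~\ref{input_assumption}(i), and since $\boldsymbol 1$ has all coordinates nonzero, no nonzero vector $v \in \R^d$ with $v^d = 0$ lies in $\text{span}(\boldsymbol 1)$, whence $v^\top c(x) v > 0$ for every such $v$; as $\tilde a(x)$ is exactly the quadratic form $v \mapsto v^\top c(x) v$ restricted to that coordinate subspace, it is positive definite, and by continuity and compactness $\tilde a \geq \varepsilon_n I$ on $\bar E_n$ for some $\varepsilon_n > 0$. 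For regularity, $c \in C^{2,\gamma}(\Delta^{d-1}_+;\mathbb{S}^d_+)$ by Assumption~\ref{input_assumption}(i), while $R = (p\,g\,\prod_{i=1}^d f_i)^{1/2}$ is $C^{2,\gamma}$ and strictly positive on $\Delta^{d-1}_+$ because $p$ is $C^{2,\gamma}$ and positive (Assumption~\ref{input_assumption}(ii)) and $g, f_i$ are $C^{2,\gamma}$ and positive on their arguments (Assumption~\ref{function_assumption}(i)--(ii)); hence $\nabla\log R \in C^{1,\gamma}$, so $\tilde a$ and $\tilde b$ are locally H\"older on $E$, in particular bounded and H\"older on each $\bar E_n$, whose boundary is $C^{2,\beta}$ by construction.

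With both hypotheses in place I would invoke Theorem~1.13.1 of \cite{pinsky1995positive}: on each bounded smooth domain $E_n$ the martingale problem for $\tilde L^R$ killed at $\partial E_n$ is well posed, by classical Schauder and martingale-problem theory for uniformly elliptic operators with H\"older coefficients on bounded domains; these solutions are consistent as $n$ grows by uniqueness on the smaller domains; and a Kolmogorov-extension argument on the one-point compactification $\tilde\Delta^{d-1}_+$, with $\Theta$ declared absorbing for all $t \geq \tau = \lim_n \tau_n$, produces for each starting point $x$ a probability measure $\tilde\P_x$ satisfying (a) and (b) of Definition~\ref{gen_mart}; uniqueness of $\tilde\P_x$ then follows because its restriction to $\{t \leq \tau_n\}$ is pinned down by uniqueness on $E_n$ for every $n$ and the paths are frozen at $\Theta$ thereafter. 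I expect the only step that is not routine bookkeeping to be the ellipticity reduction of the second paragraph --- confirming that the one-dimensional kernel of $c$, being transverse to the tangent directions of the simplex, leaves a genuinely non-degenerate operator in the chart; I emphasise that non-explosion, i.e.\ $\tilde\P_x(\tau < \infty) = 0$, is \emph{not} asserted in this theorem and is instead the separate content of Proposition~\ref{sufficient}.
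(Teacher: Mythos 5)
Your proposal is correct and matches the paper's treatment: the paper simply quotes Theorem~1.13.1 of \cite{pinsky1995positive} without proof, and your write-up supplies exactly the verification its application requires — in particular the key observation that the one-dimensional kernel $\mathrm{span}(\boldsymbol{1})$ of $c(x)$ is transverse to the tangent space of the simplex, so the operator is genuinely locally elliptic with locally H\"older coefficients in the $(d-1)$-dimensional chart. You also correctly separate existence/uniqueness from non-explosion, which the paper handles via Proposition~\ref{sufficient}.
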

In general the unique solution guaranteed by this theorem may explode in finite time. Thus, to verify Assumption \ref{finite_assumption}(iii), one must ensure that the solution is nonexplosive. We will establish sufficient conditions on the inputs $g,p$ and $f_i$ for $i=1,\dots,d$ that guarantee this is the case. The main tool to establish this is the following test-function method:
\begin{prop}[Theorem 6.7.1(i) in \cite{pinsky1995positive}] \label{test_function} Let $\tilde L$ be an operator for which there exists a unique solution to the generalized martingale problem given in Definition \ref{gen_mart}. Assume there exists a constant $\lambda > 0$ and a function $u \in C^2(\Delta^{d-1}_+)$ such that $\tilde  Lu(x) \leq \lambda u(x)$ for every $x \in \Delta^{d-1}_+$ and $\lim_{x \to \partial \Delta^{d-1}_+} u(x) = \infty$. Then the solution to the generalized martingale problem corresponding to $\tilde L$ does not explode.
\end{prop}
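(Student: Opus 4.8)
The plan is to prove this by a standard Lyapunov (Khasminskii-type) supermartingale argument, using the hypothesized function $u$ as the Lyapunov function. Fix $x \in \Delta^{d-1}_+$ and work under the measure $\tilde \P_x$ supplied by the (assumed unique) solution to the generalized martingale problem for $\tilde L$; the goal is to show $\tilde \P_x(\tau < \infty) = 0$. First I would verify that, for each $n$, the stopped process $Z^{(n)}_t := e^{-\lambda(t \wedge \tau_n)} u(\tilde X_{t \wedge \tau_n})$ is a supermartingale. By Definition~\ref{gen_mart}(b) applied to $u \in C^2(\Delta^{d-1}_+)$, the process $N_t := u(\tilde X_{t \wedge \tau_n}) - \int_0^{t \wedge \tau_n} \tilde L u(\tilde X_s)\,ds$ is a martingale; crucially, on $[0, t\wedge\tau_n]$ the path remains in the compact set $\bar E_n$, so $u$ and $\tilde L u$ are bounded there and $N$ is a genuine (square-integrable) martingale. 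Combining this with the deterministic finite-variation factor $e^{-\lambda t}$ through the product rule gives
$$Z^{(n)}_t = u(x) + \int_0^{t\wedge\tau_n} e^{-\lambda s}\big(\tilde L u - \lambda u\big)(\tilde X_s)\,ds + \int_0^{t\wedge\tau_n} e^{-\lambda s}\,dN_s.$$
The stochastic integral is a true martingale (bounded integrand against $N$), while the hypothesis $\tilde L u \leq \lambda u$ makes the drift term nonincreasing. Hence $Z^{(n)}$ is a supermartingale and $\tilde \E_x[Z^{(n)}_t] \leq u(x)$ for every $t \geq 0$.

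Next I would extract a tail bound on the exit times. Set $m_n := \inf_{y \in \partial E_n} u(y)$, and observe $m_n \to \infty$: given any $M$, the boundary blow-up $\lim_{y \to \partial \Delta^{d-1}_+} u(y) = \infty$ furnishes a compact $K \subset \Delta^{d-1}_+$ with $u > M$ off $K$; since the $E_n$ exhaust $\Delta^{d-1}_+$ we have $K \subseteq E_n$ for all large $n$, forcing $\partial E_n \subseteq \Delta^{d-1}_+ \setminus K$ and thus $m_n \geq M$. On the event $\{\tau_n \leq t\}$ we have $\tilde X_{\tau_n} \in \partial E_n$ and $t \wedge \tau_n = \tau_n \leq t$, so $Z^{(n)}_t = e^{-\lambda \tau_n} u(\tilde X_{\tau_n}) \geq e^{-\lambda t} m_n$. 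Retaining only this event in the expectation yields
$$u(x) \geq \tilde \E_x\big[Z^{(n)}_t\big] \geq e^{-\lambda t} m_n\, \tilde \P_x(\tau_n \leq t),$$
and therefore $\tilde \P_x(\tau_n \leq t) \leq e^{\lambda t} u(x)/m_n$.

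Finally I would pass to the limit. Since $\tau_n \leq \tau$, we have $\{\tau \leq t\} \subseteq \{\tau_n \leq t\}$, so $\tilde \P_x(\tau \leq t) \leq e^{\lambda t} u(x)/m_n \to 0$ as $n \to \infty$, giving $\tilde \P_x(\tau \leq t) = 0$ for every $t > 0$; letting $t \to \infty$ then yields $\tilde \P_x(\tau < \infty) = 0$, which is exactly non-explosion. The one genuinely delicate step is the supermartingale verification: one must fuse the martingale coming from the \emph{localized} martingale problem with the time-dependent weight $e^{-\lambda t}$, which I would handle by integration by parts rather than by invoking a space-time martingale problem, and one must confirm that the resulting stochastic integral is a true (not merely local) martingale. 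This is precisely where localization to $\bar E_n$, on which $u$ and $\tilde L u$ are bounded, is indispensable.
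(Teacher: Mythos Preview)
Your argument is correct and is precisely the classical Khasminskii--Lyapunov supermartingale proof of this non-explosion criterion. Note, however, that the paper does not actually supply a proof of this proposition: it is stated purely as a citation of Theorem~6.7.1(i) in Pinsky's monograph, and is then used as a black box in the proof of Proposition~\ref{sufficient}. So there is no ``paper's own proof'' to compare against; you have filled in what the paper outsources to the reference.

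One small point worth tightening: in the step
\[
u(x) \;\geq\; \tilde \E_x\big[Z^{(n)}_t\big] \;\geq\; e^{-\lambda t} m_n\, \tilde \P_x(\tau_n \leq t),
\]
you discard the contribution from $\{\tau_n > t\}$, which tacitly uses $Z^{(n)}_t \geq 0$ there, i.e.\ $u \geq 0$. The hypothesis does not assert positivity of $u$, but this is harmless: since $u$ is continuous on $\Delta^{d-1}_+$ and blows up at the boundary, it is bounded below, and because $\tilde L = L + \ell^\top c \nabla$ has no zeroth-order term one may replace $u$ by $u + C$ for a large constant $C > 0$ without disturbing the inequality $\tilde L u \leq \lambda u$ (indeed $\tilde L(u+C) = \tilde L u \leq \lambda u \leq \lambda(u+C)$). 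With this reduction your estimate goes through as written.
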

With these preliminary results established we are ready to prove Proposition~\ref{sufficient}.
 
\begin{proof} [Proof of Proposition \ref{sufficient} ]
	We note that $\ell = \nabla \log R$ and a direct calculation shows that 
	$$ \frac{1}{2}\ell^\top c \ell = \frac{LR}{R} - L\log R$$ so that Assumption~\ref{finite_assumption}$(i)$ holds by the integrability assumptions on $\frac{LR}{R}$ and $L \log R$. 	Next we compute that 
	\begin{align*}\diver{pc\ell} &= \sum\limits_{i=1}^d \partial_i\(\sum\limits_{j=1}^d c_{ij} \ell_j p\) = \sum\limits_{i,j=1}^d \partial_i c_{ij}\ell_jp + c_{ij}\partial_i\ell_j p + c_{ij}\ell_j \partial_ip\\
	&= p \diver c^\top \ell + 2p L\log R + \nabla p c \ell.
	\end{align*}
Recalling that $\ell = \frac{1}{2}(c^{-1}\diver c + \nabla \log p)$ we obtain
	 $$\diver{pc\ell}= 2\ell^\top c \ell p + 2pL\log R.$$
	Thus we again see that Assumption~\ref{finite_assumption}(ii) holds by the integrability assumptions. 
	
	Next we note by Theorem \ref{nonexplosion} and Proposition \ref{test_function} it is enough to find a function $u \in C^2(\Delta^{d-1}_+)$ such that $\lim_{x \to \partial \Delta^{d-1}_+} u(x) = \infty$, $u > 0$ on $\Delta^{d-1}_+$  and $ L^{R}u \leq \lambda u$ for some $\lambda > 0$ to verify that Assumption~\ref{finite_assumption}(iii) holds. A computation yields that $L^Ru = \frac{L(uR)}{R} - \frac{LR}{R}u$ for every $u \in C^2(\Delta^{d-1}_+)$. Choosing $u = 1/R$ we see that
	$$L^R u \leq \lambda u \iff -\frac{LR}{R} \leq \lambda$$ and the right hand side holds for some $\lambda > 0$ by the assumption that $LR/R$ is bounded from below. Moreover, since $R^{-1} > 0$ on $\Delta^{d-1}_+$ and $R(x) \to 0$ as $x \to \partial\Delta^{d-1}_+$ we see that this test function possesses all the required properties.
\end{proof}

\section{Proof of Theorem~\ref{concave_invariance_thm}} \label{proof_appendix}
The purpose of this section is to prove Theorem~\ref{concave_invariance_thm}. We first establish a few preparatory lemmas.

\begin{lem} \label{uniform_lemma}
	Let $h_n: \Delta^{d-1}_+ \to \R^d$ be such that $\text{supp}(h_n) \subseteq K$ for some compact set $K \subseteq \Delta^{d-1}_+$. Further assume that $\sup_n\sup_{x \in \Delta^{d-1}_+} |h_n(x)| < \infty$ and that $h_n(x) \to 0$ for almost every $x \in \Delta^{d-1}_+$. Then
	\begin{enumerate}[label = ({\roman*})]
		\item for every $\P \in \Pi$ $$\lim\limits_{T_0 \to \infty} \lim\limits_{n \to \infty} \sup\limits_{T \in [T_0,\infty)} \E\left[\frac{1}{T}\int_{T_0}^T \left|h_n(X_t)\right|dt\right] = 0$$
		\item
		for every $\P \in \tilde \Pi$ $$\lim\limits_{T_0 \to \infty} \lim\limits_{n \to \infty} \sup\limits_{T \in [T_0,\infty)} \E\left[\left|\frac{1}{T}\int_{T_0}^T h_n(X_t)^\top dX_t\right|\right] = 0.$$
	\end{enumerate}
\end{lem}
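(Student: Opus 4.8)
The plan is to handle both parts through a single monotone device, since the hypothesis $h_n\to0$ holds only Lebesgue-a.e.\ and the one-dimensional laws of $X$ under $\P\in\Pi$ need not be absolutely continuous, so one cannot argue pointwise along the path. Put $M:=\sup_n\sup_x|h_n(x)|<\infty$, let $K$ be the common compact support, and define the envelopes $\bar G_N:=\sup_{n\ge N}|h_n|$. Each $\bar G_N$ is Borel with $0\le\bar G_N\le M\I_K$, and $\bar G_N\downarrow\limsup_n|h_n|=0$ a.e.; writing $c_N:=\int_{\Delta^{d-1}_+}\bar G_N\,p$, monotone convergence gives $c_N\downarrow0$. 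The point of $\bar G_N$ is that it converts a sequential tail bound into a single ``fixed-$N$'' statement to which the ergodic property of Definition~\ref{Pi_def}(ii) applies; this is the crux, and I expect the order of the iterated limit ($\lim_{T_0\to\infty}\lim_{n\to\infty}$) to be exactly what makes it go through.

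For (i), fix $N$. Applying Definition~\ref{Pi_def}(ii) to the bounded nonnegative function $\bar G_N$ gives $\frac1T\int_0^T\bar G_N(X_t)\,dt\to c_N$ $\P$-a.s., and bounded convergence (integrands $\le M$) upgrades this to $\E[\frac1T\int_0^T\bar G_N(X_t)\,dt]\to c_N$. Choose $T_0(N)$ so that this expectation is $\le 2c_N$ for all $T\ge T_0(N)$. Then for $n\ge N$ and $T\ge T_0\ge T_0(N)$ one has $0\le\frac1T\int_{T_0}^T|h_n(X_t)|\,dt\le\frac1T\int_0^T\bar G_N(X_t)\,dt$, whence $\sup_{T\ge T_0}\E[\frac1T\int_{T_0}^T|h_n(X_t)|\,dt]\le 2c_N$ for every $n\ge N$. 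Consequently $\limsup_{T_0\to\infty}\limsup_{n\to\infty}\sup_{T\ge T_0}\E[\cdots]\le 2c_N$ for each $N$, hence $\le\inf_N 2c_N=0$; since the expression is nonnegative, the iterated limit exists and equals $0$.

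For (ii), fix $\P\in\tilde\Pi$ with decomposition $dX_t=b_t^{\P}\,dt+\sigma(X_t)\,dW_t$ and split $\frac1T\int_{T_0}^Th_n(X_t)^\top dX_t=I^n_{T_0,T}+M^n_{T_0,T}$ into drift and stochastic-integral parts. For $M^n$: since $h_n$ is supported on $K$ and $c$ is continuous, $h_n^\top c\,h_n\le C_KM^2$ with $C_K:=\sup_{x\in K}\|c(x)\|$, so $\int_{T_0}^\cdot h_n^\top\sigma\,dW$ is a square-integrable martingale whose value at $T$ has second moment $\E[\int_{T_0}^Th_n^\top c\,h_n(X_t)\,dt]\le C_KM^2T$; Cauchy--Schwarz gives $\E[|M^n_{T_0,T}|]\le M\sqrt{C_K}\,T^{-1/2}$, a bound independent of $n$ that vanishes as $T_0\to\infty$. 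For $I^n$: with $q>1$ as in \eqref{drift_assumption} and $q'=q/(q-1)$, bound $|h_n(X_t)^\top b_t^{\P}|\le\bar G_N(X_t)\,|b_t^{\P}|\,\I_K(X_t)$ for $n\ge N$, apply Hölder on $(\Omega,\P)$ with exponents $q',q$ using $\bar G_N^{q'}\le M^{q'-1}\bar G_N$ and $\bar G_N\I_K=\bar G_N$, then Hölder in $t$ with the same exponents, to get
$$\E\big[|I^n_{T_0,T}|\big]\le\sqrt d\,M^{1-1/q'}\Big(\tfrac1T\textstyle\int_0^T\E[\bar G_N(X_t)]\,dt\Big)^{1/q'}\Big(\tfrac1T\textstyle\int_0^T\textstyle\sum_{i}\E\big[|(b_t^{\P})^i|^q\I_K(X_t)\big]\,dt\Big)^{1/q}.$$
By the estimate in (i) the first parenthesis is $\le 2c_N$ for $T$ large, and by \eqref{drift_assumption} the second is bounded by some $B<\infty$ for $T$ large, so $\limsup_{T_0\to\infty}\limsup_{n\to\infty}\sup_{T\ge T_0}\E[|I^n_{T_0,T}|]\le\sqrt d\,M^{1-1/q'}(2c_N)^{1/q'}B^{1/q}$ for every $N$, hence $0$. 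Adding the $I^n$ and $M^n$ bounds gives the claim.

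The main obstacle is exactly that $h_n\to0$ only a.e.: one must route everything through the time-averaged occupation measure via the ergodic property rather than through the (possibly singular) marginals of $X_t$, which is what forces the monotone-envelope trick with $\bar G_N$. The only other delicate point is matching the $L^q$ drift bound \eqref{drift_assumption} to two successive applications of Hölder's inequality — in $\omega$ and in $t$ — in the estimate for $I^n$; note that part (i) uses only Definition~\ref{Pi_def}(ii), not the tightness condition.
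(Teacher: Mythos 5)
Your proof is correct, and in part (i) it takes a genuinely different --- and arguably cleaner --- route than the paper's. The paper handles the fact that $h_n\to 0$ only Lebesgue-a.e.\ via Egorov's theorem (uniform convergence off a set $U_\epsilon$ of small Lebesgue measure), and is then forced to control $\sup_{T\in[T_0,\infty)}\frac1T\int_{T_0}^T\P(X_t\in U_\epsilon)\,dt$ uniformly in $T$ as $\epsilon\to0$, which it does with an auxiliary continuity-plus-Dini argument on the compactified time axis $[T_0,\infty]$. Your monotone envelope $\bar G_N=\sup_{n\ge N}|h_n|$ sidesteps both devices: one application of the ergodic property of Definition~\ref{Pi_def}(ii) plus bounded convergence gives $\E[\frac1T\int_0^T\bar G_N(X_t)\,dt]\le 2c_N$ for all $T\ge T_0(N)$, and since this single bound is uniform over $n\ge N$ and $T\ge T_0$, the iterated limit falls out immediately. (One cosmetic caveat: if $c_N=0$ the threshold ``$\le 2c_N$'' may be unattainable at finite $T$ because the marginals of $X_t$ can charge Lebesgue-null sets; replace it by $c_N+1/N$ or $c_N+\epsilon$.) In part (ii) you use the same martingale/drift decomposition as the paper, but where the paper applies part (i) to the function $h_n^\top c\,h_n$ so that the martingale term vanishes as $n\to\infty$, you instead make it vanish as $T_0\to\infty$, uniformly in $n$, via the crude It\^o-isometry bound $\E[|M^n_{T_0,T}|]\le M\sqrt{C_K}\,T^{-1/2}$; both orders are legitimate since the assertion is an iterated limit in both indices. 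The drift term is treated by H\"older against the $L^q$ bound \eqref{drift_assumption} essentially as in the paper, with your $\bar G_N$ and the estimate from part (i) playing the role that the paper's application of part (i) to $\sum_i|h_n^i|^{q'}$ plays there.
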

\begin{proof}
	Fix a measure $\P \in \Pi$. Recall that we view $\Delta^{d-1}_+$ as an open subset of $\R^{d-1}$. Since $|h_n| \to 0$ almost everywhere, by Egorov's Theorem for any $\epsilon > 0$ we can find a set $U_\epsilon$ with $\mathcal{L}^{d-1}(U_\epsilon) < \epsilon$ such that $|h_n|\to 0$ uniformly on $U_\epsilon^c$ where $\mathcal{L}^{d-1}$ is the Lebesgue measure on $\R^{d-1}$. Moreover we can assume without loss of generality that $U_\epsilon \subseteq U_\delta$ if $\epsilon < \delta$.
	Then we see that
	\begin{align*}\frac{1}{T}\int_{T_0}^T  \E[|h_n(X_t)|]dt & \leq \frac{1}{T}\int_{T_0}^T  \E[|h_n(X_t)|\I_{U_\epsilon^c}(X_t)]dt + \frac{1}{T}\int_{T_0}^T  \E[|h_n(X_t)|\I_{U_\epsilon}(X_t)]dt\\
		& \leq \sup\limits_{x \in U_{\epsilon}^c} |h_n(x)| + C\frac{1}{T}\int_{T_0}^T\P(X_t \in U_\epsilon)dt.
	\end{align*}
	where $C = \sup_n \sup_{x \in \Delta^{d-1}_+} |h_n(x)|< \infty$.
	Taking $\sup$ over $T$ and sending $n \to \infty$ we obtain
	\begin{equation} \label{uniform_lemma_estimate}
		\lim_{n \to \infty}\sup\limits_{T \in [T_0,\infty)} \E\left[\frac{1}{T}\int_{T_0}^T |h_n(X_t)|dt\right] \leq C\sup\limits_{T \in [T_0,\infty)} \frac{1}{T}\int_{T_0}^T \P(X_t \in U_\epsilon)dt
	\end{equation}
	for every $\epsilon > 0$.
	Now define the functions $g_\epsilon:[T_0,\infty] \to \R$ via $$g_\epsilon(T) = \begin{cases}
		\frac{1}{T}\int_{T_0}^T \P(X_t \in U_\epsilon)dt & T < \infty \\
		\int_{\Delta^{d-1}_+} \I_{U_\epsilon}(x)p(x)dx & T = \infty.
	\end{cases}
	$$
	We see that $g_\epsilon$ is continuous for every $\epsilon > 0$; indeed this is immediate for $T < \infty$. To establish continuity at $T = \infty$ we use the ergodic property to obtain 
	$$\lim\limits_{T\to \infty}\frac{1}{T}\int_{T_0}^T \I_{U_\epsilon}(X_t)dt = \int_{\Delta^{d-1}_+} \I_{U_\epsilon}(x)p(x)dx; \quad \P\text{-a.s.}$$ Thus by bounded convergence and Fubini it follows that 
	$$ \lim\limits_{T \to \infty}\frac{1}{T} \int_{T_0}^T\P(X_t \in U_\epsilon)dt = \int_{\Delta^{d-1}_+} \I_{U_\epsilon}(x)p(x)dx. $$
	Moreover we see that $g_\epsilon(T)$ is decreasing in $\epsilon$ for every $T$ and 
	\begin{align*}
		g_0(T) :=\lim\limits_{ \epsilon \to 0} g_\epsilon (T) &= \begin{cases}
			\frac{1}{T}\int_{T_0}^T \P(X_t \in U_0)dt & T < \infty \\
			\int_{\Delta^{d-1}_+} \I_{U_0}(x)p(x)dx & T = \infty \end{cases} \\
		& = \label{g0}
		\begin{cases}
			\frac{1}{T}\int_{T_0}^T \P(X_t \in U_0)dt & T < \infty \\
			0 & T = \infty \end{cases}
	\end{align*}
	where $U_0 = \cap_\epsilon U_\epsilon$ is a Lebesgue null set. Again by the ergodic property we see that $g_0$ is continuous so it follows by Dini's Theorem, since $[T_0,\infty]$ is a compact set, that $g_\epsilon \to g_0$ uniformly in $T$. Hence we obtain
	$$\lim\limits_{\epsilon \to 0} \sup\limits_{T \in [T_0,\infty]} g_\epsilon (T) = \sup\limits_{T \in [T_0,\infty]}g_0(T).$$
	This together with \eqref{uniform_lemma_estimate} yields 
	\begin{align*}
		\lim\limits_{T_0 \to \infty} &\lim\limits_{n \to \infty} \sup\limits_{T \in [T_0,\infty)} \E\left[\frac{1}{T}\int_{T_0}^T \left|h_n(X_t)\right|dt \right] 
		\leq C\lim\limits_{T_0 \to \infty} \sup \limits_{T \in [T_0,\infty]} g_0(T)
		= C\limsup\limits_{T_0 \to \infty} g_0(T_0) = 0
	\end{align*}
	proving (i).
	
	To prove (ii) we let $\P \in \tilde \Pi$ be given and write $b^\P$ for the drift process of $X$ under $\P$ as in Definition~\ref{tilde_pi_def}.
	Now we estimate
	\begin{align*} &\E\left[\left|\frac{1}{T}\int_{T_0}^T  h_n(X_t)^\top dX_t\right|  \right] \\
		&\leq \E\left[\left| \frac{1}{T}\int_{T_0}^Th_n(X_t)^\top \sigma(X_t)dW_t \right|  \right]  + \E\left[\frac{1}{T}\int_{T_0}^T\sum_{i=1}^d\left| h_n^i(X_t) (b_t^\P)^i \right|dt  \right]
		\\
		&\leq \frac{1}{T} \( \int_{T_0}^T  \E\left[ h_n(X_t)^\top c(X_t)h_n(X_t)\right]dt\)  ^{1/2} \\ 
		& \hspace{1cm} + \(\E\left[\frac{1}{T}\int_{T_0}^T \sum_{i=1}^d|h_n^i(X_t) |^{q'}dt\right]\)^{1/q'}\(\frac{1}{T}\int_{T_0}^T \sum_{i=1}^d \E[|(b_t^\P)^i|^q\I_{K}(X_t)]dt\)^{1/q}. 
	\end{align*}
	We used Hölder's inequality in the last inequality. Here $q \in (1,\infty)$ is as in \eqref{drift_assumption} and $q' \in (1,\infty)$ is such that $1/q + 1/q' = 1$. Since $c$ is bounded on $K$ we can apply part (i) to both terms together with \eqref{drift_assumption} to the second term above proving part (ii).
\end{proof}
\begin{remark}
	The drift assumption \eqref{drift_assumption} is only used to prove (ii). This uniform ergodic convergence result is needed to establish Theorem~\ref{concave_invariance_thm} so any other sufficient condition on the drift $b_t^\P$ under which Lemma~\ref{uniform_lemma}~(ii) holds will establish Theorem~\ref{concave_invariance_thm} for the corresponding measure $\P$.
\end{remark}
\begin{lem} \label{cutoff_lemma} 
	Let Assumptions~\ref{input_assumption} and \ref{finite_assumption} hold. Then there exist functions $\{\eta_m\}_{m=1}^\infty \subset C_c^\infty(\Delta^{d-1}_+)$ which satisfy the following conditions:
	\begin{enumerate}[label = ({\roman*}),noitemsep]
		\item $0 \leq \eta_m  \leq 1$ for every $m \in \N$,
		\item $\eta_m \to 1$ pointwise as $m \to \infty$,
		\item $\nabla \eta_m \to 0$ in $\mathcal{H}^{c,p}$ as $m \to \infty$.
	\end{enumerate}
\end{lem}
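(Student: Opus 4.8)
The plan is to construct the $\eta_m$ from solutions of Dirichlet problems on the smooth exhaustion $\{E_n\}$ of $\Delta^{d-1}_+$ introduced in Appendix~\ref{mart_prob_app}, using in an essential way the non-explosion of the diffusion associated to the operator $L^R$ of Assumption~\ref{finite_assumption}(iii). The first observation is that $L^R$ is symmetric with respect to the probability measure $p\,dx$: a direct computation from $\ell = \tfrac12(\nabla\log p + c^{-1}\diver c)$, together with $c$ being symmetric and $\boldsymbol 1\in\mathrm{Ker}\,c(x)$, gives $L^R f = \tfrac{1}{2p}\diver(p\,c\,\nabla f)$, so that $\int_{\Delta^{d-1}_+}(L^R f)\,g\,p\,dx = -\tfrac12\int_{\Delta^{d-1}_+}\nabla f^\top c\,\nabla g\,p\,dx$ whenever $f,g$ have compact support in $\Delta^{d-1}_+$. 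In particular, the associated Dirichlet energy is exactly $\tfrac12\|\nabla\,\cdot\,\|_{\mathcal H^{c,p}}^2$, which is the quantity we must drive to $0$.

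For fixed $\alpha>0$ and $n\in\N$ I would take $u_n^\alpha$ to be the unique classical solution of $\alpha u - L^R u = \alpha$ on $E_n$ with $u = 0$ on $\partial E_n$; this solution exists and lies in $C^2(\bar E_n)$ by Schauder theory, since $c\in C^{2,\gamma}$ is uniformly elliptic on $\bar E_n$, $p$ is bounded away from $0$ there, the drift of $L^R$ is $C^{1,\gamma}$ on $\bar E_n$, and $\partial E_n$ is of class $C^{2,\beta}$. The maximum principle gives $0\le u_n^\alpha\le1$, and applying the martingale problem for $L^R$ to $e^{-\alpha t}u_n^\alpha(X_t)$ together with optional stopping at the exit time $\tau_n$ of $E_n$ yields the representation $u_n^\alpha(x) = \mathbb E^{\tilde\P_x}\!\big[1 - e^{-\alpha\tau_n}\big]$. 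Hence $u_n^\alpha$ is nondecreasing in $n$, and since $\tau_n\uparrow\tau = \infty$ $\tilde\P_x$-a.s.\ by non-explosion, $u_n^\alpha\uparrow 1$ pointwise on $\Delta^{d-1}_+$ as $n\to\infty$. Extending $u_n^\alpha$ by $0$ outside $E_n$ and testing the equation against $u_n^\alpha\,p$, integration by parts over $E_n$ (the boundary term vanishing since $u_n^\alpha|_{\partial E_n}=0$) gives the uniform bound $\tfrac12\|\nabla u_n^\alpha\|_{\mathcal H^{c,p}}^2 = \alpha\int_{E_n}u_n^\alpha(1-u_n^\alpha)\,p\le\alpha$.

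The rest is routine. By Dini's theorem the monotone convergence $u_n^\alpha\uparrow 1$ is uniform on each compact subset of $\Delta^{d-1}_+$; so, choosing $\alpha_m\downarrow0$, then $n(m)>m$ with $\sup_{\bar E_m}|1-u_{n(m)}^{\alpha_m}|<1/m$, and finally mollifying $u_{n(m)}^{\alpha_m}$ at a sufficiently small scale — which keeps the support inside a fixed compact subset of $\Delta^{d-1}_+$ on which $c$ and $p$ are bounded, preserves $0\le\,\cdot\,\le1$, and perturbs the sup-norm and the $\mathcal H^{c,p}$-norm of the gradient by at most $1/m$ — produces $\eta_m\in C_c^\infty(\Delta^{d-1}_+)$ satisfying (i) and (ii), and with $\|\nabla\eta_m\|_{\mathcal H^{c,p}}^2\le 2\alpha_m+1/m\to0$, hence (iii). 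The one genuinely delicate point is the step $u_n^\alpha\uparrow1$: this is precisely where Assumption~\ref{finite_assumption}(iii) enters, and some care is needed because $\tilde\P$ is a priori a measure on the enlarged path space of Appendix~\ref{mart_prob_app}, with non-explosion being exactly what permits the clean localization and passage to the limit. Everything else — solvability and boundary regularity of the Dirichlet problems, the maximum principle, the integration by parts, and the mollification estimates — is standard.
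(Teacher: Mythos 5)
Your proof is correct, but it takes a genuinely different route from the paper's. The paper's argument is entirely explicit and elementary: it sets $\eta_m(x)=\prod_{i=1}^d v_m(x^i)$ for a one-dimensional mollified ramp $v_m$ supported in $\{x^i\geq 1/(2m)\}$ satisfying $|v_m'(x)|\leq 5/(2x)$, so that $\nabla\eta_m^\top c\,\nabla\eta_m\,p\leq \frac{25}{4}\sum_{i,j}|c_{ij}|p/(x^ix^j)$, and then concludes by dominated convergence using Lemma~\ref{dom_conv}; the key input there is $\int |c_{ij}|p/(x^ix^j)<\infty$, which rests on the finiteness of the growth rates of the single-stock portfolios under $\tilde\P$ and hence on both Assumption~\ref{finite_assumption}(i) and (iii). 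You instead build the cutoffs as resolvent solutions $\alpha u-L^Ru=\alpha$ on the exhaustion $\{E_n\}$, use the symmetry $L^Rf=\frac{1}{2p}\diver(pc\nabla f)$ (which I checked is consistent with $\ell=\frac12(\nabla\log p+c^{-1}\diver c)$) to get the energy bound $\frac12\|\nabla u_n^\alpha\|^2_{\mathcal H^{c,p}}=\alpha\int u_n^\alpha(1-u_n^\alpha)p\leq\alpha$, and use non-explosion via $u_n^\alpha=\E[1-e^{-\alpha\tau_n}]\uparrow1$. This is the classical Dirichlet-form argument that conservativeness of the $L^R$-diffusion is equivalent to the existence of cutoffs with vanishing energy; it costs you more machinery (Schauder solvability on the $C^{2,\beta}$ domains $E_n$, the maximum principle, and a small extension argument so that $u_n^\alpha$ becomes an admissible test function for the localized martingale problem of Definition~\ref{gen_mart}), but it buys conceptual clarity and, notably, dispenses with Lemma~\ref{dom_conv}: your construction uses only Assumption~\ref{input_assumption} and the non-explosion in Assumption~\ref{finite_assumption}(iii), not the integrability condition \ref{finite_assumption}(i). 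The final diagonalization and mollification steps are routine and correctly handled.
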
 
\begin{proof}
	Fix $m \in \N$ and define the function $u_m:\R \to \R$ via 
	\[u_m(x) = (mx-1)^+\land 1.\]
	Then $u_m$ is $m$-Lipschitz and satisfies $u_m(x) = 1$ for $x \geq 2/m$ and $u_m(x) = 0$ for $x \leq 1/m$. 
	Let $\psi_m \in C_c^\infty(\R)$ be given such that $\psi_m \geq 0$, $\int_{\R} \psi_m = 1$ and $\text{supp}(\psi_m) \subset (0,\frac{1}{2m})$. Define $v_m = u_m *\psi_m$ so that $v_m \in C_c^\infty(\R)$. By construction $v_m(x) = 1$ for $x \geq \frac{5}{2m}$ and $v_m(x) = 0$ for $x \leq \frac{1}{2m}$. Hence it follows that $v_m'(x) = 0$ for $x \not \in (\frac{1}{2m},\frac{5}{2m})$. Moreover since $u_m$ is $m$-Lipschitz it follows by properties of convolution that $v_m$ is too. We obtain for $\frac{1}{2m} \leq x \leq \frac{5}{2m}$ that 
	\begin{equation} \label{v_m_bound}
		|v_m'(x)| \leq m \leq \frac{5}{2x},
	\end{equation} so that $|v_m'(x)| \leq \frac{5}{2x}$ for every $x \in (0,1)$. Now we define the function $\eta_m: \Delta^{d-1}_+ \to \R$ via $\eta_m(x) = \prod_{i=1}^d v_m(x^i)$. By construction $\eta_m \in C_c^\infty(\Delta^{d-1}_+)$ and satisfies the first two items in the statement of the lemma. Thus it just remains to prove that $\nabla\eta_m \to 0$ in $\mathcal{H}^{c,p}$ as $m \to \infty$. Note that $\partial_i \eta_m(x) = v_m'(x^i)\prod_{j \ne i} v_m(x^j)$ so that $\partial_i\eta_m(x) \to 0$ as $m \to \infty$ for ever $x \in \Delta^{d-1}_+$ and $i=1,\dots,d$. Next, using \eqref{v_m_bound}, we estimate that
	\[\nabla \eta_m(x)^\top c(x) \nabla\eta_m(x)p(x) \leq \sum_{i,j=1}^d |\partial_i \eta_m(x)||\partial_j \eta_m(x)||c_{ij}(x)|p(x) \leq \frac{25}{4}\sum_{i,j=1}^d \frac{|c_{ij}(x)|}{x^ix^j}p(x).\]
	By Lemma~\ref{dom_conv} the expression on the right hand side is integrable over $\Delta^{d-1}_+$. Hence by dominated convergence it follows that $\nabla\eta_m \to 0$ in $\mathcal{H}^{c,p}$ completing the proof.
\end{proof}
Now we are ready to prove Theorem~\ref{concave_invariance_thm}.
\begin{proof}[Proof of Theorem~\ref{concave_invariance_thm}] {}
	Fix a positive concave function $G$ on the simplex and a measure $\P \in \tilde \Pi$. Recall that we view $\Delta^{d-1}_+$ as an open subset of $\R^{d-1}$. Denote $\log G$ by $\phi$ and extend $\phi$ to $\R^{d-1}$ by setting $\phi(x) = 0$ for $x \not \in \Delta^{d-1}_+$. Let $\psi \in C_c^\infty(\R^{d-1})$ be such that $\text{supp}(\psi) \subseteq \overline{ B(0,1)}$ and $\int_{\R^{d-1}} \psi = 1$. Set $\psi_n = n^{d-1} \psi(nx)$ and $\phi_n = \phi*\psi_n$. By properties of convolution $\phi_n \to \phi$ pointwise on $\Delta^{d-1}_+$ and $\phi_n$ converges to $\phi$ uniformly on every compact set $K \subset \Delta^{d-1}_+$. Also $\nabla \phi_n \to \nabla \phi$ almost everywhere. By It\^o's formula we have that 
	\begin{align*}
		\int_0^T \nabla \phi_n(X_t)^\top dX_t &= \phi_n(X_T) -\int_0^TL\phi_n(X_t)dt\\
		\intertext{
			and since $\phi$ is exponentially concave we have that} 
		\int_0^T \nabla \phi(X_t)^\top dX_t & = \phi(X_T) - A_T
	\end{align*} for some process $A_t$ of finite variation.
	To keep the notation clearer we will set $G_n = \exp(\phi_n)$ and write $V$ and $V^n$ for the wealth process generated by $G$ and $G_n$ respectively. Note that $G_n$ may not be a concave function. We have 
	\begin{align*}d\log(V^n_t) = \nabla \phi_n(X_t)^\top dX_t - \frac{1}{2}\nabla \phi_n^\top c \nabla \phi_n(X_t)dt &= d\phi_n(X_t) - \(\frac{1}{2}\nabla \phi_n^\top c \nabla \phi_n +L\phi_n\)(X_t)dt,\\
		d\log(V_t) = \nabla \phi(X_t)^\top dX_t - \frac{1}{2}\nabla \phi^\top c \nabla \phi(X_t)dt &= d\phi(X_t) - \frac{1}{2}\nabla \phi^\top c \nabla \phi(X_t) dt - dA_t.
	\end{align*}
	Next note that both $\phi_n(X_T)/T$ and $\phi(X_T)/T$ converge to $0$ in probability as $T \to \infty$ for every $n$. Indeed, this follows from the tightness of the laws $\{X_T\}_{T\geq 0}$ as in the proof of Lemma~\ref{C2_lem}. Hence to determine $g(V;\P)$ and $g(V^n;\P)$ it suffices to study the terms
	\begin{align*}\Gamma_T & := - \frac{1}{2}\int_0^T\nabla \phi^\top c \nabla \phi (X_t)dt - A_T\\
		\Gamma^n_T &:= -\int_0^T (\frac{1}{2}\nabla \phi_n^\top c \nabla \phi_n +L\phi_n)(X_t)dt  = \int_0^T \frac{-LG_n}{G_n}(X_t)dt.
	\end{align*}
	$\Gamma^n$ and $\Gamma$ are the drift processes (in the sense of Definition~\ref{def_func_gen}) for the portfolios $\pi_{G^n}$ and $\pi_{G}$ respectively. Since $G$ is a concave function it follows that $\Gamma$ is an increasing process (see e.g.\ Theorem~3.7 in \cite{karatzas2017trading}).
	
	Next for $m \in \N$ let $\eta_m$ be as in Lemma~\ref{cutoff_lemma}. For each $m$ and $n$ define the processes
	\begin{align*}
		\Gamma^{m,n}_T& := \int_0^T \eta_m(X_t) d\Gamma^n_t = -\int_0^T (\frac{1}{2}\nabla \phi_n^\top c \nabla \phi_n +L\phi_n)\eta_m(X_t)dt\\
		\Gamma^m_T & := \int_0^T \eta_m(X_t)d\Gamma_t = -\int_0^T (\frac{1}{2}\nabla \phi^\top c \nabla \phi\eta_m(X_t) dt + \eta_m(X_t)dA_t).
	\end{align*}
	
	We will show that
	\begin{equation} \label{uniform_convergence}
		\lim\limits_{T_0 \to \infty} \lim\limits_{n \to \infty}\sup\limits_{T \in [T_0,\infty)} \E\left[ \left|\frac{\Gamma_{T}^{m,n}}{T} - \frac{\Gamma_T^m}{T}\right|\right] = 0
	\end{equation} 
	for every $m \in \N$.
	Note that since $\nabla \phi^\top c \nabla \phi\eta_m$ and $\nabla \phi_n^\top c \nabla \phi_n \eta_m$ are uniformly bounded in $x$ and $n$ for each fixed $m$ we have by Lemma \ref{uniform_lemma}(i) that
	\begin{align}
		\label{uniform_first_term}
		\lim\limits_{T_0 \to \infty}\lim\limits_{n \to \infty} \sup\limits_{T \in [T_0,\infty]} \E\left[\left| \frac{1}{T}\int_{T_0}^T (\nabla \phi_n^\top c \nabla  \phi_n\eta_m(X_t) - \nabla \phi^\top c \nabla \phi\eta_m(X_t))dt\right|\right]= 0.
	\end{align}
	Next by It\^o's formula we have that 
	\begin{flalign*}
		& \left| \frac{1}{T}\int_0^T \eta_m(X_t)\(L\phi_n(X_t)dt - dA_t\)\right|&\\
		& = \left|\frac{1}{T}\int_0^T \eta_m(X_t)d(\phi_n-\phi)(X_t) - \frac{1}{T}\int_0^T \eta_m(X_t)\nabla (\phi_n-\phi)(X_t)^\top dX_t\right|&\\
		& \leq \left|\frac{1}{T}\int_0^T \eta_m(X_t)d(\phi_n-\phi)(X_t)\right| + \left|\frac{1}{T}\int_0^T \eta_m(X_t)\nabla (\phi_n-\phi)(X_t)^\top dX_t\right|.&
	\end{flalign*}
	From Lemma \ref{uniform_lemma}(ii) we see that
	\begin{equation} \label{uniform_second_term}\lim\limits_{T_0 \to \infty}\lim\limits_{n \to \infty}\sup\limits_{T \in [T_0,\infty)}  \E\left[\left|\frac{1}{T}\int_0^T \eta_m(X_t)\nabla (\phi_n-\phi)(X_t)^\top dX_t\right|\right] = 0.
	\end{equation}
	For the first term we use product rule to compute that 
	\begin{align*}
		&\frac{1}{T}\int_0^T \eta_m(X_t)d(\phi_n-\phi)(X_t) \\
		&  = \frac{1}{T}\eta_m(\phi_n -\phi)(X_T)- \frac{1}{T}\int_0^T (\phi_n - \phi)(X_t)d\eta_m(X_t) - \frac{1}{T}\bigg\langle (\phi_n-\phi)(X),\eta_m(X)\bigg\rangle_T\\
		& = \frac{1}{T}\eta_m(\phi_n -\phi)(X_T)- \frac{1}{T}\int_0^T (\phi_n - \phi)\nabla \eta_m(X_t)^\top dX_t - \frac{1}{T}\int_0^T (\phi_n - \phi)L \eta_m(X_t)dt \\
		& \hspace{8.1 cm}- \frac{1}{T}\int_0^T \nabla(\phi_n-\phi)^\top c\nabla \eta_m(X_t)dt.
	\end{align*} 
	Invoking Lemma \ref{uniform_lemma} for the stochastic and Lebesgue integral terms along with the fact that $\phi_n \to \phi$ uniformly on $\text{supp}(\eta_m)$ yields 
	\begin{equation} \label{uniform_third_term}
		\lim\limits_{T_0 \to \infty} \lim\limits_{n \to \infty} \sup\limits_{T \in [T_0,\infty)} \E\left[\left|\frac{1}{T}\int_0^T \eta_m(X_t)d(\phi_n-\phi)(X_t)\right|\right] = 0.
	\end{equation} 
	The estimates \eqref{uniform_first_term}, \eqref{uniform_second_term} and \eqref{uniform_third_term} together prove \eqref{uniform_convergence}.
	
	Now define
	\begin{align*}
		\gamma_{m,n} := &  \int_{\Delta^{d-1}_+} \frac{-LG_n}{G_n}\eta_mp\\
		= &  \frac{1}{2} \int_{\Delta^{d-1}_+}(\ell^\top c \ell - (\ell - \nabla \phi_n)^\top c (\ell -\nabla \phi_n))\eta_mp + \int_{\Delta^{d-1}_+} \nabla \eta_m^\top c \nabla \phi_np,
	\end{align*}
	where the second identity follows from integration by parts.
	We additionally define the quantities
	\begin{align*}\gamma_m & := \lim_{n \to \infty} \gamma_{m,n} = \frac{1}{2}\int_{\Delta^{d-1}_+}(\ell^\top c \ell - (\ell - \nabla \phi)^\top c (\ell -\nabla \phi))\eta_mp + \int_{\Delta^{d-1}_+} \nabla \eta_m^\top c \nabla \phi p, \\
		\gamma & := \lim_{m \to \infty} \gamma_m =\frac{1}{2} \int_{\Delta^{d-1}_+}(\ell^\top c \ell - (\ell - \nabla \phi)^\top c (\ell -\nabla \phi))p.
	\end{align*}
	The identity for $\gamma_m$ follows from the bounded convergence theorem while the identity for $\gamma$ follows from the fact that $\eta_m \to 1$ pointwise and $\nabla \eta_m \to 0$ in $\mathcal{H}^{c,p}$ as $m \to \infty$ by Lemma~\ref{cutoff_lemma}. Now we estimate that 
	\begin{align*}
		\E\left[\left|\gamma_m - \frac{\Gamma^{m}_T}{T}\right|\right] & \leq |\gamma_m - \gamma_{m,n}| + \E\left[\left|\gamma_{m,n} - \frac{\Gamma^{m,n}_T}{T}\right|\right] + \E\left[\left|\frac{\Gamma^{m,n}_T}{T} - \frac{\Gamma^m_T}{T}\right|\right]\\
		& \leq  |\gamma_m - \gamma_{m,n}| + \E\left[\left|\gamma_{m,n} - \frac{\Gamma^{m,n}_T}{T}\right|\right] + \sup\limits_{T' \in [T_0,\infty)}\E\left[\left|\frac{\Gamma^{m,n}_{T'}}{T'} - \frac{\Gamma^m_{T'}}{T'}\right| \right].
	\end{align*} 
	By the ergodic property and the bounded convergence theorem $\lim_{T \to \infty} \E[|\gamma_{m,n}-\Gamma_{T}^{m,n}/T|] = 0$. Thus first sending $T \to \infty$, then $n \to \infty$ and finally $T_0 \to \infty$ above yields $\lim_{T \to \infty} \E[| \gamma_{m} -\Gamma_{T}^m/T|] = 0$ by \eqref{uniform_convergence}. In particular we obtain that $\Gamma_T^m/T \to \gamma_m$ in probability as $T \to \infty$. 
	
	To prove the growth rate property fix $\theta < \gamma$ and choose $m$ large enough so that $\theta < \gamma_m$. By definition of $\Gamma^m$ and the fact that $\Gamma$ is an increasing process we have that $\Gamma_T \geq \Gamma^m_T$ for every $T$. Hence we obtain
	\[\lim_{T\to \infty}\P\(\frac{\Gamma_T}{T} \geq \theta\) \geq \lim_{T\to \infty} \P\(\frac{\Gamma^m_T}{T} \geq \theta\) = 1.\]
	It follows that $g(V^{\pi_G};\P) \geq \gamma$. But by Lemma~\ref{worst_case_growth} we see that $\gamma = g(V^{\pi_G};\tilde \P)$ which completes the proof.
\end{proof}
\begin{remark} \label{rank_appendix_remark}
	In the proof of Theorem~\ref{rank_optimal} we use the fact that Theorem~\ref{concave_invariance_thm} continues to hold for measures $\P \in \tilde\Pi_{\geq}$ as long as the concave generating function $G$ is permutation invariant. Indeed by inspecting the proof of Lemma~\ref{uniform_lemma} we see that the lemma continues to hold for $\P \in \tilde \Pi_{\geq}$ as long as the functions $h_n^\top c h_n$, $|h_n|$ and $\sum_{i=1}^d (h_n^i)^{q'}$ are permutation invariant since for measures $\P \in \tilde \Pi_{\geq}$ we have that $X^{()}$ satisfies the ergodic property. Since the functions $\eta_m$ constructed in Lemma~\ref{cutoff_lemma} are permutation invariant and the functions $G_n$ appearing in the proof of Theorem~\ref{concave_invariance_thm} above can be chosen to be permutation invariant when $G$ is, it follows that the instances where Lemma~\ref{uniform_lemma} was used in the proof of Theorem~\ref{concave_invariance_thm} will continue to hold. Hence the result for the rank-based case follows in exactly the same way as the proof of Theorem~\ref{concave_invariance_thm}.
\end{remark}

\bibliographystyle{plain}
\bibliography{references}

\end{document}